\newtheorem{theorem}{Theorem}
\newtheorem{lemma}{Lemma}
\newtheorem*{lemma*}{Lemma}
\newcommand{\bd}{\partial} %Boundary Symbol
\newcommand{\ld}{\lambda}
\newcommand{\Ld}{\Lambda}
\newcommand{\bs}{\backslash}
\newcommand{\lb}{\langle}
\newcommand{\rb}{\rangle}
\newcommand\numberthis{\addtocounter{equation}{1}\tag{\theequation}}
\begin{document}
 
\title[Spectral Gaps of the Two-Species PVBS Models]{Spectral gaps for the Two-Species Product Vacua and Boundary States models on the $d$-dimensional lattice}
\author{Michael Bishop}
\date{\today}
\email{mibishop@csufresno.edu}
\address{Department of Mathematics, California State University, Fresno, Fresno, CA, USA}

\maketitle
\begin{abstract}
We study the two-species Product Vacua and Boundary States (PVBS) models on the integer lattice $\mathbb{Z}^d$ and prove the existence and non-existence of a spectral gap for all choices of parameters.  
The PVBS models are spin-1 quantum spin systems which are translation-invariant, frustration-free, and composed of nearest-neighbor non-commuting interactions with both an exclusion property and an interchange interaction between particle species.  
These models serve as possible representatives of families of automorphically equivalent gapped quantum spin-1 systems on $\mathbb{Z}^d$.
The main result is that the two-species PVBS Hamiltonians have a positive spectral gap when gapped on both of the single-species subspaces and are gapless if gapless on either single-species subspace.  
The addition of a new particle species does not create any new gapless phases.
\end{abstract}

\section{Introduction}

The existence of a spectral gap is a crucial property in describing the behavior of low temperature quantum spin systems.   
The spectral gap is the distance between the bottom and the rest of the spectrum of the Hamiltonian operator in the description the system: if the distance is positive, we say the system is gapped; if it is zero, we say the system is gapless. 
Transitions between gapped and gapless cases due to changes in the parameters of the model are called `quantum phase transitions' because they indicate a stark change in the behavior of the system.
If there is a positive spectral gap above the ground state space, techniques developed in \cite{PhysRevB.69.104431,  Nachtergaele2006, Aharonov:2009:DLQ:1536414.1536472, aharonov2011detectability, PhysRevB.93.205142} establish that these systems exhibit exponential decay of correlations \cite{Hastings2006, PhysRevLett.116.097202}.  
These systems also exhibit area laws for their entropy in one dimension \cite{Hastings07, arad2013area} among other properties.  
The existence of a uniform spectral gap along the required path of Hamiltonians is a sufficient condition for quantum computation using adiabatic evolution \cite{farhi2000quantum, schwarz2013preparing, jansen2007bounds}.
If there is no spectral gap above the ground state, then many of these properties may fail \cite{gottesman2010entanglement, irani2010ground,  PhysRevLett.109.207202, movassagh2014power}.

Methods for determining and bounding spectral gaps from below for a quantum spin system are largely inspired by the AKLT model \cite{aklt88}.  
The method introduced in \cite{knabe1988energy} and improved in \cite{gosset2016local} bounds the spectral gap for a spin chain with periodic boundary conditions when the spectral gaps on subsystems satisfy certain bounds.  
Recently, this method has been extended to higher dimensional systems with boundary \cite{lemm2018spectral}.  
In contrast, the key condition to apply the `martingale method' \cite{nachtergaele96} depends on the finite volume ground state projections rather than a condition on local spectral gaps. 
Recently, the method of \cite{nachtergaele96} has been better adapted for higher dimensional systems \cite{kastoryano2017divide}.  
Both methods were used to classify the gapped and gapless phases for a family of spin-1/2 systems on spin chains of $\mathbb{Z}$ \cite{2015JMP....56f1902B}.  
 
For systems on $\mathbb{Z}^d$, there cannot exist a general criterion for proving the existence of a spectral gap from the description of the system on finite volumes \cite{cubitt2015undecidability}.  
That is, the existence of a spectral gap is `undecidable' from the local descriptions and may unexpectedly open or close for a given family of local interactions along a given sequence of finite volumes.  
Gapped systems require careful construction: Hamiltonians with randomly distributed local interactions are almost surely gapless under mild assumptions \cite{movassagh2017generic}.

Under these constraints, a natural goal is to find families of models for which we can determine the existence of a spectral gap and find methods to connect them by automorphic equivalence \cite{BMNS2012} to larger families of quantum spin systems.  
Simplified models such as the toric code model\cite{kitaev2003fault}, deformations of symmetry-protected systems \cite{PhysRevB.83.035107,PhysRevB.87.155114}, and systems described using matrix product states (MPS) and product entangled pair states (PEPS) \cite{PhysRevB.84.165139} have proven fruitful in exploring gapped quantum systems and shedding light on unexpected properties including topological order. 

The Product Vacua and Boundary States (PVBS) models are a family of quantum spin systems which are translation-invariant, frustration-free, and composed of nearest-neighbor non-commuting interactions with an exactly formulated ground state space.   
%In \cite{bachmannPRL}, the authors introduced the PVBS models on $\mathbb{Z}$ and the half-infinite subchains.
In \cite{bachmann2014product}, the authors described the gapped and gapless phases of the PVBS model on the infinite chain and half-chains using a Matrix Product State (MPS) representation.
They demonstrated that a subfamily of these models were automorphically equivalent to the AKLT model and conjectured the PVBS models could serve as representatives for classifying equivalence classes of gapped quantum spin systems.
The PVBS models on subsets of $\mathbb{Z}^d$ were introduced in \cite{BHNY2015} as a step towards potentially proving spectral gaps for higher dimensional AKLT and related models.
The spectral properties in the single-species (spin-1/2) version were explored using the martingale method. 
These models had the peculiar property due to higher dimensional support on $\mathbb{Z}^d$: PVBS models which were gapped on $\mathbb{Z}^d$ may be gapless when restricted to a specific half-space.
In \cite{bishop2016spectral}, this discrepancy was resolved and the gapped and gapless ground states of the single-species PVBS model were determined on $\mathbb{Z}^d$ and its half-spaces.   

The goal of this paper is to extend the results of \cite{bishop2016spectral} to understand the spectral properties of the two-species (spin-1) PVBS model on $\mathbb{Z}^d$ and explore the possibility of new gapless phases due to interspecies interactions.  
These models can be viewed as families of interacting particle systems of two distinct species (with exclusion) occupying sites on the lattice.   
The main result of this paper is that a two-species PVBS Hamiltonian on $\mathbb{Z}^d$ is gapped if and only if it is gapped on both single-species subspaces. 
The proof applies the martingale method \cite{nachtergaele96} to appropriate sequences of volumes; the main technical work is verifying that the ground state spaces satisfy the conditions of the method.  
This result demonstrates that the introduction of a new particle species does not create new gapless states with both particle species present. 
%The proof requires that the exclusion property between particle species is effectively small in the ground state space.  
While this suggests that the two-species model is embedded in an independent product of two single-species models, this embedding is not readily apparent. 
%Indeed, this method of proof has so far proved fruitless to the author.  
Conversely, if the Hamiltonian on $\mathbb{Z}^d$ restricted to either single species subspace is gapless, then the two-species model is gapless as well because the system restricted to either species is equivalent to a single species PVBS model.  
This suggests that a similar result should hold for PVBS models of any finite number of species and thus any spin number.
 
The paper is structured as follows.  
The model and ground state space are introduced in section 2.  
The main result is stated and proven in section 3.
The finite volume ground state space proofs are in section 4.
The technical calculations needed to apply the martingale method are in section 5.   
 
\section{Product Vacua with Boundary State Models and Ground State Spaces} 

\subsection{Finite Volume Model}
Let $\Lambda$ be a bounded subset of $\mathbb{Z}^d$. 
At each site $x\in\Ld$, we define the single-site Hilbert space $\mathbb{C}_x^3$ with orthonormal basis $\{\ket{0}_x,\ket{a}_x,\ket{b}_x\}$.  
These vectors correspond to the site $x$ being occupied by no particle, a particle of species $a$, and a particle of species $b$, respectively.  
The species exhibit an exclusion property where only one particle may occupy a given site. 
We define the finite volume Hilbert space to be $\mathcal{H}^\Ld :=\otimes_{x\in\Ld} \mathbb{C}^3$.  
An orthonormal basis can be indexed by disjoint subsets $A$ and $B$ of $\Ld$ where corresponding orthonormal vectors have the form
\begin{equation}
\bigotimes\limits_{x\in A} \ket{a}_x \otimes \bigotimes\limits_{y \in B} \ket{b}_y \otimes \bigotimes\limits_{z \in \Ld \bs (A\cup B)} \ket{0}_z \label{ONB}
\end{equation}
where $A$ is the set the sites occupied by a particle of species $a$ and $B$ is the set the sites occupied by a particle of species $b$.  
For all other sites, the site is unoccupied; to abbreviate notation the vacuum state $\bigotimes \ket{0}$ is dropped.  
The subspace with $|B|=0$ is the $a$-particle subspace and is unitarily-equivalent to a single-species PVBS Hilbert space on $\Ld$; the $b$-particle subspace is similarly defined by $|A|=0$.     

The two-species PVBS Hamiltonians are indexed by the parameter vectors 
\begin{equation}
\ld_a=(\ld_{a,1}, \ld_{a,2}, \dots, \ld_{a,d}),\ \ld_b=(\ld_{b,1}, \ld_{b,2}, \dots \ld_{b,d})\label{paramvec}
\end{equation}
where the subscripts denote the associated particle species and the coordinate directions on $\mathbb{Z}^d$, respectively.  
Without loss of generality, we may assume the $\ld_{s,j}$ are strictly positive for $s=a,b$ and for $j=1,\dots,d$.  
In the case that the parameters are non-zero complex numbers, the model is unitarily equivalent to a model with positive parameters equal to the modulus of the complex parameter.  

We impart a graph structure to $\mathbb{Z}^d$ by connecting each pair of sites $x$ and $x+e_j$ by an edge where $e_j$ is the standard basis vector in $j$-th coordinate direction of $\mathbb{Z}^d$.
To each edge $x,\ x+e_j$, we define the following edge projection on the subspace of $\mathbb{C}_x^3\otimes\mathbb{C}_{x+e_j}^3$, spanned by
$$ \ket{0,a} - \ld_{a,j}\ket{a,0},\  
\ket{0,b} - \ld_{b,j}\ket{b,0},\  
\ld_{a,j}\ket{a,b} - \ld_{b,j}\ket{b,a},\ \ket{a,a},\ \ket{b,b}.$$
Equivalently, 
\begin{align*}
	h_{x,x+e_j} :=& \frac{\left(\ket{0,a} - \ld_{a,j}\ket{a,0}\right)\left( \bra{0,a} - \ld_{a,j}\bra{a,0}\right)}{1+\ld_{a,j}^2} \\
	&+ \frac{\left(\ket{0,b} - \ld_{b,j}\ket{b,0}\right)\left( \bra{0,b} - \ld_{b,j}\bra{b,0}\right)}{1+\ld_{b,j}^2} \\
	&+ \frac{\left(\ld_{a,j}\ket{a,b} - \ld_{b,j}\ket{b,a}\right)\left(\ld_{a,j} \bra{a,b} - \ld_{b,j}\bra{b,a}\right)}{\ld_{a,j}^2 + \ld_{b,j}^2}\\
	&+ \dyad{a,a} + \dyad{b,b} \numberthis\label{edgeproj}
\end{align*}
where $\ket{s, s'}$ denotes $\ket{s}_x\otimes\ket{s'}_{x+e_j} \in \mathbb{C}_x^3\otimes\mathbb{C}_{x+e_j}^3$. 
The first and second terms effectively `hop' the $a$ particles and $b$ particles, respectively; the third term interchanges neighboring particles of different species; the final two terms repel particles of the same species.   
These edge projections preserve the number and species of particles occupying either $x$ or $x+e_j$.  
%These edge projections restricted to the $a$-particle subspace (or $b$) are exactly the terms in the single-species PVBS Hamiltonians.  
The finite-volume Hamiltonian on $\Ld$ is the sum of these edge projections
\begin{align} \label{fvham}
	H^\Ld := \sum\limits_{j=1}^d\ \sum\limits_{x,x+e_j \in \Ld} h_{x,x+e_j}
\end{align}
where each edge projection is extended by the identity operator on $\mathcal{H}^{\Ld\bs\{x,x+e_j\}}$.  
This Hamiltonian preserves the number of each species of particle in $\mathcal{H}^\Ld$.  
The restriction of this Hamiltonian to the set of vectors with only one particle species $s$ is equivalent to the single-species PVBS Hamiltonian with $\log\ld = \log\ld_s$.

The ground state space of $h_{x,x+e_j}$ in $\mathbb{C}_x^3\otimes\mathbb{C}_{x+e_j}^3$ is four dimensional and spanned by
\begin{equation}
\ket{0,0},\ 
\ld_{a,j}\ket{0,a} +\ket{a,0},\  
\ld_{b,j}\ket{0,b} + \ket{b,0},\  
\ld_{b,j}\ket{a,b} + \ld_{a,j}\ket{b,a}\label{edgegs}.
\end{equation}
The finite volume PVBS Hamiltonian ground state space is spanned by four corresponding vectors.
Each  are in the ground state space for all edge projections $h_{x,x+e_j}$ and therefore, $H^\Ld$ is \textit{frustration-free}.    

\begin{theorem}[Two-species Finite Volume Ground State Space of $H^\Ld$]\label{fvgsthem}
For finite and connected $\Ld \subset \mathbb{Z}^d$ and any choice of positive parameters $\ld_s(j)$, $s =a,b;\ j=1, \dots, d$, the ground state space of the associated PVBS Hamiltonian $H^\Ld$ is four-dimensional and spanned by the orthonormal set:
\begin{align*}\label{fvgsv}
&\Psi_0^\Lambda := \otimes_{x\in\Lambda} |0\rangle_x\\
&\Psi_a^\Lambda := \frac{1}{\sqrt{C(\Lambda,a)}}\sum\limits_{x\in\Lambda} \lambda_a^x |a\rangle_x\\
&\Psi_b^\Lambda: = \frac{1}{\sqrt{C(\Lambda,b)}}\sum\limits_{y\in\Lambda} \lambda_b^x |b\rangle_y\\
&\Psi_{ab}^\Lambda := \frac{1}{\sqrt{C(\Lambda,ab)}}\sum\limits_{\substack{x,y\in\Lambda\\ x\neq y}} \lambda_a^x\lambda_b^y |a\rangle_x\otimes|b\rangle_y \numberthis
\end{align*}
where
\begin{align*}
\lambda_s^x :=\prod\limits_{j=1}^d \lambda_{s,j}^{x_j} \text{ for } s=a,b \numberthis\label{ldprod}\\
C(\Lambda, s) := \sum\limits_{x\in\Lambda} \lambda_s^{2x} \text{ for } s = a, b \numberthis \label{CLdt}\\
C(\Lambda,ab) := \sum\limits_{\substack{x,y\in\Lambda\\ x\neq y}} \lambda_a^{2x}\lambda_b^{2y} \numberthis\label{CLdab}\\
\end{align*}
and $|s\rangle_x$ for $s = a, b$ and $ |a\rangle_x\otimes|b\rangle_y$ are extended by $\bigotimes \ket{0}$ to the rest of $\Ld$.  
\end{theorem}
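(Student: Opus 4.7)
\medskip

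\noindent\textbf{Proof plan.} Since $H^\Ld$ is a sum of positive semi-definite edge projections, by frustration-freeness a vector is a ground state of $H^\Ld$ if and only if it lies in the kernel of every $h_{x,x+e_j}$. The proof has two parts: verifying that the four explicit vectors are common kernel elements, and showing that no other linearly independent common kernel elements exist.

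For the first part, I would check that for every edge $\{z,z+e_j\}\subset\Ld$ the restriction of each of $\Psi_0^\Ld,\Psi_a^\Ld,\Psi_b^\Ld,\Psi_{ab}^\Ld$ to that edge (with the complementary tensor factors held fixed) expands as a linear combination of the four edge ground-state vectors listed in \eqref{edgegs}. This reduces to the identity $\ld_s^{z+e_j}=\ld_{s,j}\ld_s^z$ applied term by term in the sums defining the four vectors. Linear independence of the four is immediate because they occupy distinct sectors of the conserved particle numbers $(N_a,N_b)$.

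For the converse direction, since $H^\Ld$ commutes with $N_a$ and $N_b$, the ground state space decomposes as a direct sum over sectors $(n_a,n_b)$. Pure sectors with $n_b=0$ are unitarily equivalent to the single-species PVBS Hamiltonian on $\Ld$ with parameters $\ld_a$, and symmetrically for $n_a=0$; invoking the single-species finite-volume result of \cite{bishop2016spectral}, only $(0,0),(1,0),(0,1)$ contribute, producing $\Psi_0^\Ld,\Psi_a^\Ld,\Psi_b^\Ld$ up to scalar. For the mixed sector $(1,1)$ I would write a general candidate as $\Phi=\sum_{x\neq y}c_{xy}\ket{a}_x\otimes\ket{b}_y$ and read off from the kernel condition on each $h_{z,z+e_j}$ three families of relations: an $a$-hopping recursion $c_{z+e_j,y}=\ld_{a,j}c_{z,y}$ for $y\notin\{z,z+e_j\}$ from the first term of \eqref{edgeproj}, the symmetric $b$-hopping recursion, and an interchange relation $\ld_{a,j}c_{z,z+e_j}=\ld_{b,j}c_{z+e_j,z}$ coming from the third term. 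Because $\Ld$ is connected, iterating the hopping recursions along lattice paths determines every $c_{xy}$ from a single reference value, and the interchange relation is automatically compatible, yielding the one-parameter family spanned by $\Psi_{ab}^\Ld$.

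The step I expect to be the principal obstacle is ruling out ground states in the remaining sectors $(n_a,n_b)$ with $n_a+n_b\geq 2$ and $(n_a,n_b)\neq(1,1)$. Here one must combine the hopping and interchange relations with the strict same-species repulsion encoded by the $\dyad{a,a}$ and $\dyad{b,b}$ terms. My plan is induction on $|\Ld|$, growing $\Ld$ slab by slab along one coordinate direction: expand a candidate ground state in the basis \eqref{ONB}, use the kernel conditions on the boundary edges to express the coefficients involving the new slab in terms of coefficients on the smaller volume, and apply the inductive hypothesis together with the same-species exclusion on those boundary edges to force all remaining coefficients to vanish. Equivalently, one can argue directly that in any forbidden sector the hopping recursions propagate any nonzero coefficient $c(A,B)$ to a configuration containing an adjacent same-species pair, which is then annihilated by the exclusion projection, contradicting nontriviality. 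Carrying out this sector-by-sector bookkeeping is the core of Section~4.
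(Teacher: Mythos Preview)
Your outline matches the paper's proof almost exactly through the sectors $(0,0)$, $(1,0)$, $(0,1)$, and $(1,1)$: same decomposition by conserved particle numbers, same hopping recursions, same use of connectedness to propagate. For the forbidden sectors the paper does \emph{not} use your primary plan (induction on $|\Ld|$ by slabs) but rather your ``equivalently'' alternative: propagate a nonzero coefficient to an adjacent same-species pair and kill it with the $\dyad{a,a}$ or $\dyad{b,b}$ term.

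Two remarks on the forbidden-sector step. First, the slab-by-slab induction is awkward here because the statement is for an arbitrary finite connected $\Ld$, not a box; you would have to grow one site at a time while preserving connectedness, and the boundary bookkeeping becomes messier than the direct argument. Second, in the direct propagation argument there is a subtlety you have not flagged: when you move an $a$-particle along a path toward its partner, the path may pass through sites occupied by $b$-particles, and each such encounter forces an interchange via the relation $\ld_{a,j}c_{z,z+e_j}=\ld_{b,j}c_{z+e_j,z}$. To keep this under control the paper first selects, among \emph{all} same-species pairs (in $A$ or in $B$), one of minimal graph distance; this guarantees that at most one particle of the other species lies on a shortest path between them, so only a single interchange is needed. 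Without this ``closest pair'' choice, multiple obstacles could appear and your propagation claim as stated would require more care.
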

The single-species ground state vectors $\Psi_a$ and $\Psi_b$ are exactly the ground states for the single-species PVBS Hamiltonian of the corresponding particle species.  
The two-species ground state vector is not simply a product of the two single-species ground states due to the exclusion property between the particle species.  
Consequently, the normalization constant for the $ab$ ground state is not the product of the normalization constants for the ground states for each particle species.  
It is bounded above by the product, $C(\Lambda, ab) < C(\Lambda,a)C(\Lambda,b)$, where the difference is exactly the diagonal term 
\begin{equation}\label{DLd}
 D(\Ld) := C(\Ld,a)C(\Ld,b)-C(\Ld, ab) = \sum_{x\in\Lambda}\lambda_a^{2x}\lambda_b^{2x}.
\end{equation}  
The proof of Theorem 1 follows a similar argument as the proof from proposition 2.1 in \cite{BHNY2015} and is presented in section \ref{groundstates}.  

For both parameter vectors associated to species $a$ and $b$, we define the logarithm vector of them coordinate-wise:
\begin{align}\label{logld}
	&\log\ld_s := (\log\ld_{s,1}, \log\ld_{s,2}, \dots, \log\ld_{s,d})
\end{align}  
so that $ \ld_s^{x}=\exp(x\cdot \log\ld_s)$.  
This will be used simplify bounds for separate cases by the same expression.  
Heuristically, the probability of finding the particle species $s$ at site $x$ in the associated (non-vacuum) finite volume ground states is proportional to $\exp(2x\cdot \log\ld_s)$.
We may interpret $\log\ld_s$ as the energetically favored direction for particle type $s$.

\subsection{Infinite Volume Model}
The infinite volume PVBS model is defined on connected infinite subsets $\Gamma \subseteq \mathbb{Z}^d$ using the Gelfand-Naimark-Segal (GNS) construction.  
For increasing and absorbing sequences of finite volumes $\Ld_n$ which converge to $\Gamma$, define $\mathcal{A}^{\Ld_n}$ to be the algebra of bounded operators on $\mathcal{H}^{\Ld_n}$.  
These algebras may be embedded into algebras on larger finite volumes by extending elements in the algebra on the smaller volume by identity on the rest of the larger volume.  
The algebra of local observables on $\Gamma$ is denoted $\mathcal{A}^{\Gamma}_{loc}$ and defined as
\begin{equation}
	\mathcal{A}^{\Gamma}_{loc} = \bigcup\limits_{n} \mathcal{A}^{\Ld_n}.
\end{equation}  
The operator norm is well-defined on this algebra; we define the algebra of quasi-local observables $\overline{\mathcal{A}^{\Gamma}_{loc}}$ as the norm-closure of the local observables on $\Gamma$. 

The infinite volume ground states are easily classified for the infinite volume two-species PVBS models.
The finite volume vacuum states are defined by
$\omega^{\Ld_n}_0 (\cdot):=  \lb \Psi^{\Ld_n}_0, \cdot  \Psi^{\Ld_n}_0 \rb $ 
acting on the algebra of local observables on $\mathcal{A}^{\Ld_n}$.  
These finite volume vacuum states weakly converge to the unique infinite volume vacuum state $\omega^\Gamma_0(\cdot)$ on $\overline{\mathcal{A}^{\Gamma}_{loc}}$ independent of the choice of increasing and absorbing sequence $\Ld_n$.  
With respect to its Gelfand-Naimark-Segal(GNS) representation $(\pi_0^\Gamma,\mathcal{H}_0^\Gamma, \Omega_0^\Gamma)$, $\omega^\Gamma _0(\cdot) = \lb \Omega^\Gamma _0, \pi^\Gamma_0(\cdot) \Omega^\Gamma_0\rb$, where $\pi^\Gamma_0$ is the representation of $\overline{\mathcal{A}^{\Gamma}_{loc}}$ in the bounded operators of $\mathcal{H}_0^\Gamma$. 
There may be other infinite volume ground states depending on whether $C(\Ld_n, s)$ converge or diverge when $\Ld_n \nearrow \Gamma$.  
The other finite volume ground states are 
\begin{align}
	\omega_a^{\Ld_n} (\cdot ) = \lb \Psi_a^{\Ld_n}, \cdot\ \Psi_a^{\Ld_n} \rb\\
	\omega_b^{\Ld_n} (\cdot ) = \lb \Psi_b^{\Ld_n}, \cdot\ \Psi_b^{\Ld_n} \rb\\
	\omega_{ab}^{\Ld_n} (\cdot ) = \lb \Psi_{ab}^{\Ld_n}, \cdot\ \Psi_{ab}^{\Ld_n} \rb
\end{align}
Let $p^a_x$ be the linear operator on $\mathbb{C}^3_x$ that maps $\ket{0}_x$ to $\ket{a}_x$, $\ket{a}_x$ to $\ket{0}_x$, and $\ket{b}_x$ to itself and let $p^b_x$ be defined similarly by interchanging $a$ and $b$.
These maps are extended by identity on $\mathcal{H}^{\Ld_n\bs\{x\}}$.  
If all the normalization constants converge in the infinite volume limit,  
there are non-vacuum ground states in GNS Hilbert space $\mathcal{H}^\Gamma_0$:
\begin{align} 
&\Omega_a^{\Gamma} = \frac{1}{\sqrt{C(\Gamma,a)}}\sum\limits_{x\in\Gamma} \lambda_a^x \pi^\Gamma_0(p_x^a) \Omega^\Gamma_0 \label{ivqlgs1}\\
&\Omega_b^{\Gamma}: = \frac{1}{\sqrt{C({\Gamma},b)}}\sum\limits_{y\in{\Gamma}} \lambda_b^x \pi^\Gamma_0(p_x^b) \Omega^\Gamma_0\label{ivqlgs2}\\
&\Omega_{ab}^{\Gamma} := \frac{1}{\sqrt{C({\Gamma},ab)}}\sum\limits_{\substack{x,y\in{\Gamma}\\ y\neq x}} \lambda_a^x\lambda_b^y \pi^\Gamma_0(p_x^a p_y^b) \Omega^\Gamma_0\label{ivqlgs3}
1\end{align}
where each of these are well-defined in the infinite volume vacuum GNS space and are orthogonal to each other and the infinite vacuum vector.  
Each of these states is in the kernel every edge projection $\pi^\Gamma_0 (h_{x,x+e_j})$ and thus the PVBS Hamiltonians on the infinite volume are frustration-free.  
The convergence of finite volume ground states to either its corresponding infinite volume ground state or the infinite volume vacuum state state is summarized by the following theorem.

\begin{theorem}[Infinite Volume Ground States]\label{ivgs}
Let $\Gamma$ be an connected infinite lattice and let $\Ld_n$ be a sequence of increasing and absorbing finite connected volumes converging to $\Gamma$. In the weak-$*$ topology,
\begin{enumerate}[i.]
	\item For either $s=a,b$, if $\lim_{n\to\infty}C(\Ld_n,s) = +\infty$, then $\omega^{\Ld_n}_s(\cdot) \to \omega^\Gamma_0(\pi^\Gamma_0(\cdot))$ and $\omega^{\Ld_n}_{ab}(\cdot) \to \omega^\Gamma_0(\pi^\Gamma_0(\cdot))$.\\
	\item For either $s=a,b$, if $\lim_{n\to\infty}C(\Ld_n,s) < +\infty$, then $\omega^{\Ld_n}_s(\cdot) \to \omega^\Gamma_s(\pi^\Gamma_0(\cdot))$ where
	$$\omega^\Gamma_s(\pi^\Gamma_0(\cdot)) = \lb\Omega^\Gamma_s, \pi^\Gamma_0(\cdot) \Omega^\Gamma_s\rb$$
	\item If $\lim_{n\to\infty}C(\Ld_n,s) < +\infty$ for both $s=a,b$, then 
	$\omega^{\Ld_n}_{ab}(\cdot) \to \omega^\Gamma_{ab}(\pi^\Gamma_0(\cdot))$ where
	$$\omega^\Gamma_{ab}(\pi^\Gamma_0(\cdot)) = \lb\Omega^\Gamma_{ab}, \pi^\Gamma_0(\cdot) \Omega^\Gamma_{ab}\rb$$
\end{enumerate}
\end{theorem}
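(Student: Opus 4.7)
The plan is to verify weak-$*$ convergence on the dense algebra $\mathcal{A}^\Gamma_{loc}$; since all the functionals involved are states, a uniform norm bound is automatic and density suffices. Fix a local observable $A$ supported on a finite set $V\subset\Gamma$ and take $n$ large enough that $V\subset\Ld_n$. The strategy in every case is to compute $\omega^{\Ld_n}_s(A)$ and $\omega^{\Ld_n}_{ab}(A)$ directly from the explicit forms of the finite-volume ground states in Theorem \ref{fvgsthem}, isolate a ``bulk'' term in which the particle(s) sit outside $V$ (so $A$ acts as the vacuum expectation), and bound the remaining ``boundary'' corrections uniformly in $n$.

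For $\omega^{\Ld_n}_a(A)$, expand the double sum defining $\lb \Psi_a^{\Ld_n}, A\, \Psi_a^{\Ld_n}\rb$ and use that, at any site outside $V$, the bra $\langle a|_x$ and ket $|a\rangle_y$ (extended by the vacuum) are orthogonal unless $x=y$. Only the diagonal contributes for $x\notin V$, while pairs with at least one index in $V$ generate finitely many terms bounded by a constant depending only on $A$ and the values $\ld_a^{2x}$ for $x\in V$. This gives
\begin{equation*}
\omega^{\Ld_n}_a(A) \;=\; \frac{C(\Ld_n\bs V,\,a)}{C(\Ld_n,\,a)}\;\omega^V_0(A) \;+\; \frac{R_n^{(a)}(A)}{C(\Ld_n,\,a)},
\end{equation*}
with $R_n^{(a)}(A)$ uniformly bounded in $n$. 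In case (i) with $s=a$, $C(\Ld_n,a)\to\infty$ drives the first ratio to $1$ and the remainder to $0$, so $\omega^{\Ld_n}_a(A)\to\omega^V_0(A)=\omega^\Gamma_0(\pi^\Gamma_0(A))$. In case (ii), $C(\Ld_n,a)\to C(\Gamma,a)<\infty$, $C(\Ld_n\bs V,a)\to C(\Gamma,a)-\sum_{x\in V}\ld_a^{2x}$, and each of the finitely many boundary summands converges; reassembling the limit and noting that $\pi^\Gamma_0(p^a_x)\Omega^\Gamma_0$ obeys the same orthogonality and weights as the finite-volume vectors $|a\rangle_x$ identifies it with $\lb \Omega^\Gamma_a, \pi^\Gamma_0(A)\Omega^\Gamma_a\rb$ from \eqref{ivqlgs1}.

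For $\omega^{\Ld_n}_{ab}(A)$, the analogous expansion runs over ordered pairs $(x,y)$ and $(x',y')$ with $x\neq y$, $x'\neq y'$, with weights $\ld_a^{x+x'}\ld_b^{y+y'}$. At any site outside $V$, forced matching between bra and ket requires $x=x'$ and $y=y'$, so the contribution with $x,y\notin V$ produces a bulk term proportional to $\omega^V_0(A)$ with coefficient $C((\Ld_n\bs V)^{(2)}_{\neq},ab)/C(\Ld_n,ab)$. The remaining three families of cross terms (exactly one of $x,y$ in $V$, or both in $V$) are bounded by constants depending on $V$ times $C(\Ld_n,a)$, $C(\Ld_n,b)$, or $1$. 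Using the identity $C(\Ld_n,ab)=C(\Ld_n,a)C(\Ld_n,b)-D(\Ld_n)$ from \eqref{DLd} together with $0\le D(\Ld_n)\le \min\{C(\Ld_n,a),C(\Ld_n,b)\}\cdot M$ for some constant $M$ depending only on the parameters, one checks in case (i) that $C(\Ld_n,ab)$ diverges comparably to the product $C(\Ld_n,a)C(\Ld_n,b)$, so the bulk fraction tends to $1$ and all cross terms vanish; in case (iii) every factor converges and termwise passage to the limit reassembles $\lb \Omega^\Gamma_{ab},\pi^\Gamma_0(A)\Omega^\Gamma_{ab}\rb$ by the same orthogonality argument as before.

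The main obstacle is the case-(i) control of $C(\Ld_n,ab)$: one must ensure that $D(\Ld_n)$ does not asymptotically saturate $C(\Ld_n,a)C(\Ld_n,b)$, so that the denominator in the bulk fraction genuinely diverges at the product's rate. This reduces to a Cauchy--Schwarz-type estimate comparing the diagonal sum $\sum_x \ld_a^{2x}\ld_b^{2x}$ against the full double sum, using the positivity of all $\ld_{s,j}$ and the fact that $\Ld_n$ contains at least two sites with distinct $\ld_a,\ld_b$-weights for large $n$. Once this is established, the identification of limits with the vectors \eqref{ivqlgs1}--\eqref{ivqlgs3} is routine because the GNS images $\pi^\Gamma_0(p^a_x)\Omega^\Gamma_0$ and $\pi^\Gamma_0(p^a_x p^b_y)\Omega^\Gamma_0$ form orthogonal systems with the same weights as the finite-volume basis vectors used in the expansion.
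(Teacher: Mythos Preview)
Your overall strategy---restrict to local observables, expand the finite-volume states in the explicit basis, separate a bulk vacuum contribution from finitely many boundary terms---is exactly the approach the paper invokes by citing Proposition~2.2 of \cite{BHNY2015}. The single-species parts (ii) and the convergent two-species part (iii) are handled correctly.

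The gap is in your treatment of case (i) for $\omega^{\Ld_n}_{ab}$. The inequality you assert, $D(\Ld_n)\le M\cdot\min\{C(\Ld_n,a),C(\Ld_n,b)\}$ for a parameter-dependent constant $M$, is false in general: take $\Gamma=\mathbb{N}$, $\ld_a=2$, $\ld_b=1/2$, $\Ld_n=\{1,\dots,n\}$; then $\ld_a^{2x}\ld_b^{2x}=1$, so $D(\Ld_n)=n$, while $C(\Ld_n,b)\to 1/3$. Consequently your route to ``$C(\Ld_n,ab)$ diverges comparably to $C(\Ld_n,a)C(\Ld_n,b)$'' does not go through as written, and the Cauchy--Schwarz sketch you offer at the end does not repair it (Cauchy--Schwarz controls $D(\Ld_n)$ by $\sqrt{\sum\ld_a^{4x}}\sqrt{\sum\ld_b^{4x}}$, which need not be small relative to either $C(\Ld_n,s)$). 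The paper handles exactly this point much more simply: for any fixed site $x_0\in\Ld_1$ one has the elementary lower bound
\[
C(\Ld_n,ab)\;\ge\;\sum_{x\in\Ld_n\setminus\{x_0\}}\ld_a^{2x}\,\ld_b^{2x_0}\;=\;\ld_b^{2x_0}\bigl(C(\Ld_n,a)-\ld_a^{2x_0}\bigr),
\]
which diverges whenever $C(\Ld_n,a)$ does. This is all the paper actually records beyond the citation. You should replace your $D$-bound with this lower bound (and, if you want the cross terms to vanish rather than merely the denominator to diverge, note that you may pick \emph{several} distinct sites $x_0$ and sum, obtaining $C(\Ld_n,ab)\ge C(\Ld_n\setminus F,a)\cdot C(F,b)$ for any finite $F$; letting $F$ exhaust $\Gamma$ forces $C(\Ld_n,a)/C(\Ld_n,ab)\to 0$ when $C(\Ld_n,b)\to\infty$ as well).
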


In essence, this theorem states a non-vacuum ground state exists in the infinite volume if the probability of finding the associated particle at any specific site $x$ in the one- or two-particle finite volume ground states, $\ld^{2x}_s/C(\Ld_n,s)$ or $\ld^{2x}_s/C(\Ld_n,ab)$, is strictly positive in the infinite volume limit.  
If the probability goes to zero in the limit, then the infinite volume ground states are indistinguishable from the infinite volume vacuum state. 
 
The proof follows exactly from the structure of the proof of proposition 2.2 in \cite{BHNY2015} for the single species ground states.
The two particle ground state does not add any complications to the proof except for the following convergence argument.
If both $C(\Ld_n,a)$ and $C(\Ld_n,b)$ converge, $C(\Ld_n,ab)$ also converges because it is bounded above by the product $C(\Ld_n,a)C(\Ld_n,b)$.  
If either $C(\Ld_n,a)$ or $C(\Ld_n,b)$ diverge, then $C(\Ld_n,ab)$ also diverges: if   $C(\Ld_n,a)$ diverges (without loss of generality),  then for any $x\in\Ld_1\subseteq\Ld_n$, $C(\Ld_n,ab) > C(\Ld_n\bs\{x\},a)\ld_b^{2x}$ which diverges as well.  

The convergence or divergence of the normalization constants in a given infinite volume $\Gamma$ is determined by the vector $\log\ld_s$.  
If there is a infinite ray $\{t\vec{v}: t\in \mathbb{N}\}$\footnote{or a ray that is thickened by including all points in $\Gamma$ within some fixed finite distance of the ray} contained in $\Gamma$ such that $\log \ld_s \cdot \vec{v} \geq 0$, then the sequence of normalization constants will diverge.  This follows from the fact that the sequence of normalization constants is bounded below by the sum along the rays:
$$\lim_{n\to \infty} C(\Ld_n,s) \geq \lim_{n \to \infty} \sum_{t=1}^{n} e^{t\vec{v}\cdot \log\ld_s} = \infty$$
For any $\log\ld_s$, we can find such an infinite ray in $\mathbb{Z}^d$ and any half-space $\Gamma_{\vec{m}}:= \{x\in\mathbb{Z}^d : \vec{m}\cdot x \geq 0\}$.
Consequently, the PVBS models defined over these spaces have only the vacuum ground state.
Other infinite volumes such as $\mathbb{N}^d \subseteq \mathbb{Z}^d$ may have other infinite volume ground states.
For example, PVBS models defined over $\Gamma=\mathbb{N}^d$ may have other infinite volume ground states if for either $s=a$ or $s=b$, if all $\ld_{s,j} < 1$ for $j=1,\dots,d$ so each $\log\ld_{s,j}<0$.
In such a case,
\begin{align*}
	C(\Gamma,s) &= \lim\limits_{n \to \infty} C([1,n]^d,s)\\
	& = \lim\limits_{n\to\infty} \prod\limits_{j=1}^d \sum_{x_j=1}^n \ld_{s,j}^{2x_j}\\
	& = \lim\limits_{n\to\infty} \prod\limits_{j=1}^d \frac{1-\ld_{s,j}^{2(n+1)}}{1-\ld_{s,j}^{2}}\\
	 & = \prod\limits_{j=1}^d \frac{1}{1-\ld_{s,j}^{2}} 
\end{align*}
and $\omega_s(\cdot)$ is another infinite volume ground state.  
The particle in such a state is localized at the corner of the volume $(1,1,\dots, 1)$ and the probability of finding it elsewhere in $\mathbb{N}^d$ decays exponentially in the distance from that point.  
Note that these ground states are in $\mathcal{H}^\Gamma_0$ and described by applying the quasi-local operators to the vacuum state, see equations \ref{ivqlgs1}, \ref{ivqlgs2}, \ref{ivqlgs3}.  
Essentially, if the $\ld_s^x$ exponentially decays on all rays in the infinite volume, then there is a corresponding single-particle ground state.
If both $\ld_a^x$ and $\ld_b^x$ decay exponentially on all rays, there will also be a two-particle ground state.

\section{Spectral Gap Theorems}

The main result is determining the existence or non-existence of a spectral gap above the ground state for all two-species PVBS models defined on $\mathbb{Z}^d$.  
We define the \textit{spectral gap} of a Hamiltonian $H$ with $H\geq 0$ and $0$ in its spectrum as the quantity
\begin{align}\label{specgap}
	\gamma(H) := \sup\{\delta>0: \text{spectrum}(H) \cap (0,\delta) = \emptyset\}
\end{align}
when it exists and zero when the set on the right side is empty.  
We say an operator is \textit{gapped} if $\gamma(H)>0$ and \textit{gapless} if $\gamma(H)=0$.  
We simplify notation by denoting $\gamma(H^\Ld)$ as $\gamma(\Ld)$ and $\gamma(H^\Gamma)$ as $\gamma(\Gamma)$.  
The existence of a spectral gap is determined by a simple geometric condition on the $\log \ld_s$ vectors.

\begin{theorem}[Spectral Gap on $\mathbb{Z}^d$]\label{specgapthm}
If both $\log\ld_a$ and $\log\ld_b$ are not equal to the zero vector, then $H^{\mathbb{Z}^d}$ is gapped.  If either $\log\ld_a$ or $\log\ld_b$ is equal to the zero vector, then $H^{\mathbb{Z}^d}$ is gapless. 
\end{theorem}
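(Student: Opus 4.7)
The plan is to split the theorem into its two directions and handle the simpler implication first. For the gapless direction, suppose without loss of generality that $\log\ld_a = 0$. Since $H^{\mathbb{Z}^d}$ preserves the number and species of particles, the $a$-only subspace (the closure of vectors of the form \eqref{ONB} with $|B|=0$) is invariant. The restriction of $H^{\mathbb{Z}^d}$ to this subspace is unitarily equivalent to the single-species PVBS Hamiltonian on $\mathbb{Z}^d$ with parameter vector $\log\ld_a = 0$, which is gapless by the main result of \cite{bishop2016spectral}. The spectrum of the restriction embeds into the spectrum of $H^{\mathbb{Z}^d}$, and both share the ground state energy zero, so $H^{\mathbb{Z}^d}$ is gapless. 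The case $\log\ld_b = 0$ is symmetric.

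For the gapped direction, suppose both $\log\ld_a, \log\ld_b \neq 0$. The strategy is to establish a uniform positive lower bound on $\gamma(\Ld_n)$ along an increasing and absorbing sequence of finite connected volumes $\Ld_n \nearrow \mathbb{Z}^d$, via the martingale method of \cite{nachtergaele96}. Concretely, I would decompose each $\Ld_n$ as a nested chain $V_1 \subset V_2 \subset \cdots \subset V_N = \Ld_n$ of overlapping subvolumes (slabs grown one lattice layer at a time) and verify two conditions: a uniform local spectral gap on each increment, which is automatic on bounded pieces of fixed geometry; and a uniform bound $\epsilon < 1/\sqrt{\ell}$ on the operator norm controlling how much the ground state projection $G_{V_k}$ fails to commute with $G_{V_{k+1}}$ on the overlap region. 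By Theorem \ref{fvgsthem} these projections are rank four, determined explicitly by $C(V_k,a),\,C(V_k,b),\,C(V_k,ab)$ and the product structure of $\ld_a^x, \ld_b^x$, so the overlap norms can be computed sector by sector against the orthonormal basis $\{\Psi_0, \Psi_a, \Psi_b, \Psi_{ab}\}$. The hypothesis $\log\ld_a, \log\ld_b \neq 0$ guarantees exponential decay of $\ld_s^{2x}/C(V_k,s)$ across the overlap layer, which handles the $\Psi_0,\Psi_a,\Psi_b$ sectors by essentially repeating the single-species estimates of \cite{bishop2016spectral}.

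The principal obstacle is the two-species sector $\Psi_{ab}^{V_k}$, which is \emph{not} the tensor product of $\Psi_a^{V_k}$ and $\Psi_b^{V_k}$ because of the exclusion constraint; the discrepancy between $C(V_k,ab)$ and $C(V_k,a)C(V_k,b)$ is governed precisely by $D(V_k)$ in \eqref{DLd}. To push single-species overlap bounds through, one must show that the ratio $D(V_k)/[C(V_k,a)C(V_k,b)]$ stays uniformly bounded away from $1$ as $V_k$ grows, so that $\Psi_{ab}^{V_k}$ is a controlled perturbation of an antisymmetrized product of the single-species ground states, and then propagate this through the overlap calculations on $V_k \setminus V_{k-1}$. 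This requires a case analysis by the relative configuration of $\log\ld_a$ and $\log\ld_b$: when they have a nontrivial component in a common direction, $D(V_k)$ grows like a geometric series along that ray and the ratio estimate is straightforward; in the remaining configurations (e.g.\ $\log\ld_a$ and $\log\ld_b$ nearly antipodal), one needs to compare the diagonal sum to partial sums of the two separate geometric series using that $\ld_a^{2x}\ld_b^{2x} \leq \ld_a^{2x}\ld_b^{2y}$ only after suitable reindexing. I expect this uniform control of $D(V_k)$, together with the bookkeeping needed to match the threshold in the martingale condition after choosing the overlaps large enough, to absorb the bulk of the technical work deferred to Sections~\ref{groundstates} and~5.
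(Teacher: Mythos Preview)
Your gapless argument is exactly the paper's: restrict to the single-species subspace and invoke \cite{bishop2016spectral}.

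For the gapped direction, your strategy matches the paper's at the structural level---martingale method applied to an exhausting sequence of finite volumes, sector-by-sector overlap estimates against the four-dimensional ground state basis, and the identification of the two-species sector (via the defect $D(V_k)$) as the chief difficulty. You also correctly anticipate that uniform control of $D(V_k)/[C(V_k,a)C(V_k,b)]$ away from $1$ is needed (this is Lemma~\ref{productbounds}) and that antipodal configurations of $\log\ld_a,\log\ld_b$ require a separate diagonal estimate (Lemma~\ref{diagonalbound}).

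There is, however, one essential ingredient your proposal omits. You describe the volumes as ``slabs grown one lattice layer at a time,'' which suggests axis-aligned boxes. But even though $\log\ld_s\neq 0$, individual coordinates may satisfy $\ld_{s,j}=1$; in those directions the single-species normalization constant grows only linearly, and the ratio bounds needed for condition~(iii) of the martingale method already fail in the single-species sectors. Worse, the problematic coordinates for species $a$ and $b$ need not coincide, so one cannot simply borrow the single-species volume construction from \cite{bishop2016spectral} for each species separately. The paper resolves this by building parallelepipedal volumes bounded by hyperplanes with carefully chosen normals $\vec v,\vec w,e_3,\dots,e_d$, via a two-case analysis (depending on whether $\ld_{a,j}\neq 1$ and $\ld_{b,j}\neq 1$ can be achieved for a common $j$), so that the effective parameters $\tilde\ld_{s,j}$ are all $\neq 1$ for both species simultaneously. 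This forces each $C(\Ld_n,s)$ to factor as a product of nontrivial geometric series, and only then do the overlap and diagonal bounds you outline go through. The martingale method is then applied $d$ times, once per coordinate direction of the constructed parallelepiped.
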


This condition states that if each particle species has a energetically-favored direction, that is, each $\log\ld_s$ is a non-zero vector, then the associated two-species Hamiltonian is gapped.  
If either particle species has no favored direction, then the associated single-species Hamiltonian is gapless and the two-species Hamiltonian is gapless as well.
This result shows the interactions between different particle species do not affect the existence of a spectral gap.
%The proof of the existence of spectral gaps relies on the fact that the diagonal terms in the two-species ground states is appropriately small with respect to the normalization constant.    
Though the spectral behavior suggests that the system may be embedded into an independent product of two single-species systems, there appears to be no simple embedding.    
In the author's attempts to do so, either the dimension of the ground state space increased which complicates the calculations in section 5
or the terms added in the independent product of two single-species systems to prevent new ground states act non-trivially on the embedded PVBS space. 
%The proof for gapped cases relies largely on the nature of the ground state space: any Hamiltonian with a ground states approximated by the four ground states above should have similar spectral gap behavior.  
The proof for cases where $H^{\mathbb{Z}^d}$ is gapless follows directly from the results in \cite{bishop2016spectral}.  

\begin{proof}[Proof of Gapless Cases:] if either $\log\ld_a = \vec{0}$ or $\log\ld_b = \vec{0}$, the Hamiltonian restricted to the subspace of only $a$-species particles or $b$-species particles, respectively, acts as the single species Hamiltonian on that subspace.  
If $\log\ld_s = \vec{0}$, the single species Hamiltonian is gapless \cite{bishop2016spectral}.  
Therefore, the two species Hamiltonian is gapless as well.\hfill$\Box$\end{proof}

The argument for gapless cases can be directly argued as follows.  
Suppose without loss of generality that $\log\ld_a=\vec{0}$.  
It follows that $\ld_a^x=exp[\log\ld_a \cdot x] = 1$ for all $x$.  
Consider the sequence of finite volumes $\Ld_L:=[0,L-1]^d$ and the finite volume vector $\Psi_a^{\Ld_L}$.  
In the the GNS Hilbert space $\mathcal{H}^{\mathbb{Z}^d}$, this vector is mapped to
\begin{align*}
	\hat{\Psi}_a^{\Ld_L} = \frac{1}{\sqrt{C(\Ld_L,a)}}\sum\limits_{x\in\Ld_L} \lambda_a^x \pi^{\mathbb{Z}^d}_0(p_x^a) \Omega^\Gamma_0
\end{align*}
The only infinite volume ground state is the vacuum state; so this state is orthogonal to the ground state space of  $\mathcal{H}^{\mathbb{Z}^d}$.
The only terms in $H^{\mathbb{Z}^d}$ with positive contributions are projections $h_{x,x+e_j}$ indexed by edges connecting $\Ld_L$ to $\mathbb{Z}^d\bs\Ld_L$.  
(For edges in $\Ld_L$, the vector $\hat{\Psi}_a$ is in the kernel of the corresponding projection as a single particle ground state; for edges in $\mathbb{Z}^d\bs \Ld_L$, the vector $\hat{\Psi}_a$ is in the kernel of the corresponding projection as the vacuum ground state.)  
We denote the set of sites in $\Ld_L$ adjacent to its exterior as $\bd\Ld_L$.  
Each positive contribution is bounded above by the operator norm on projections, 1, times the coefficient squared of the term with a particle at site $x \in \bd\Ld_L$, $\ld_s^{2x}= 1$, times the number of possible edges connect $x$ to the exterior of $\Ld_L$, d.  
The energy (Rayleigh quotient) is bounded above by
\begin{align*}
	\bra{\hat{\Psi}_a^{\Ld_L}}H^{\mathbb{Z}^d}\ket{\hat{\Psi}_a^{\Ld_L}} \leq \frac{d*\sum\limits_{x\in\bd\Ld_L} 1}{\sum\limits_{x\in\Ld_L} 1} = O\left(\frac{1}{L}\right)
\end{align*}
This bounds the spectral gap for all $L$ and thus the PVBS Hamiltonian on $\mathbb{Z}^d$ is gapless.

The proof of a positive spectral gap for $H^{\mathbb{Z}^d}$ is quite difficult and the majority of the technical work in this paper.  
Section \ref{proofofgap} contains the proof for the gapped cases.  Section \ref{prooflemma1} contains the proof of the main technical lemmas needed for the result.  

The results on the spectral gap properties for two-species PVBS models on $\mathbb{Z}^d$ should extend to other infinite subsets $\Gamma$ of $\mathbb{Z}^d$.
For convex volumes bounded by hyperplanes, we need an extra condition: the $\log\ld_s$ vectors cannot be the outward normal of a hyperplane boundary of $\Gamma$.  
Consider half-spaces $\Gamma_{\vec{m}}$ of $\mathbb{Z}^d$, defined by an inward normal vector $\vec{m}\in\mathbb{R}^d$ so $\Gamma_{\vec{m}}:= \{x\in\mathbb{Z}^d : \vec{m}\cdot x \geq 0\}$.
In \cite{bishop2016spectral}, the single-species PVBS model on such half-spaces are gapless if the $\log\ld$ vector pointed in the outward normal direction of $\Gamma_{\vec{m}}$ (that is, a negative scalar multiple of $\vec{m}$) and gapped otherwise.  
The proof of gapless cases follows from the energy for a sequence of finite-volume single particle ground states $\Psi^{\Ld_l}$ pressed up against the boundary.
In the infinite volume, these states are orthogonal to the unique vacuum ground state and has energy of the order $O(L^{-1})$ where $L$ was the linear length of the volume in each direction.  
In Figure \ref{figure:upperbound}, a single particle ground state is supported on the parallelogram $\Ld_L$.  
The terms which contribute positive energy are the edge projections connecting $\Ld_L$ to its exterior; we denote the sites in $\Ld_L$ connected to the exterior by an edge as $\bd\Ld_L$.  
The terms with a particle at site $x$ have magnitude $\exp[2\log\ld x]$ and decay exponentially in their distance from the boundary.  
The gap is bounded above by 
\begin{align*}
\bra{\Psi_s^{\Ld_L}}H^{\Gamma_{\vec{m}}}\ket{\Psi_s^{\Ld_L}}  \leq \frac{d* C(\bd\Ld_L)}{C(\Ld_L)} = O\left(\frac{1}{L}\right)
\end{align*} 
%The term $C(\Ld_L)\approx L^{d-1}$ because it is a constant sum along hyperplanes parallel to $\vec{m}\cdot x$ and a decaying exponential sum in the $\vec{m}$ direction. 
%The term $C(\bd\Ld_L) \approx L^{d-2}$ since it is summed over all boundary hyperplanes and except $\vec{m}\cdot x=0$; each boundary hyperplane sum is constant in $d-2$ directions and exponentially decaying in the $d$ direction.  
The key to the above bound is that $\ld_a^{2x}$ is maximized across the entire boundary $\Ld_L\bigcap \{\vec{m}\cdot x = 0\}$ which has $O(L^{d-2})$ terms in the numerator and $L^{d-1}$ terms in the denominator.  
This bound holds for all planes of the form $\vec{m}\cdot x =\ constant$  except $\vec{m}\cdot x = L$ where all terms are exponentially small in $L$ and thus neglible.
The bound over each plane exponentially decay and the overall bound follows.   
Therefore, the associated Hamiltonian $H^{\Gamma_{\vec{m}}}$ is gapless.  
For half-spaces, if either $\log\ld_a$ or $\log\ld_b$ point in the outward normal direction of $\Gamma_{\vec{m}}$, the two-species PVBS model is gapless, i.e. $\gamma(H^{\Gamma_{\vec{m}}})=0$.  

\begin{figure}
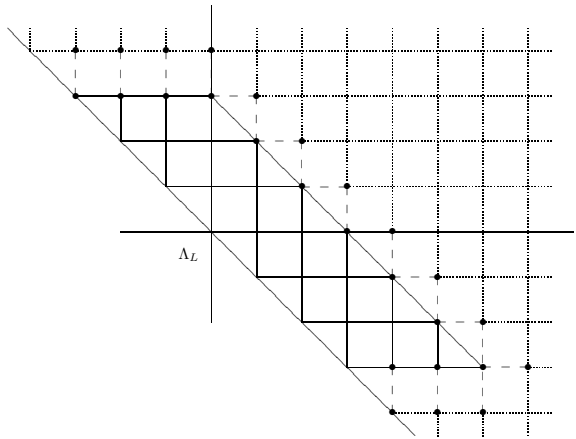

 %  \begin{center}
%\begin{minipage}{.5\textwidth}

\begin{adjustbox}{max size={.6\textwidth}{.5\textheight}}
{
\xy
(45,-45); (-45,45) **\dir{-};%diagonal row for \Lambda_l
(-5,-5)*+{\Lambda_L};
(-30,40)*+{\bullet};
(-20,40)*+{\bullet};
(-10,40)*+{\bullet};
(0,40)*+{\bullet};
(-30,30)*+{\bullet};
(-20,30)*+{\bullet};
(-10,30)*+{\bullet};
(0,30)*+{\bullet};
(60,-40)*+{\bullet};
(50,-40)*+{\bullet};
(40,-40)*+{\bullet};
(60,-30)*+{\bullet};
(50,-30)*+{\bullet};
(40,-30)*+{\bullet};
(70,-30)*+{\bullet};
(50,-20)*+{\bullet};
(60,-20)*+{\bullet};
(40,-10)*+{\bullet};
(50,-10)*+{\bullet};
(30,0)*+{\bullet};
(40,0)*+{\bullet};
(20,10)*+{\bullet};
(30,10)*+{\bullet};
(10,20)*+{\bullet};
(20,20)*+{\bullet};
(10,30)*+{\bullet};
(-30,40);(-30,30)**\dir{--};
(-20,40);(-20,30)**\dir{--};
(-10,40);(-10,30)**\dir{--};
(0,40);(0,30)**\dir{--};
(0,30);(10,30)**\dir{--};
(10,20);(10,30)**\dir{--};
(10,20);(20,20)**\dir{--};
(20,10);(20,20)**\dir{--};
(30,10);(30,0)**\dir{--};
(20,10);(30,10)**\dir{--};
(30,0);(40,0)**\dir{--};
(40,0);(40,-10)**\dir{--};
(40,-10);(50,-10)**\dir{--};
(50,-10);(50,-20)**\dir{--};
(50,-20);(60,-20)**\dir{--};
(60,-20);(60,-40)**\dir{--};
(60,-30);(70,-30)**\dir{--};
(50,-30);(50,-40)**\dir{--};
(40,-30);(40,-40)**\dir{--};
(-30,30); (0,30)**\dir{-}; %horiz boundary
(0,30); (60,-30)**\dir{-}; %slant boundary
(30,-30); (60,-30)**\dir{-};%horiz boundary
(-20,20); (10,20)**\dir{-};%horiz
(-10,10); (20,10)**\dir{-};%horiz
(10,-10); (40,-10)**\dir{-};%horiz
(20,-20); (50,-20)**\dir{-};%horiz
(-20,30);(-20,20)**\dir{-};%vert
(-10,30);(-10,10)**\dir{-};%vert
(10,20);(10,-10)**\dir{-};%vert
(20,10);(20,-20)**\dir{-};%vert
(30,0);(30,-30)**\dir{-};%vert
(40,-10);(40,-30)**\dir{-};%vert
(50,-20);(50,-30)**\dir{-};%vert
(0,-20);  (0,50) **\dir{-};%vertical axis
(-10,40); (-10,45) **\dir{.};
(-20,40);  (-20,45) **\dir{.};
(-30,40);  (-30,45) **\dir{.};
(-40,40);  (-40,45) **\dir{.};
(10,30);  (10,45) **\dir{.};
(20,20);  (20,45) **\dir{.};
(30,10);  (30,45) **\dir{.};
(40, 0);  (40,45) **\dir{.};
(50,-45);  (50,-40) **\dir{.};
(50,-10);  (50,45) **\dir{.};
(60,-45);  (60,-40) **\dir{.};
(60,-20);  (60,45) **\dir{.};
(70,-45);  (70,45) **\dir{.};
(-20,0);  (80,0) **\dir{-};%horiz axis
(30,10);  (75,10) **\dir{.};%horiz
(20,20);  (75,20) **\dir{.};%horiz
(10,30);  (75,30) **\dir{.};%horiz
(-40,40); (75,40) **\dir{.};%horiz
(0,0);  (75,0) **\dir{.};%horiz
(50,-10); (75,-10) **\dir{.};%horiz
(60,-20); (75,-20) **\dir{.};%horiz
(70,-30);(75,-30)**\dir{.};
(40,-40); (75,-40) ** \dir{.};
\endxy 
}
\end{adjustbox}
   \caption{A typical $\Lambda_L$ in $d=2$. An edge between a site on
   the boundary of $\Lambda_L$ and a site outside of $\Lambda_L$
   indicates a nearest neighbor interaction that contributes nonzero energy to
   $\bra{\Psi^L}H^\Gamma\ket{\Psi^L}$. Here $\vec{m}=(1,1)$ and $\log\ld=(-c,-c)$ for some $c>0$.}
   \label{figure:upperbound}
%\end{minipage}
%\end{center}
\end{figure}

On the other hand, extending the proofs for gapped cases on $\mathbb{Z}^d$ to other infinite volumes such as half-spaces is much more difficult. 
The key to all these proofs is to find an appropriate sequence of volumes for which the normalization constants have a product (non-trivial) geometric sum structure.
In \cite{bishop2016spectral}, these volumes were enclosed by hyperplanes so the single particle normalization constants have the form
$$ C(\Ld,s) = \prod\limits_{j=1}^d \sum\limits_{x_j=0}^{L_j} \tilde{\ld}_{s,j}^{2x_j}$$
where all $\tilde{\ld}_{s,j} \neq 1$.  
This requires the volumes to satisfy two conditions.  
First, $\tilde{\ld}_s^x$ must be maximized in the corner of the volume and not along an edge or a boundary face of the volume.  
This requires that $\log \tilde{\ld}_s$ essentially point towards a corner of the volume. 
Second, the volume needs to line up with $\mathbb{Z}^d$ to make calculating these products of sums tractable.
The proof for the single-species case required a long list of cases due to the difficulty in finding appropriate volumes which included parallelograms, trapezoids, and their higher dimensional analogues.  
The corresponding proof for the two-species PVBS Hamiltonian is omitted to avoid a longer list of cases due to the second parameter vector. 
These proofs should extend to half-spaces where the  $\log_s\ld$ for $s=a,b$ are non-zero and are not the outward normal to the half-space.
In such cases, the volumes can be constructed in a similar manner to \cite{bishop2016spectral} so the normalization constants for each particle species has a geometric sum structure.
For infinite volumes bounded by a set of hyperplanes, such as $\mathbb{N}^d$, the condition that $\log\ld_s$ should not be pointing in an outward normal should be sufficient to prove a spectral gap.  
It is not clear at this point if these method extends to more exotic volumes bounded by something other than hyperplanes such as a curved surfaces, though the method presented in \cite{kastoryano2017divide} may provide a more robust method in these cases.

\section{Proof of Existence of Spectral Gap for $\mathbb{Z}^d$} \label{proofofgap}
Fix $\log\ld_a \neq \vec{0}$ and $\log\ld_b \neq \vec{0}$.
We prove that $H^{\mathbb{Z}^d}$ is gapped by finding a sequence of bounded, connected, and absorbing finite subvolumes $\Ld_L$ converging to $\mathbb{Z}^d$ (notation: $\Ld_L \nearrow \mathbb{Z}^d$) such that the spectral gap of $H^{\Ld_L}$ is bounded below by a positive constant independent of $L$.
The following theorem concludes this bound is also a lower bound on the spectral gap of the infinite-volume Hamiltonian $H^{\mathbb{Z}^d}$.
\begin{theorem}\label{thm:GNSSpecGap}\cite{fannes1992}
Let $H^\Gamma$ be the GNS Hamiltonian associated with the connected infinite volume
$\Gamma$ with spectral gap $\gamma(\Gamma)$. Then for any sequence of increasing and
absorbing finite volumes $\Lambda_L \nearrow \Gamma$,
\begin{align*}
\gamma(\Gamma) \geq \limsup \gamma(\Lambda_L),
\end{align*}
where $\gamma(\Lambda_L)$ is the spectral gap of the frustration-free
Hamiltonian $H^{\Lambda_L}$.
\end{theorem}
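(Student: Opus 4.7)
The plan is to prove the bound via the variational characterization of the gap, reducing an infinite-volume statement to finite-volume statements that can directly invoke the hypothesis. It suffices to verify that, for every $\Psi \in \mathcal{H}_0^\Gamma$ orthogonal to each infinite-volume ground state identified in Theorem 2, one has $\langle \Psi, H^\Gamma \Psi\rangle \geq (\limsup_L \gamma(\Lambda_L))\|\Psi\|^2$. Because $\Omega_0^\Gamma$ is cyclic for $\pi_0^\Gamma(\mathcal{A}^\Gamma_{loc})$ and $\mathcal{A}^\Gamma_{loc} = \bigcup_N \mathcal{A}^{\Lambda_N}$, I reduce to vectors of the form $\Psi = \pi_0^\Gamma(A)\Omega_0^\Gamma$ with $A \in \mathcal{A}^{\Lambda_N}$ for some $N$.

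For each $L$ with $\Lambda_L$ containing the nearest-neighbor closure of $\Lambda_N$, define the finite-volume approximant $\Psi_L := A \Psi_0^{\Lambda_L} \in \mathcal{H}^{\Lambda_L}$. Two consistencies between finite and infinite volumes are the main computational inputs. First, $\|\Psi_L\|^2 = \omega_0^{\Lambda_L}(A^* A) \to \omega_0^\Gamma(A^* A) = \|\Psi\|^2$ by weak convergence of finite-volume vacuum states. Second, $\langle \Psi_L, H^{\Lambda_L}\Psi_L \rangle \to \langle \Psi, H^\Gamma \Psi\rangle$: edge projections $h_e$ whose support does not meet $\Lambda_N$ commute with $A$ and annihilate $\Psi_0^{\Lambda_L}$ by frustration-freeness, and the analogous statement holds in the GNS representation, leaving only a bounded number of surviving terms $\omega_0^{\Lambda_L}(A^* h_e A) \to \omega_0^\Gamma(A^* h_e A)$. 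The finite-volume gap then yields $\langle \Psi_L, H^{\Lambda_L}\Psi_L \rangle \geq \gamma(\Lambda_L)\|P_L^\perp \Psi_L\|^2$, where $P_L^\perp$ is the projection onto the orthogonal complement of the four-dimensional finite-volume ground state space spanned by Theorem 1.

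The decisive step is to show that $\|P_L^\perp \Psi_L\|^2 \to \|\Psi\|^2$, i.e., that each of the four overlaps $\langle \Psi_\sigma^{\Lambda_L}, \Psi_L\rangle$ for $\sigma \in \{0, a, b, ab\}$ vanishes as $L \to \infty$. The case $\sigma = 0$ is immediate: $\langle \Psi_0^{\Lambda_L}, A \Psi_0^{\Lambda_L}\rangle = \omega_0^{\Lambda_L}(A) \to \omega_0^\Gamma(A) = \langle \Omega_0^\Gamma, \Psi\rangle = 0$. For $\sigma \in \{a, b, ab\}$, the locality of $A$ reduces the inner product to a finite sum (over $x,y \in \Lambda_N$) divided by $\sqrt{C(\Lambda_L, \sigma)}$; this is the main technical obstacle, because one must separately handle the regime where $C(\Lambda_L, \sigma) \to \infty$ (inner product trivially vanishes) and the regime where $C(\Lambda_L, \sigma)$ converges (where Theorem 2 identifies the limit with $\langle \Omega_\sigma^\Gamma, \Psi\rangle$, which vanishes by the orthogonality assumption). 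Combining these with a subsequence chosen so that $\gamma(\Lambda_L) \to \limsup_L \gamma(\Lambda_L)$ delivers $\langle \Psi, H^\Gamma \Psi\rangle \geq (\limsup_L \gamma(\Lambda_L))\|\Psi\|^2$ on the dense subspace of local approximants, and the bound extends to all $\Psi$ orthogonal to the ground states by continuity.
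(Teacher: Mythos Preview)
The paper does not prove this theorem; it is quoted from \cite{fannes1992} and used as a black box, so there is no in-paper proof to compare against. Your sketch follows the standard variational route and the computational core is right, but the framing has a genuine gap.

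You impose two requirements on $\Psi$ at once: locality ($\Psi=\pi_0^\Gamma(A)\Omega_0^\Gamma$ with $A\in\mathcal{A}^{\Lambda_N}$) and orthogonality to the infinite-volume ground states. These are generally incompatible: projecting a local vector onto the orthogonal complement of $\Omega_a^\Gamma,\Omega_b^\Gamma,\Omega_{ab}^\Gamma$ does not yield a local vector, so your finite-volume approximant $\Psi_L=A\Psi_0^{\Lambda_L}$ is not available for such $\Psi$. Consequently the final ``extends by continuity'' step cannot rescue the argument either, since $H^\Gamma$ is unbounded and norm-density alone does not propagate the quadratic-form inequality.

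The fix is simple and your own computations already supply it: drop the orthogonality hypothesis and prove the operator inequality $H^\Gamma\geq\gamma\,(\mathbb{I}-G^\Gamma)$ on the core of local vectors. For an arbitrary local $\Psi$, your case analysis shows that $\langle\Psi_\sigma^{\Lambda_L},\Psi_L\rangle\to\langle\Omega_\sigma^\Gamma,\Psi\rangle$ when $C(\Lambda_L,\sigma)$ converges (Theorem~\ref{ivgs}) and $\to 0$ when it diverges, so in fact $\|P_L^\perp\Psi_L\|^2\to\|(\mathbb{I}-G^\Gamma)\Psi\|^2$ rather than $\|\Psi\|^2$. Combined with your two convergences this gives $\langle\Psi,H^\Gamma\Psi\rangle\geq\gamma\,\|(\mathbb{I}-G^\Gamma)\Psi\|^2$ for every local $\Psi$; since the local vectors form a core for $H^\Gamma$, the inequality extends to the full domain and yields $\gamma(\Gamma)\geq\gamma$.
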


To estimate finite volume gaps, we apply the martingale method. It provides
conditions under which the spectral gap for a frustration-free Hamiltonian on a finite volume $\Lambda$ can be bounded below by a fraction of the spectral gap of sub-volumes. 

\begin{theorem}[Martingale Method, \cite{nachtergaele96}]\label{thm:MartingaleMethod}
For a finite volume $\Lambda$ and frustration-free non-negative definite Hamiltonian $H^\Lambda$, let
$\Lambda_n$ be a finite sequence of volumes with $\Lambda_0 = \emptyset$ and
$\Lambda_n \nearrow\Lambda_L=\Lambda$ such that the following three conditions
hold for the local Hamiltonians for the some $\ell\geq 2$:\\
(i) For some positive constant $d_\ell$,
\begin{equation}\label{mmc1}
\sum_{n=\ell}^L H^{\Lambda_n\backslash\Lambda_{n-\ell}} \leq d_\ell H^{\Lambda_N}
\end{equation}
(ii) For some positive constant $\gamma_\ell$ and $n_\ell$, if $n\geq n_\ell$,
\begin{equation}\label{mmc2}
H^{\Lambda_n\backslash\Lambda_{n-\ell}} \geq \gamma_\ell(\mathbb{I} -
G^{\Lambda_n\backslash\Lambda_{n-\ell}})
\end{equation}
where $G^{\Lambda}$ is the orthogonal projection onto $\mathcal{G}_\Lambda :=
\ker(H^\Lambda)\subset \mathcal{H}^{\Ld}$. \\
(iii) There exists a constant $\epsilon_\ell <
\frac{1}{\sqrt{\ell}}$ and $n_\ell$ such that $n_\ell\leq n\leq N-1$
\begin{equation}\label{mmc3}
\|G^{\Lambda_{n+1}\backslash\Lambda_{n+1-\ell}}E_n\| \leq \epsilon_\ell
\end{equation}
where $E_n := G^{\Lambda_n} - G^{\Lambda_{n+1}}$ is the orthogonal projection onto
$\mathcal{G}_{\Lambda_n}\cap \mathcal{G}^\perp_{\Lambda_{n+1}}$.
Then the spectral gap of $H^{\Lambda_N}$ satisfies
\begin{equation}\label{mmcons}
\gamma(\Lambda_N) \geq \frac{\gamma_\ell}{d_\ell}(1-\epsilon_\ell
\sqrt{\ell})^2.
\end{equation}
\end{theorem}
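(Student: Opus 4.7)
The plan is to recast the spectral gap lower bound as an operator inequality and then whittle it down through the three hypotheses. Since $H^{\Lambda_N}$ is frustration-free and non-negative, the bound $\gamma(\Lambda_N) \geq c$ is equivalent to the operator inequality $H^{\Lambda_N} \geq c(I - G^{\Lambda_N})$. Combining hypotheses (i) and (ii) immediately gives
\begin{equation*}
H^{\Lambda_N} \geq \frac{1}{d_\ell}\sum_{n=\ell}^{N} H^{\Lambda_n \setminus \Lambda_{n-\ell}} \geq \frac{\gamma_\ell}{d_\ell}\sum_{n=\ell}^{N}(I - P_n),
\end{equation*}
where $P_n := G^{\Lambda_n \setminus \Lambda_{n-\ell}}$ and where boundary corrections from small $n$ are absorbed into the choice of $n_\ell$. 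The task reduces to establishing the operator comparison
\begin{equation*}
\sum_{n=\ell}^{N}(I - P_n) \geq (1 - \epsilon_\ell \sqrt{\ell})^2 (I - G^{\Lambda_N}).
\end{equation*}

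Next I would exploit the martingale-like telescoping decomposition given by the projections $E_n$. Because $H^{\Lambda_{n+1}} \geq H^{\Lambda_n}$ and both are frustration-free, $G^{\Lambda_{n+1}} \leq G^{\Lambda_n}$, so $E_n$ is an orthogonal projection and, with the convention $G^{\Lambda_0} = I$, the $E_n$ form a family of mutually orthogonal projections with $\sum_n E_n = I - G^{\Lambda_N}$. Any $\psi$ orthogonal to $\mathcal{G}_{\Lambda_N}$ therefore decomposes as $\psi = \sum_m E_m \psi$ with $\|\psi\|^2 = \sum_m \|E_m \psi\|^2$. The key geometric observation is that $G^{\Lambda_n} \leq P_n$, which follows from $\Lambda_n \setminus \Lambda_{n-\ell} \subseteq \Lambda_n$ together with frustration-freeness; this yields $(I - P_n) E_m = 0$ whenever $m \geq n$, so only the tail $\sum_{m < n} E_m \psi$ contributes to $(I - P_n)\psi$.

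The remaining and hardest step is to prove the quadratic-form estimate
\begin{equation*}
\sum_{n=\ell}^{N} \|(I - P_n)\psi\|^2 \geq (1 - \epsilon_\ell \sqrt{\ell})^2 \|\psi\|^2,
\end{equation*}
which is where hypothesis (iii) enters through the bound $\|P_{n+1} E_n\| \leq \epsilon_\ell$. My plan is to split $P_n \psi = \sum_{m \geq n} E_m \psi + P_n \sum_{m < n} E_m \psi$, use mutual orthogonality of the $E_m$ to isolate the diagonal contribution, and control the remaining cross-terms via (iii). Because the support of a single $E_n$ can reach into at most $\ell$ successive annuli $\Lambda_m \setminus \Lambda_{m-\ell}$, a Cauchy--Schwarz estimate over this window of width $\ell$ is what produces the $\sqrt{\ell}$ factor; one power of $(1 - \epsilon_\ell\sqrt{\ell})$ comes from passing from $\psi$ to its image under $I - P_n$ in norm, and the square then arises on summing in $n$ and re-extracting $\|\psi\|^2$ on the right-hand side. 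The bookkeeping of the cross-terms in this Cauchy--Schwarz step is the main technical obstacle, and the hypothesis $\epsilon_\ell < 1/\sqrt{\ell}$ is precisely what keeps the constant positive and yields the stated gap bound $\gamma(\Lambda_N) \geq (\gamma_\ell/d_\ell)(1 - \epsilon_\ell\sqrt{\ell})^2$.
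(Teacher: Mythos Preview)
The paper does not prove this theorem; it is quoted as a black-box tool from \cite{nachtergaele96}, and the paper's own work lies in verifying hypotheses (i)--(iii) for the two-species PVBS model (in particular Lemma~\ref{mmc3lemma} for condition~(iii)). So there is no ``paper's own proof'' to compare against here.

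That said, your sketch is the standard argument from Nachtergaele's original paper: reduce via (i) and (ii) to the operator inequality $\sum_n(I-P_n)\geq(1-\epsilon_\ell\sqrt{\ell})^2(I-G^{\Lambda_N})$, then exploit the orthogonal resolution $I-G^{\Lambda_N}=\sum_m E_m$ and the two structural facts $P_nE_m=E_m$ for $m\geq n$ (frustration-freeness) and $[P_n,E_m]=0$ for $m\leq n-\ell$ (disjoint supports), so that only the $\ell$ cross terms $P_nE_m$ with $n-\ell<m<n$ survive and are bounded via (iii) and Cauchy--Schwarz. Your outline is correct in spirit; the one place you are vague is the actual extraction of the constant $(1-\epsilon_\ell\sqrt{\ell})^2$. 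In the original proof this comes from estimating $-\operatorname{Re}\langle E_{n-1}\psi,\,P_n\psi\rangle$ term by term, not from a global norm inequality, and the square arises because one first bounds a bilinear cross term by $\epsilon_\ell\sqrt{\ell}\cdot c_n$ with $\sum c_n^2\leq\|\psi\|^2$, then completes the square. Your phrase ``one power \ldots comes from passing from $\psi$ to its image under $I-P_n$ in norm, and the square then arises on summing'' is not quite how the algebra goes, but the ingredients you list are the right ones.
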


A sequence of finite volumes which satisfy the three conditions in the martingale method are defined by $d$ pairs of parallel hyperplanes which form the boundaries.  
Each hyperplane is defined in $\mathbb{R}^d$ by the equation $\vec{n}\cdot \vec{x} = constant$.  
To simplify the construction, the normal vectors $\vec{n}$ will be chosen to be two designated vectors $\vec{v}$ and $\vec{w}$ and the standard basis vectors $\vec{e}_j$.  
We choose $\vec{v}$ and $\vec{w}$ so normalization constants for each particle species are products of nontrivial geometric series in terms of the parameters $\ld_{s,j}$.  

As a necessary condition for the operator to be gapped, at least one $\ld_{a,j}$ and one $\ld_{b,k}$ are not equal to one.
It is either the case that this holds for a shared coordinate direction or the case that for each coordinate direction $j =1, \dots, d$, $\ld_{a,j} \neq 1$ and $\ld_{b,j}=1$ or vice-versa.

\textbf{Case 1:} There exists a $j\in\{1,\dots,d\}$ such that $\ld_{a,j}\neq 1$ and $\ld_{b,j}\neq 1$.
We permute coordinate indices so $\ld_{a,1} \neq 1$ and $\ld_{b,1} \neq 1$.
We introduce the vector
\begin{align*}
	\vec{v} = (1,v(2),v(3), \dots, v(d))
\end{align*}
and let $\vec{w}=\vec{e}_2$.  
The other hyperplanes are given by $\vec{x}\cdot\vec{e}_j= x_j = constant$ for $j=3, \dots, d$.  
We choose $v(j) \in \mathbb{N}\cup\{0\}$ for $j=2,\dots,d$ such that
\begin{align*}
	\tilde{\ld}_{a,j} := \ld_{a,j}\ld_{a,1}^{-v(j)} \neq 1\\
	\tilde{\ld}_b(j) := \ld_{b,j}\ld_{b,1}^{-v(j)} \neq 1
\end{align*}
and set $\tilde{\ld}_{s,1} :=\ld_{s,1}$ for $s=a,b$.  
For any $\vec{L}=(L_1,\dots, L_d)\in \mathbb{N}^d$, we define $\Ld_{\vec{L}}$ to be the volume
\begin{align*}
	\{\vec{x}\in\mathbb{Z}^d: 0\leq \vec{v}\cdot \vec{x}\leq L_1-1,\text{ for } j=2,\dots, d,\ 0 \leq  x_j\leq L_j-1\}
\end{align*}
The bounds on $\vec{v}\cdot \vec{x}$ can be expressed as bounds on $x_1$:
\begin{align*}
	-\sum\limits_{j=2}^d v(j)x_j \leq x_1 \leq -\sum\limits_{j=2}^d v(j)x_j + L_1 - 1
\end{align*}
Let $x'_1 = x_1 + \sum\limits_{j=2}^d v(j)x_j$.   
The normalization constants have the following product geometric sum structure for both $s=a,b$:
\begin{align} \label{geometricsum}
	&C(\Ld_{\vec{L}}, s) = \sum\limits_{x\in\Ld_{\vec{L}}} \ld_s^x\nonumber\\
	&=\sum\limits_{x_d=0}^{L_d-1}\ld_{s,d}^{2x_d} \dots \sum\limits_{x_2=0}^{L_2-1}\ld_{s,2}^{2x_2}\sum\limits_{x_1=-\sum_{j=2}^d v(j)x_j}^{L_1-1 -\sum_{j=2}^d v(j)x_j}\ld_{s,1}^{2x_1}\nonumber\\
	&=\sum\limits_{x_d=0}^{L_d-1}(\ld_{s,d}\ld_{s,1}^{-v(d)})^{2x_d} \dots \sum\limits_{x_2=0}^{L_2-1}(\ld_{s,2}\ld_{s,1}^{-v(2)})^{2x_2} \sum\limits_{x'_1=0}^{L_1-1}\ld_{s,1}^{2x'_1}\nonumber\\
	&=\sum\limits_{x_d=0}^{L_d-1}\tilde{\ld}_{s,d}^{2x_d} \dots \sum\limits_{x'_2=0}^{L_2-1}\tilde{\ld}_{s,2}^{2x'_2}\sum\limits_{x'_1=0}^{L_1-1}\tilde{\ld}_{s,1}^{2x_1}
\end{align}
This calculation shows that every sum over points in $\Ld_{\vec{L}}$ with respect to the parameters $\ld_{s,j}$ is equivalent to a sum over points in the box $[0,L_1]\times \dots [0, L_d] \cap \mathbb{Z}^d$ with respect to $\tilde{\ld}_{s,j}$.  
By the choice of volume, each $\tilde{\ld}_{s,j} \neq 1$ so each sum in the product is a nontrivial geometric sum. 

\textbf{Case 2:} There does not exist a $j\in\{1,\dots,d\}$ such that both $\ld_{a,j}\neq 1$ and $\ld_{b,j}\neq 1$.  
By condition of Theorem \ref{specgapthm}, at least one $\ld_{a,j}$ and one $\ld_{b,k}$ are not equal to one.
We permute coordinate indices so $\ld_{a,1}\neq 1$, $\ld_{b,2} \neq 1$, and $\ld_{a,2}=\ld_{b,1}=1$.
We introduce the vectors 
\begin{align*}
	&\vec{v} = \left(\frac{1}{2}, \frac{1}{2}, v(3), v(4), \dots v(d)\right)\\
	&\vec{w} = \left(-\frac{1}{2}, \frac{1}{2}, 0, 0, \dots 0\right)
\end{align*}
We define
\begin{align*} 
	&\tilde{\ld}_{a,1} := \ld_{a,1}\ld_{a,2} = \ld_{a,1} \neq 1\\
	&\tilde{\ld}_{b,1} := \ld_{b,1}\ld_{b,2} = \ld_{b,2} \neq 1\\	
	&\tilde{\ld}_{a,2} := \ld_{a,1}^{-1}\ld_{a,2} = \ld_{a,1}^{-1} \neq 1\\
	&\tilde{\ld}_{b,2} := \ld_{b,1}^{-1}\ld_{b,2} = \ld_{b,2} \neq 1
\end{align*}
and for $j=3,\dots, d$, choose $v(j) \in \mathbb{N}\cup\{0\}$ such that
\begin{align*}
	&\tilde{\ld}_{a,j} := \ld_{a,j}\tilde{\ld}_{a,1}^{-v(j)} \neq 1\\
	&\tilde{\ld}_{b,j} := \ld_{b,j}\tilde{\ld}_{b,1}^{-v(j)} \neq 1
\end{align*}
For any $L=(L_1,\dots, L_d)\in \mathbb{N}^d$, we define $\Ld_{\vec{L}}$ to be 
\begin{align*}
	\{\vec{x}\in\mathbb{Z}^d: 0\leq \vec{v}\cdot \vec{x} \leq L_1-\frac{1}{2},\ 0\leq \vec{w}\cdot\vec{x}\leq L_2-\frac{1}{2},\\
\text{ for } j=3,\dots, d,\ 0 \leq  x_j\leq L_j-1\}
\end{align*}
We partition $\Ld_{\vec{L}}$ into two sets: $E:=\{x\in\Ld_{\vec{L}} : x_1+x_2 \text{ is even}\}$ and $E':=\{x\in\Ld_{\vec{L}}: x_1+x_2 \text{ is odd}\}$.  
By the choice of $\vec{v}$ and $\vec{w}$, $E$ is the set of points where $\vec{v}\cdot\vec{x}$ and $\vec{w}\cdot\vec{x}$ are integer-valued; $E'$ is the set of points where they take half-integer values.  
Moreover, the set $E'$ is $E$ translated by $\vec{e}_2$.
We calculate $C(\Ld_{\vec{L}},s)$ by summing over all points in $E$:
\begin{align*}
	C(\Ld_{\vec{L}},s) &= \sum\limits_{x\in E} \ld_s^{2x} + \sum\limits_{x\in E'}\ld_s^{2x}\\
	&= \sum\limits_{x\in E} \ld_s^{2x} + \sum\limits_{x\in E}\ld_s^{2x+2e_2}\\
	&= (1+\ld_{s,2}^2)\sum\limits_{x\in E} \ld_s^{2x} 
\end{align*}
We introduce coordinates $z_1:= (x_1+x_2)/2$ and $z_2:=(-x_1+x_2)/2$ and note $x_1=z_1-z_2$ and $x_2 = z_1 + z_2$.  
The parameters $\ld_{s,1}$ and $\ld_{s,2}$ associated with the $x_1$ and $x_2$ are equal to the parameters $\tilde{\ld}_{s,1}$ and $\tilde{\ld}_{s,2}$ associated with the $z_1$ and $z_2$:  
\begin{align*}
	\ld_{s,1}^{2x_1}\ld_{s,2}^{2x_2} &= \left(\ld_{s,1}\ld_{s,2}\right)^{2z_1} \left(\ld_{s,1}^{-1}\ld_{s,2}\right)^{2z_2}\\
	&=\tilde{\ld}_{s,1}^{2z_1}\tilde{\ld}_{s,2}^{2z_2}
\end{align*}
The bounds on $\vec{v}\cdot \vec{x}$ and $\vec{w} \cdot \vec{x}$ can be rewritten as bounds on $z_1$ and $z_2$ over $E$:
\begin{align*}
	0\leq \vec{v}\cdot \vec{x} \leq L_1-\frac{1}{2}\\ 
	-\sum\limits_{j=3}^d v(j)x_j \leq \frac{x_1}{2}+\frac{x_2}{2} \leq -\sum\limits_{j=3}^d v(j)x_j + L_1 -  \frac{1}{2}\\
	-\sum\limits_{j=3}^d v(j)x_j \leq z_1 \leq -\sum\limits_{j=3}^d v(j)x_j + L_1 -  \frac{1}{2}\\
	\text{and}\\
	0\leq \vec{w}\cdot\vec{x}\leq L_2-\frac{1}{2}\\
	0\leq \frac{-x_1}{2} + \frac{x_2}{2} \leq L_2 - \frac{1}{2}
	0 \leq z_2 \leq  L_2 - \frac{1}{2}
\end{align*}
where $z_1$ and $z_2$ take integer values in $E$ so the $1/2$ term may be replaced by $1$ in the upper bounds.  
Let $z'_1 := z_1 + \sum_{j=3}^d v(j)x_j$.  
These sum from $0$ to $L_j$ in any sum over $E$.  
The single-species normalization constants have the following product structure for $s=a,b$:
\begin{align*}
	&C(E, s) =\sum\limits_{x\in E} \ld_s^x\\
	&= \sum\limits_{x_d=0}^{L_d-1}\ld_{s,d}^{2x_d} \dots \sum\limits_{x_3=0}^{L_3-1}\ld_{s,3}^{2x_3} \sum\limits_{z_2=0}^{L_2-1}\tilde{\ld}_{s,2}^{2z_2}\sum\limits_{z_1=-\sum_{j=3}^d v(j)x_j}^{L_1-1 -\sum_{j=3}^d v(j)x_j}\tilde{\ld}_{s,1}^{2z_1}\\
	&=\sum\limits_{x_d=0}^{L_d-1}(\ld_{s,d}\tilde{\ld}_{s,1}^{-v(d)})^{2x_d} \dots \sum\limits_{x_3=0}^{L_3-1}(\ld_{s,3}\tilde{\ld}_{s,1}^{-v(3)})^{2x_3} \sum\limits_{z'_2=0}^{L_2-1}\tilde{\ld}_{s,2}^{2z'_2}\sum\limits_{z'_1=0}^{L_1-1}\tilde{\ld}_{s,1}^{2z'_1}\\
	&=\sum\limits_{x_d=0}^{L_d-1}\tilde{\ld}_{s,d}^{2x_d} \dots \sum\limits_{x_3=0}^{L_3-1}\tilde{\ld}_{s,3}^{2x_3} \sum\limits_{z'_2=0}^{L_2-1}\tilde{\ld}_{s,2}^{2z'_2}\sum\limits_{z'_1=0}^{L_1-1}\tilde{\ld}_{s,1}^{2z_1}
\end{align*}
and it follows that 
\begin{align*}
		C(\Ld_{\vec{L}},s) = (1+\ld_{s,2}^2)\sum\limits_{x_d=0}^{L_d-1}\tilde{\ld}_{s,d}^{2x_d} \dots \sum\limits_{x_3=0}^{L_3-1}\tilde{\ld}_{s,3}^{2x_3} \sum\limits_{z'_2=0}^{L_2-1}\tilde{\ld}_{s,2}^{2z'_2}\sum\limits_{z'_1=0}^{L_1-1}\tilde{\ld}_{s,1}^{2z_1}
\end{align*}

For both cases, these volumes are parallelepipeds where the intersection of the volumes with any cross-section parallel to the $(x_1,x_2)$ plane is a rectangle in the first case and diamond-shaped in the second case.  
In both cases, the normalization constants have a product geometric sum structure
\begin{align}
	C(\Ld_{\vec{L}},s) = \kappa\prod\limits_{j=1}^d\sum\limits_{x_j=0}^{L_j-1} \tilde{\ld}_s^{2x_j}
\end{align}
where all $\tilde{\ld}_{s,j} \neq 1$ and $\kappa =1$ or $(1+\ld_{s,2}^2)$ depending on the cases.  
Note $\kappa \geq 1$ in these constructions.

We will apply the martingale method $d$ times, one for each coordinate direction, to
volumes of the form $\Ld_L := \Ld_{(L,L,\dots,L)}$.  
For each $j=1, \dots, d$, let 
$$\Ld^{(j)}_n := \Ld_{(L,\dots, L, n, \ell, \dots, \ell)}$$
where $n$ appears in the $j$-th coordinate.   
We apply the martingale method $d$ times from $j=d$ to $j=1$, each time to the sequence of volumes $\{ \Ld^{(j)}_n \}_{n=\ell}^L$. 
As a necessary condition, $L\geq \ell$, where $\ell$ must be an integer greater than two such that:
\begin{enumerate}
	\item $\ell \geq \max\{v_s(j)+1:\text{ for } s=a,b, \text{ and } j=3, \dots, d\} \label{ellconn}$
	\item $\text{ for both } s=a,b, \text{ and all } j=1, \dots, d,$
	\begin{align}
		\label{ell3} \sqrt{(60\ell)}\tilde{c}^{3/2} \max\left\{ \left(-(\ell-2)|\log\tilde{\ld}_{a,j}|\right), \left(-(\ell-2)|\log\tilde{\ld}_{b,j}|\right)\right\} < \frac{1}{\sqrt{\ell}} 
	\end{align}
\end{enumerate}
where 
\begin{align}
	\tilde{c}:=  \left(1-\prod\limits_{j=1}^d \left(\max\left\{ 1+e^{-2|\log\tilde{\ld}_{a,j}|}, 1+e^{-2|\log\tilde{\ld}_{b,j}|}\right\}\right)^{-1}\right)^{-1} \label{ctilde}
\end{align}
and let
\begin{align}\label{epsl}
	\epsilon_\ell:=  \sqrt{(60\ell)}\tilde{c}^{3/2} \max\left\{ \left(-(\ell-2)|\log\tilde{\ld}_{a,j}|\right), \left(-(\ell-2)|\log\tilde{\ld}_{b,j}|\right)\right\}
\end{align}
which is less than $1/\sqrt{\ell}$ by choice of $\ell$. 
Note $\ell-2 \geq 1/|\log\tilde{\ld}_{s,j}|$ for all $s$ and $j$.

For the $j$-th application of the martingale method, the conditions of the martingale method are satisfied as follows.  
Condition (i) of the martingale method is satisfied by $d_\ell = \ell$ by the following argument.  
Each edge $x,x+e_k$ appears in at most $\ell$ different $\Ld^{(j)}_n\bs\Ld^{(j)}_{n-\ell}$ since these volumes are translations of one another in the $j$-th coordinate direction.  
The corresponding edge projections $h_{x,x+e_j}$ in the inequality \ref{mmc1} appear at most $\ell$ times on the left side of inequality and $d_\ell$ times on the right side.  Since all the terms are projections and thus non-negative definite, the inequality 
$$\sum_{n=\ell}^L H^{\Lambda^{(j)}_n\backslash\Lambda^{(j)}_{n-\ell}} \leq d_\ell H^{\Lambda^{(j)}_N}$$ 
is satisfied.  

Condition (ii) of the martingale method is satisfied for $\gamma' = \gamma(\Ld^{(j)}_\ell)$.  
Each $\Ld^{(j)}_n\bs\Ld^{(j)}_{n-\ell}$ is a translation of $\Ld^{(j)}_\ell$.
The PVBS Hamiltonians are translation-invariant, so each $H^{\Ld^{(j)}_n\bs\Ld^{(j)}_{n-\ell}}$ is unitarily equivalent to $H^{\Ld^{(j)}_\ell}$ and thus have the same spectral gap.  
This is a finite dimensional operator and therefore has a positive lower bound on the spectral gap.  

Condition (iii) is satisfied for the the choice of $\ell$ and $\epsilon_\ell$ which satisfy the definition above.
The first condition on $\ell$ ensures the volumes $\Ld^{(j)}_{n+1}\bs\Ld^{(j)}_{n+1-\ell}$ are connected.  
This follows from inspection of the projections into the two-dimensional coordinate planes.  
The second condition on $\ell$ ensures that the bound in the following lemma satisfies conditions of the martingale method.

\begin{lemma}[Two-species Projection Bound]\label{mmc3lemma}
Let $G^{\Ld}$ be the projection onto the ground state space of $\mathcal{H}^\Ld$, denoted $\mathcal{G}^\Ld$.  
Let $E^{(j)}_n := G^{\Lambda^{(j)}_n} - G^{\Lambda^{(j)}_{n+1}}$ which is the projection onto
$\mathcal{G}_{\Lambda^{(j)}_n}\cap \mathcal{G}^\perp_{\Lambda^{(j)}_{n+1}}$.  
Suppose $\tilde{\ld}_{s,j}$ is not equal to one for $s =a,\ b$ and for all $j=1,\dots, d$.  
Suppose that $C(\Ld^{(j)}_n,s)$ is a product of geometric series of $\tilde{\ld}_s(j)$.  
Suppose $\ell$ is an integer large enough so $\Ld^{(j)}_n \bs \Ld^{(j)}_{n-\ell}$ is connected for all $n\geq \ell$ and $\ell -2$ is greater than both $(\log\tilde{\ld}_{a,j})^{-1}$ and $(\log\tilde{\ld}_{a,j})^{-1}$.
For $n\geq \ell$, we have
\begin{flalign}
	&\| G^{\Ld^{(j)}_{n+1}\bs\Ld^{(j)}_{n+1-\ell}}E^{(j)}_n\| \nonumber \\
	&\leq  \sqrt{(60\ell)}\tilde{c}^{3/2} \max\left\{ \left(-(\ell-2)|\log\tilde{\ld}_{a,j}|\right), \left(-(\ell-2)|\log\tilde{\ld}_{b,j}|\right)\right\}
\end{flalign}
\end{lemma}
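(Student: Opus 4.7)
The plan is to exploit that $E^{(j)}_n$ and $G^{\Ld^{(j)}_{n+1}\bs\Ld^{(j)}_{n+1-\ell}}$ both commute with the total $a$- and $b$-number operators, so the operator norm decomposes into a supremum over joint particle-number sectors of $\mathcal{H}^{\Ld^{(j)}_{n+1}}$. Since $\ker H^{\Ld^{(j)}_{n+1}}$ is spanned by $\Psi_0,\Psi_a,\Psi_b,\Psi_{ab}$ by Theorem \ref{fvgsthem}, only the sectors $(n_a,n_b)\in\{(0,0),(1,0),(0,1),(1,1)\}$ receive contribution from $G^{\Ld^{(j)}_{n+1}}$. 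The $(0,0)$ sector is trivial because $\Psi_0^{\Ld^{(j)}_n}\otimes\Psi_0^{V_2}=\Psi_0^{\Ld^{(j)}_{n+1}}$ makes $E^{(j)}_n$ vanish there, and $(0,1)$ mirrors $(1,0)$. I will analyze $(1,0)$ first, indicate how higher sectors $(n_a,n_b)$ with $n_a+n_b\geq 2$ reduce to the same estimate, and devote the main effort to $(1,1)$.

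Set $V_1=\Ld^{(j)}_n$, $V_2=\Ld^{(j)}_{n+1}\bs V_1$, $A=\Ld^{(j)}_{n+1}\bs\Ld^{(j)}_{n+1-\ell}$, and $B=\Ld^{(j)}_{n+1-\ell}$, so that $V_1\cap A$ is a slab of width $\ell-1$ separating $B$ from $V_2$. Frustration-freeness together with Theorem \ref{fvgsthem} gives the splitting $\Psi_s^\Ld=\sqrt{C(\Ld',s)/C(\Ld,s)}\,\Psi_s^{\Ld'}\otimes\Psi_0^{\Ld''}+\sqrt{C(\Ld'',s)/C(\Ld,s)}\,\Psi_0^{\Ld'}\otimes\Psi_s^{\Ld''}$ for any disjoint union $\Ld=\Ld'\sqcup\Ld''$. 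In the $(1,0)$ sector, any $\psi\in\mathrm{range}(E^{(j)}_n)$ takes the form $c\,\Psi_a^{V_1}\otimes\Psi_0^{V_2}+\Psi_0^{V_1}\otimes\xi$ with $\xi$ a single-$a$-particle vector on $V_2$, and the lone constraint $\langle\Psi_a^{\Ld^{(j)}_{n+1}},\psi\rangle=0$ forces $c=-\sqrt{C(V_2,a)/C(V_1,a)}\,\eta$ with $\eta=\langle\Psi_a^{V_2},\xi\rangle$. Expanding $\Psi_a^{V_1}$ across $V_1=B\sqcup(V_1\cap A)$, applying $G^A$ by projecting the $A$-factor onto $\mathcal{G}^A$, and using the identities $C(V_1,a)=C(B,a)+C(V_1\cap A,a)$ and $C(A,a)=C(V_1\cap A,a)+C(V_2,a)$ to cancel cross-terms yields the clean inequality $\|G^A\psi\|^2/\|\psi\|^2\leq C(B,a)\,C(V_2,a)/(C(V_1,a)\,C(A,a))$. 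By the product-geometric-sum structure \eqref{geometricsum}, the transverse-direction factors cancel in this ratio, leaving an elementary function of $\tilde{\ld}_{a,j}$ that a short two-case analysis ($\tilde{\ld}_{a,j}<1$ vs.\ $>1$), together with $\ell-2\geq|\log\tilde{\ld}_{a,j}|^{-1}$, bounds by a constant multiple of $e^{-2(\ell-2)|\log\tilde{\ld}_{a,j}|}$. Higher sectors $(k,0)$ with $k\geq 2$ receive non-trivial $G^A$ projection only through the single-$a$-particle piece of $\mathcal{G}^A$ (two-$a$ configurations lie outside $\ker H^A$), and the same ratio controls them.

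The $(1,1)$ sector is the main obstacle. Its intersection with $\mathrm{range}(G^{V_1})$ has dimension of order $|V_2|^2$, while its intersection with $\mathrm{range}(G^{\Ld^{(j)}_{n+1}})$ is one-dimensional, spanned by $\Psi_{ab}^{\Ld^{(j)}_{n+1}}$. This last state decomposes into four tensor products indexed by the location of each particle in $V_1$ or $V_2$, with amplitudes built from $\sqrt{C(V_1,ab)}$, $\sqrt{C(V_1,a)C(V_2,b)}$, $\sqrt{C(V_2,a)C(V_1,b)}$, and $\sqrt{C(V_2,ab)}$, all divided by $\sqrt{C(\Ld^{(j)}_{n+1},ab)}$, and the diagonal correction $D(\Ld^{(j)}_{n+1})$ of \eqref{DLd} enters through $C(\Ld,ab)=C(\Ld,a)C(\Ld,b)-D(\Ld)$. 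I parameterize $\psi\in\mathrm{range}(E^{(j)}_n)$ in the natural basis $\{\Psi_\alpha^{V_1}\otimes\varphi_\alpha\}_{\alpha\in\{0,a,b,ab\}}$ subject to orthogonality to $\Psi_{ab}^{\Ld^{(j)}_{n+1}}$, expand on $\mathcal{H}^B\otimes\mathcal{H}^{V_1\cap A}\otimes\mathcal{H}^{V_2}$, and apply $G^A$ by projecting onto each of the four ground-state components of $\mathcal{G}^A$ in turn. After bookkeeping parallel to the $(1,0)$ case, the Rayleigh quotient is controlled by a sum of the single-species ratios $C(B,s)C(V_2,s)/(C(V_1,s)C(A,s))$ for $s=a,b$, together with an $ab$-type contribution handled via the bound $D(\Ld)\leq C(\Ld,a)C(\Ld,b)$ that cancels analogously.

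Summing the sector bounds yields the stated inequality. Transverse-direction ratios of normalization constants not already absorbed by the cancellation above are controlled by the constant $\tilde{c}$ of \eqref{ctilde}, producing the $\tilde{c}^{3/2}$ prefactor; the combinatorial $\sqrt{60\ell}$ reflects the dimension count of the $(1,1)$-sector contribution and the triangle inequality used to combine the $s=a$ and $s=b$ bounds across all active sectors.
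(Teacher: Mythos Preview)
Your decomposition by joint particle-number sectors and your treatment of the $(1,0)$ sector are essentially what the paper does, and your clean ratio $C(B,a)\,C(V_2,a)/(C(V_1,a)\,C(A,a))$ matches the paper's bound exactly. The overall architecture is right.

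The gap is in the $(1,1)$ sector. You assert that after bookkeeping the Rayleigh quotient is controlled by the single-species ratios together with an $ab$-type contribution ``handled via the bound $D(\Ld)\leq C(\Ld,a)C(\Ld,b)$ that cancels analogously.'' This is not enough. After substituting the orthogonality constraint for $c_{ab,0}$ and expanding, one of the surviving terms (attached to $|c_{b,a}|^2$, and symmetrically to $|c_{a,b}|^2$) is, up to constants,
\[
\frac{C(V_2,a)}{C(A,a)}\cdot\frac{C(B,b)}{C(V_1,b)}\cdot\frac{D^2(V_1)}{C^2(V_1,a)\,C^2(V_1,b)}.
\]
In the regime $\tilde{\ld}_{a,j}>1$ and $\tilde{\ld}_{b,j}<1$ all three factors are merely bounded by $1$ under the inequalities you invoke: the first by $C(V_2,a)\leq C(A,a)$, the second by $C(B,b)\leq C(V_1,b)$, and the third by your trivial bound $D\leq C_aC_b$. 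No exponential smallness comes out, so your argument stalls precisely when the two species have opposite preferred directions along coordinate $j$. The paper closes this case with a separate Diagonal Bound (Lemma~\ref{diagonalbound}): when $\log\tilde{\ld}_{a,j}$ and $\log\tilde{\ld}_{b,j}$ have opposite signs one has $D(\Ld^{(j)}_{n})/(C(\Ld^{(j)}_{n},a)C(\Ld^{(j)}_{n},b))\leq n\,e^{-2(n-1)\max\{|\log\tilde{\ld}_{a,j}|,|\log\tilde{\ld}_{b,j}|\}}$, and this is what produces the missing decay. The linear prefactor $\ell$ in $\sqrt{60\ell}$ is not combinatorial as you suggest; it is exactly the $n$ (or $n-m$) appearing in this diagonal estimate applied on the slab of width $\ell-1$.

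A second, smaller issue: your description of $\tilde{c}$ as controlling ``transverse-direction ratios'' is off. The constant $\tilde{c}$ of \eqref{ctilde} enters via Lemma~\ref{productbounds}, which bounds $C(\Ld,a)C(\Ld,b)\leq\tilde{c}\,C(\Ld,ab)$ and is used to replace $1/C(\Ld,ab)$ in denominators by $\tilde{c}/(C(\Ld,a)C(\Ld,b))$; it is a diagonal-correction constant, not a transverse one.
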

The proof of this lemma is section 5.  
By choice of $\ell$, we have
\begin{align}
	\|G^{\Ld^{(j)}_{n+1}\bs\Ld^{(j)}_{n+1-\ell}}E^{(j)}_n\| \leq \epsilon_\ell < \frac{1}{\sqrt{\ell}}
\end{align}
and condition (iii) of the martingale method is satisfied.
With the conditions of the martingale method satisfied, the spectral gap $\gamma(\Ld^{(j)}_{L})$ is bounded below by 
\begin{align*}
	\gamma(\Ld^{(j)}_{L}) \geq \gamma(\Ld^{(j)}_\ell) \frac{\left(1-\epsilon_\ell\sqrt{\ell}\right)^2}{\ell}
\end{align*}

By the definition of the volumes, we have $\Ld_L = \Ld^{(d)}_L$ and $\Ld^{(j)}_\ell = \Ld^{(j-1)}_L$.  
Substituting the spectral gap bounds for each coordinate directions, we have
\begin{align*}
	\gamma(\Ld_L) &=\gamma(\Ld^{(d)}_L)\geq \gamma(\Ld^{(d)}_\ell) \frac{\left(1-\epsilon_\ell\sqrt{\ell}\right)^2}{\ell}\\
	&=\gamma(\Ld^{(d-1)}_L) \frac{\left(1-\epsilon_\ell\sqrt{\ell}\right)^2}{\ell}\geq\gamma(\Ld^{(d-1)}_\ell) \left(\frac{\left(1-\epsilon_\ell\sqrt{\ell}\right)^2}{\ell}\right)^2\\
	&\dots \\
	&=\gamma(\Ld^{(j)}_L) \left(\frac{\left(1-\epsilon_\ell\sqrt{\ell}\right)^2}{\ell}\right)^{(d-j)}\geq\gamma(\Ld^{(j)}_\ell) \left(\frac{\left(1-\epsilon_\ell\sqrt{\ell}\right)^2}{\ell}\right)^{(d-j+1)}\\
	&=\gamma(\Ld^{j-1}_L)\left(\frac{\left(1-\epsilon_\ell\sqrt{\ell}\right)^2}{\ell}\right)^{(d-j+1)}\geq\gamma(\Ld^{(j-1)}_\ell) \left(\frac{\left(1-\epsilon_\ell\sqrt{\ell}\right)^2}{\ell}\right)^{(d-j+2)}\\
	&\dots \\
	&=\gamma(\Ld_L^{(1)})\left(\frac{(1-\epsilon_\ell\sqrt{\ell})^2}{\ell}\right)^{d-1} \geq\gamma(\Ld^{(1)}_\ell) \left(\frac{\left(1-\epsilon_\ell\sqrt{\ell}\right)^2}{\ell}\right)^d
\end{align*}
This last bound is strictly positive and independent of $L$.  
By translational invariance, this bound also holds for the PVBS Hamiltonian defined over the volume
$$\{\vec{x}\in\mathbb{Z}^d: -L\leq \vec{v}\cdot \vec{x}\leq L, -L \leq \vec{w}\cdot \vec{x} \leq L,\text{ for } j=3,\dots, d,\ -L \leq x_j\leq L\}$$
This sequence converges to $\mathbb{Z}^d$ as $L\to\infty$.
By the Theorem \ref{thm:GNSSpecGap}, this positive lower bound also bounds the spectral gap of $H^{\mathbb{Z}^d}$.
Therefore, $\gamma(\mathbb{Z}^d) > 0$ and $H^{\mathbb{Z}^d}$ is gapped. \hfill $\Box$

%Improvements for each coordinate direction
%As $\log\ld_s$ gets closer to $\vec{0}$, then the gap does not close in a clear way; larger $\ell$ are used.  The local gaps $\gamma'=\gamma(\Ld_\ell)$ get smaller as $\ell$ increases, but this relation is unclear.
%Angle comments? Different volumes

\section{Ground State Space Proofs}\label{groundstates}
%Proof, then comments
\begin{proof}[Structure of Finite Volume Ground State Space]:
The proof follows the structure of the proof of proposition 2.1 in \cite{BHNY2015}.  
%The proof verifies that the interchange and exclusion interactions between particle species does not affect that expected ground state space.  
 
Let $\Ld$ be a bounded, connected, finite subset of $\mathbb{Z}^d$.   
The ground state space is the kernel of $H^\Ld$ because it is the sum of non-negative edge projections.  
A vector is in the kernel of $H^\Ld$ if and only if it is in the kernel of $h_{x,x+e_j}$ for every edge $x,x+e_j$ in $\Ld$.  
The ground state space can be decomposed into subspaces where the number $a$ and $b$ species particles are fixed since the Hamiltonian preserves those numbers. 

\textbf{(i)} The subspace with no particles is spanned by the finite volume vacuum state $\Psi_0^\Ld$.  Each edge operator $h_{x,x+e_j}$ sends this vector to zero.  Therefore, it is in the ground state space.

\textbf{(ii)} The subspace with exactly one particle of one species and no particle of the other species is unitarily equivalent to the single-species ground state; the proof from \cite{BHNY2015} is repeated for presentation.  

Consider the subspace with only one particle of either species and none of the other.
Without loss of generality, suppose it is species $a$.  
This subspace of the Hilbert space is spanned by vectors of the form 
\begin{align*}
	\Psi = \sum\limits_{x\in\Ld} c_x \ket{a}_x
\end{align*}
For each edge $x, x+e_j$ in $\Ld$, $\Psi$ is in the kernel of the edge projection if and only if the two terms $c_x\ket{a}_x + c_{x+e_j}\ket{a}_{x+e_j}$, i.e. $c_x\ket{a,0} + c_{x+e_j}\ket{0,a}$, is a constant multiple of 
$\ket{a,0} + \ld_{a,j}\ket{0,a}$.  
This holds if and only if
\begin{align}\label{stepa}
	 c_{x+e_j}= \ld_{a,j} c_x 
\end{align}
Suppose $c_x$ is determined at a specified $x$.  
For any other $x' \in \Ld$, there exists a path of adjacent vertices $p_0, p_1, \dots, p_{T-1}, p_T$ that connects $x$ to $x'$ in $\Ld$ with $x=p_0$ and $x'=p_T$.  
Equation \eqref{stepa} holds for each edge connecting adjacent vertices $p_{k-1},p_k$:
\begin{align*}
	c_{p_k} = \ld_a^{p_k-p_{k-1}}c_{p_{k-1}}
\end{align*}
where $p_k-p_{k-1} =\pm e_j$ for some coordinate direction $j$.  
Combining these equations along the path from $x$ to $x'$, the value of $c_{x'}$ is determined by $c_x$:
\begin{align*}
	c_{x'}= \ld_a^{x'-x} c_x .
\end{align*}
Therefore, the ground state space in the $a$-particle subspace is one dimensional.  
The normalized ground state vector in the $a$-particle subspace is $\Psi_a^\Ld$.  
By interchanging $b$ for $a$ in the argument above, it follows that $\Psi_b^\Ld$ is the normalized ground state vector in the $b$-particle subspace.

\textbf{(iii)} The subspace with exactly one particle of each species is spanned by vectors of the form
\begin{align*}
	\Psi = \sum\limits_{\substack{x,y\in\Lambda\\ y\neq x}} c_{x,y} \ket{a}_x\otimes\ket{b}_y
\end{align*}
For each edge in $\Ld$, $\Psi$ is in the kernel of the edge projection if and only if it is a constant multiple of 
$$\ket{0,0},\ 
\ket{a,0} + \ld_{a,j}\ket{0,a},\  
\ket{b,0} + \ld_{b,j}\ket{0,b} ,\  
\ld_{b,j}\ket{a,b} + \ld_{a,j}\ket{b,a}$$
which is equivalent to $c_{x,y}$ satisfying
\begin{align}
	&c_{x+e_j,y} = \ld_{a,j}c_{x,y} & \text{ if } y\neq x \text{ or } x+e_j \label{ahop}\\
	&c_{x,y+e_j} = \ld_{b,j}c_{x,y} & \text{ if } x\neq y \text{ or } y+e_j\label{bhop}\\
	&c_{x,x+e_j} = \frac{\ld_{b,j}}{\ld_{a,j}} c_{x+e_j,x} & \text{ if $x$ and $y$ are adjacent.}\label{abhop} 
\end{align}
The choice $c_{x,y} = \ld_a^x\ld_b^y$ satisfies these conditions.  

All other solutions are a constant multiple of this solution by the following argument.  
Suppose $x,y,x',y'$ are points in $\Ld$ such that $x\neq y$ and $x' \neq y'$.
The volume is connected so there exists a path from $x$ to $x'$ in $\Ld$, $p_0, p_1, \dots, p_T$ where $p_0=x$ and $p_T=x'$.  
For each edge along the path, equations \eqref{ahop} or \eqref{abhop} hold with the latter occurring if $y$ is on the path.
If $y$ is on the path, then equation \eqref{abhop} is satisfied for the edge that first has $y$ as an end, $y=p_{k'}$.  
In this case, let $y'':=p_{k'-1}$, the vertex on the other end.  
If $y$ is not on the path, we let $y'':=y$.
We have 
$$c_{x',y''} = \ld_a^{x'-x}\ld_b^{y''-y}c_{x,y}$$
There also exists a path from $y''$ to $y$.  
For each edge along the path, equations \eqref{bhop} or \eqref{abhop} hold.
If $x'$ is on the path, then equation \eqref{abhop} is satisfied for the edge that first has $x'$ as an end, let $x''$ denote the other end.  
If not, let $x''=x'$.
We have
$$c_{x'',y'} = \ld_a^{x''-x'}\ld_b^{y'-y''}c_{x',y''}$$
To satisfy equation \eqref{ahop} for the edge joining $x'$ and $x''$, 
$$c_{x',y'} = \ld_a^{x'-x''}c_{x'',y'}$$
Combining these equations, we have
$$c_{x',y'} = \ld_a^{x'-x}\ld_b^{y'-y} c_{x,y}$$
and thus all $c_{x',y'}$ are determined by the value of $c_{x,y}$.
Therefore, there is the ground state space restricted to this is subspace is one dimensional and spanned by $\Psi_{ab}^\Ld$.

\textbf{(iv)} The particle space with two or more particles of a specific species is spanned by vectors of the form:
$$ \Psi = \sum\limits_{A,B \subset \Ld, \substack{A\cap B = \emptyset}} c_{A,B}\bigotimes\limits_{x\in A}\ket{a}_x \otimes \bigotimes\limits_{y\in B} \ket{b}_y $$
where $A$ is the set of sites occupied by a particle of species $a$ and $B$ is the set of sites occupied by a particle of species $b$.  
The only vector $\Psi$ in this subspace which are also in the ground state space is the zero vector, that is, $c_{A,B}$ is zero whenever $|A|\geq 2$ or $|B|\geq 2$.

Suppose that $|A|\geq 2$ or $|B|\geq 2$ and fix them.  
For each set with more than one element, there exist a pair of distinct points closest to each other in graph distance.  
We choose the pair of points closest to each other; without loss of generality, assume these points are $x$ and $x'$ in $A$.  
If $x$ and $x'$ are connected by an edge, then the term of $\Psi$ corresponding to $A,B$ can be rewritten as 
\begin{align}
	c_{A,B} \ket{a,a}_{x,x'} \otimes\bigotimes\limits_{x''\in A\bs\{x,x'\}}\ket{a}_{x''} \otimes \bigotimes\limits_{y\in B} \ket{b}_y  
\end{align} 
The $x,x'$ edge projection acts as identity on this term:
\begin{align*}
	h_{x,x'}&c_{A,B} \ket{a,a}_{x,x'} \otimes\bigotimes\limits_{x''\in A\bs\{x,x'\}}\ket{a}_{x''} \otimes \bigotimes\limits_{y\in B} \ket{b}_y  \\
	&= \ket{a,a}\bra{a,a}_{x,x'} c_{A,B} \ket{a,a}_{x,x'} \otimes\bigotimes\limits_{x''\in A\bs\{x,x'\}}\ket{a}_{x''} \otimes \bigotimes\limits_{y\in B} \ket{b}_y \\
	&=  c_{A,B} \ket{a,a}_{x,x'} \otimes\bigotimes\limits_{x''\in A\bs\{x,x'\}}\ket{a}_{x''} \otimes \bigotimes\limits_{y\in B} \ket{b}_y
\end{align*} 
The vector $\Psi$ is in the ground state space if and only if this term is zero, that is, $c_{A,B} = 0$.

If $x$ and $x'$ are not adjacent,  there exists a path from $x$ to $x'$, $p_0, p_1, p_2, \dots, p_T$.  
There is at most one $y$ in $B$ along this path by our assumption that $x$ and $x'$ are the closest pair of points which are occupied by the same particle species.  
We define a finite sequence of sets indexed from $0$ to $T-1$. 
Let $A(0) :=A$, $B(0):=B$, and for $j=1, \dots, T-1$, 
$$A(k) := A(k-1) \cup \{p_k\}\bs\{p_{k-1}\}$$
\[ B(k) := \begin{cases} 
      B(k-1) & \text{ if } y \neq p_k \\
      B(k-1) \cup \{p_{k-1}\} \bs\{p_k\} & \text{ if } y=p_k \end{cases}
\]
These sets track the locations of $a$ and $b$ particles as we apply the ground state conditions associated with the edges in the path.  
In essence, each edge projection effectively moves the $a$ particle along the path and exchanges it with the $b$ particle when it encounters it.  
If $y$ is on the path at $p_{k'}$, let $y'':= p_{k'-1}$; otherwise, let $y'':= y$.
For $\Psi$ to be in the ground state space, the $c_{A(k),B(k)}$ must satisfy equations \eqref{ahop}, \eqref{bhop}, and \eqref{abhop}.  
For the sequence of sets above, this requires
\[ c_{A(k),B(k)} = \begin{cases} 
      \ld_a^{p_k-p_{k-1}} c_{A(k-1),B(k-1)} & \text{ if } y \neq p_k \\
      \ld_a^{p_k-p_{k-1}} \ld_b^{p_{k-1}- p_k} c_{A(k-1),B(k-1)} & \text{ if } y=p_k \end{cases}
\]
and combining these equations gives  
$$ c_{A(T-1), B(T-1)} = \ld_a^{p_{T-1}-x}\ld_b^{y''-y} c_{A,B}$$
The set $A(T-1)$ contains adjacent points $p_{T-1}$ and $x'$.
For $\Psi$ to be in the ground state space, this $c_{A(T-1),B(T-1)} = 0$ as before when $x$ and $x'$ were adjacent.  
It follows that $c_{A,B} = 0$.
The only solution for a vector with two or more particles of a single species is the zero vector.  
Therefore, there is no nonzero ground state vector in the subspace with two or more particles of a single species.
\end{proof}

\section{Proof of Lemma 1 }\label{prooflemma1}

This section proves Lemma \eqref{mmc3lemma} which is necessary to prove condition (3) of the martingale method.  
Restated, when $\tilde{\ld}_s(j)$ is not equal to one for $s =a,\ b$ and $C(\Ld^{(j)}_n,s)$ is a product of geometric series of $\tilde{\ld}_s(j)$ and $\ell-2 \geq \max\{(\log\tilde{\ld}_{s,j})^{-1}: s=a,b,\ j=1,\dots,d \}$, then
\begin{flalign}
	&\| G^{\Ld^{(j)}_{n+1}\bs\Ld^{(j)}_{n+1-\ell}}E^{(j)}_n\| \nonumber \\
	&\leq  \sqrt{(60\ell)}\tilde{c}^{3/2} \max\left\{ \left(-(\ell-2)|\log\tilde{\ld}_{a,j}|\right), \left(-(\ell-2)|\log\tilde{\ld}_{b,j}|\right)\right\}
\end{flalign}
where $G^\Ld$ is the projection onto the ground state space in $\mathcal{H}^\Ld$ and $E_n^{(j)}:= G^{\Ld^{(j)}_{n}} -G^{\Ld^{(j)}_{n+1}}$.  

This section is divided up as follows.
In subsection \ref{NCRB}, we prove ratios of normalization constants and diagonal terms are bounded or (nearly) exponentially small in $\ell$ and $n$.
In subsection \ref{projonb}, we find an operator norm bound on $\|G^{\Ld_{n+1}\bs\Ld_{n+1-\ell}}E_n\|^2$ for the two-species PVBS model for general sequences of volumes $\{ \Ld\}_{n=0}^{N}$.  For each subspace, we apply the bounds in \ref{NCRB} to bound the operator norms on subspaces.
The maximum of these operator norms obtains the bound above. 

\subsection{Normalization Constant Ratio Bounds} \label{NCRB}

An intermediate step to proving the bounds in Lemma \eqref{mmc3lemma} is proving that ratios of normalization constants over various subsets of $\Ld$ have either finite bounds or exponentially small bounds in $\ell$ or $n$.

By construction, the $C(\Ld^{(j)}_{n},s)$ have the form 
\begin{align}
	C(\Ld^{(j)}_{n},s) = \kappa\left(\prod\limits_{k=1}^{j-1}\sum\limits_{x_k=0}^{L-1} \tilde{\ld}_{s,k}^{2x_k}\right)\left(\sum\limits_{x_j=0}^{n-1} \tilde{\ld}_{s,j}^{2x_j}\right)\left(\prod\limits_{k=j+1}^d\sum\limits_{x_k=0}^{\ell-1} \tilde{\ld}_{s,k}^{2x_k}\right)
\end{align}
where $s =a \text{ or } b$, all $\tilde{\ld}_{s,j}$ are positive and not equal to one, and $\kappa =1$ or $\kappa = (1+\ld_s(2)^2) \geq 1$, depending on the volume constructed.  
The $\kappa$ does not meaningfully affect the bounds.
The diagonal term $D(\Ld) := C(\Ld,a)C(\Ld,b) - C(\Ld,ab)$ has a similar structure:
\begin{align}
	D(\Ld^{(j)}_{n}) = \kappa\left(\prod\limits_{k=1}^{j-1}\sum\limits_{x_k=0}^{L-1} \left(\tilde{\ld}_{a,k}\tilde{\ld}_{b,k}\right)^{2x_k}\right)\left(\sum\limits_{x_j=0}^{n-1} \left(\tilde{\ld}_{a,j}\tilde{\ld}_{b,j}\right)^{2x_j}\right)\nonumber \\ \cdot\left(\prod\limits_{k=j+1}^d\sum\limits_{x_k=0}^{\ell-1} \left(\tilde{\ld}_{a,k}\tilde{\ld}_{b,k}\right)^{2x_k}\right)
\end{align}
This structure also applies to $\Ld^{(j)}_n\bs\Ld^{(j)}_m$:
\begin{align*}
	C(\Ld^{(j)}_n\bs\Ld^{(j)}_m,s) = \kappa\left(\prod\limits_{k=1}^{j-1}\sum\limits_{x_k=0}^{L-1} \tilde{\ld}_{s,k}^{2x_k}\right)\left(\sum\limits_{x_j=m}^{n-1} \tilde{\ld}_{s,j}^{2x_j}\right)\left(\prod\limits_{k=j+1}^d\sum\limits_{x_k=0}^{\ell-1} \tilde{\ld}_{s,k}^{2x_k}\right)\\
		D(\Ld^{(j)}_n\bs\Ld^{(j)}_m) = \kappa\left(\prod\limits_{k=1}^{j-1}\sum\limits_{x_k=0}^{L-1} \left(\tilde{\ld}_{a,k}\tilde{\ld}_{b,k}\right)^{2x_k}\right)\left(\sum\limits_{x_j=m}^{n-1} \left(\tilde{\ld}_{a,j}\tilde{\ld}_{b,j}\right)^{2x_j}\right)\nonumber \\ \cdot\left(\prod\limits_{k=j+1}^d\sum\limits_{x_k=0}^{\ell-1} \left(\tilde{\ld}_{a,k}\tilde{\ld}_{b,k}\right)^{2x_k}\right)
\end{align*}
Any result for $C(\Ld^{(j)}_n\bs\Ld^{(j)}_m,s)$ applies for $C(\Ld^{(j)}_n,s)$ because $\Ld^{(j)}_{0} = \emptyset$ and thus $\Ld^{(j)}_n = \Ld^{(j)}_n\bs\Ld^{(j)}_{0}$.

The bounds that are exponential in $\ell$ can be expressed in a single form for both when $\ld_{s,j}$ is greater than one and when it is less than one, we note 
if $\ld >1$, then $\log \ld = |\log \ld|$ and it follows $$\ld^{-\ell} = \exp(-\ell\log\ld) = \exp(-\ell|\log\ld|);$$
and if $\ld <1$, then $\log\ld = -|\log\ld|$ and $$\ld^{\ell} = \exp(\ell\log\ld) = \exp(-\ell|\log\ld|).$$
In both cases, the expression+ decays exponentially in $\ell$.

%The two-species normalization constants $C(\Ld,ab)$ are sufficiently approximated by the products $C(\Ld,a)C(\Ld,b)$.  
%They are bounded above by the product of the single-species normalization constants $C(\Ld,a)C(\Ld,b)$.
%The following lemma proves a lower bound.

\begin{lemma}[Product Bounds]\label{productbounds}
Suppose $\Ld^{(j)}_n,\ \Ld^{(j)}_n\bs\Ld^{(j)}_m$ and $C(\Ld^{(j)}_n,s),\ C(\Ld^{(j)}_n\bs\Ld^{(j)}_m,s)$ have the product geometric sum structure described above.  
Suppose $L \geq \ell\geq 2, n\geq 2$ or $n-m\geq2$.  
If all $\tilde{\ld}_{s,k} \neq 1$ for all $s=a,b$ and $k=1,\dots,d$, then 
\begin{align} 
	C(\Ld^{(j)}_n\bs\Ld^{(j)}_m,ab) \leq C(\Ld^{(j)}_n\bs\Ld^{(j)}_m,a)C(\Ld^{(j)}_n\bs\Ld^{(j)}_m,b)\\
	C(\Ld^{(j)}_n\bs\Ld^{(j)}_m,a)C(\Ld^{(j)}_n\bs\Ld^{(j)}_m,b) \leq \tilde{c} C(\Ld^{(j)}_n\bs\Ld^{(j)}_m,ab) 
\end{align}
where
\begin{align}
	\tilde{c}:=  \left(1-\prod\limits_{k=1}^d \left(\max\left\{ 1+e^{-2|\log\tilde{\ld}_{a,k}|}, 1+e^{-2|\log\tilde{\ld}_{b,k}|}\right\}\right)^{-1}\right)^{-1}
\end{align}
which is greater than 1. 
\end{lemma}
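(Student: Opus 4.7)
Expanding as a double sum and splitting into diagonal and off-diagonal parts gives the identity $C(\Ld,a)C(\Ld,b) = C(\Ld,ab) + D(\Ld)$, which is valid on any subvolume (in particular on $\Ld^{(j)}_n\bs\Ld^{(j)}_m$). Since each summand of $D$ is positive, the first inequality follows immediately. Using the same identity, the second inequality $C(\Ld,a)C(\Ld,b) \leq \tilde c\,C(\Ld,ab)$ rearranges to
\begin{align*}
\frac{D(\Ld^{(j)}_n\bs\Ld^{(j)}_m)}{C(\Ld^{(j)}_n\bs\Ld^{(j)}_m,a)\,C(\Ld^{(j)}_n\bs\Ld^{(j)}_m,b)} \;\leq\; 1 - \frac{1}{\tilde c} \;=\; \prod_{k=1}^d \frac{1}{\max\{1+e^{-2|\log\tilde\ld_{a,k}|},\,1+e^{-2|\log\tilde\ld_{b,k}|}\}}.
\end{align*}

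By the stated product-geometric-sum structure, the left-hand ratio factorizes across coordinate directions $k=1,\dots,d$ (the common prefactor $\kappa \geq 1$ appears once in the numerator and twice in the denominator of the $D/CC$ ratio, so it only tightens the bound), reducing the lemma to the one-dimensional estimate
\begin{align*}
\frac{\sum_{x=0}^{N-1}(\alpha\beta)^x}{\bigl(\sum_{x=0}^{N-1}\alpha^x\bigr)\bigl(\sum_{x=0}^{N-1}\beta^x\bigr)} \;\leq\; \frac{1}{1+\max(\alpha_\wedge,\beta_\wedge)}
\end{align*}
for each $k$, where $\alpha := \tilde\ld_{a,k}^2$, $\beta := \tilde\ld_{b,k}^2$ (both $\neq 1$), $\alpha_\wedge := \min(\alpha,1/\alpha) = e^{-2|\log\tilde\ld_{a,k}|}$ (and analogously for $\beta_\wedge$), and $N \geq 2$ (guaranteed by $\ell \geq 2$, $L \geq \ell$, and the $n\geq 2$ or $n-m\geq 2$ hypothesis). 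By the $\alpha\leftrightarrow\beta$ symmetry of the ratio it suffices to prove the 1D claim with the right-hand side replaced by $1/(1+\alpha_\wedge)$; cross-multiplying and peeling off the diagonal terms, this is equivalent to the off-diagonal lower bound $\sum_{x\neq y}\alpha^x\beta^y \geq \alpha_\wedge\sum_{x=0}^{N-1}(\alpha\beta)^x$.

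The plan for proving this off-diagonal bound is to split on the sign of $\log\alpha$ and use an elementary ``covering'' of the shifted sum by off-diagonal pairs. If $\alpha < 1$, then $\alpha_\wedge = \alpha$ and the right-hand side equals $\sum_{x=0}^{N-1}\alpha^{x+1}\beta^x$: the terms for $x = 0,\dots,N-2$ are off-diagonal pairs $(x{+}1,x)$, while the leftover $x=N-1$ term $\alpha^N\beta^{N-1}$ is absorbed by the unused off-diagonal term $\alpha^{N-2}\beta^{N-1}$ (pair $(N-2,N-1)$) since $\alpha^2 \leq 1$. If $\alpha > 1$, then $\alpha_\wedge = 1/\alpha$ and the right-hand side equals $\sum_{x=0}^{N-1}\alpha^{x-1}\beta^x$: the terms for $x = 1,\dots,N-1$ are off-diagonal pairs $(x{-}1,x)$, while the leftover $x=0$ term $\alpha^{-1}$ is absorbed by the unused off-diagonal term $\alpha^1\beta^0$ (pair $(1,0)$, which exists because $N \geq 2$) since $\alpha > 1$. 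The main technical obstacle is this boundary bookkeeping: ensuring in each case that a distinct off-diagonal pair of the correct relative ordering is available to absorb the single leftover term, which is precisely why the hypothesis $N \geq 2$ enters. Running the identical argument with the roles of $\alpha$ and $\beta$ swapped gives the companion bound $\sum_{x\neq y}\alpha^x\beta^y \geq \beta_\wedge\sum(\alpha\beta)^x$, and combining the two yields the bound with $\max(\alpha_\wedge,\beta_\wedge)$, completing the proof.
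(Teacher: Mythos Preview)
Your proof is correct and shares the paper's overall scaffolding: both use the identity $C(\Ld,a)C(\Ld,b)=C(\Ld,ab)+D(\Ld)$ for the first inequality, and both reduce the second inequality to the coordinate-wise bound on $D/(C_aC_b)$ via the product geometric-sum structure (handling the prefactor $\kappa\geq 1$ the same way). The genuine difference is in how the one-dimensional estimate is established. The paper bounds each factor $\tilde\ld_{a,k}^{2x}$ in the numerator by its maximum over $x$ (i.e.\ by $\tilde\ld_{a,k}^{2(L_k-1)}$ when $\tilde\ld_{a,k}>1$, or by $1$ when $\tilde\ld_{a,k}<1$), cancels the $b$-sum, and then evaluates the remaining ratio explicitly with the geometric-series formula together with $L_k\geq 2$. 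Your route is instead to rewrite the target as the off-diagonal lower bound $\sum_{x\neq y}\alpha^x\beta^y \geq \alpha_\wedge\sum_x(\alpha\beta)^x$ and verify it by a direct ``covering'' of the shifted diagonal by distinct off-diagonal pairs, with one leftover boundary term absorbed by an unused pair (which is precisely where $N\geq 2$ enters). Your argument is a bit more combinatorial and avoids any closed-form geometric sums; the paper's argument is more computational but makes the role of the geometric structure and of the hypothesis $L_k\geq 2$ equally transparent. Both yield exactly the same per-coordinate constant $(1+e^{-2|\log\tilde\ld_{s,k}|})^{-1}$, and hence the same $\tilde c$.
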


\textbf{Remark:} These bounds will applied to split $C(\Ld,ab)$ in the following forms:
\begin{align}
	C(\Ld,ab) \leq C(\Ld,a)C(\Ld,b)\\
	\frac{1}{C(\Ld,ab)} \leq \frac{\tilde{c}}{C(\Ld,a)C(\Ld,b)}
\end{align}
the former for $C(\Ld,ab)$ which appear in numerators and latter for $C(\Ld,ab)$ which appear in the denominator.

\begin{proof}
The first bound follows from the fact that in $C(\Ld,ab) = C(\Ld,a)C(\Ld,b) - D(\Ld)$ and $D(\Ld)\geq 0$.\\

For the second bound, we decompose into a product of ratios of sums:
\begin{align*}
\frac{D(\Ld^{(j)}_n\bs\Ld^{(j)}_m)}{C(\Ld^{(j)}_n\bs\Ld^{(j)}_m,a)C(\Ld^{(j)}_n\bs\Ld^{(j)}_m,b)} \leq \left(\prod\limits_{k=1}^{j-1}\frac{\sum\limits_{x_k=0}^{L-1} \left(\tilde{\ld}_{a,k}\tilde{\ld}_{b,k}\right)^{2x_k}}{ \sum\limits_{x_k=0}^{L-1} \tilde{\ld}^{2x_k}_{a,k} \sum\limits_{x_k=0}^{L-1}\tilde{\ld}^{2x_k}_{b,k}}\right)\\
	 \cdot\left(\frac{\sum\limits_{x_j=m}^{n-1} \left(\tilde{\ld}_{a,j}\tilde{\ld}_{b,j}\right)^{2x_j}}{\sum\limits_{x_j=m}^{n-1} \tilde{\ld}^{2x_j}_{a,j}\sum\limits_{x_j=m}^{n-1}\tilde{\ld}^{2x_j}_{b,j}}\right)
	\cdot\left(\prod\limits_{k=j+1}^d\frac{\sum\limits_{x_k=0}^{\ell-1} \left(\tilde{\ld}_{a,k}\tilde{\ld}_{b,k}\right)^{2x_k}}{\sum\limits_{x_k=0}^{\ell-1} \tilde{\ld}^{2x_k}_{a,k}\sum\limits_{x_k=0}^{\ell-1}\tilde{\ld}^{2x_k}_{b,k}}\right)\\
	=\left(\prod\limits_{k=1}^{j-1}\frac{\sum\limits_{x_k=0}^{L-1} \left(\tilde{\ld}_{a,k}\tilde{\ld}_{b,k}\right)^{2x_k}}{ \sum\limits_{x_k=0}^{L-1} \tilde{\ld}^{2x_k}_{a,k} \sum\limits_{x_k=0}^{L-1}\tilde{\ld}^{2x_k}_{b,k}}\right)\\
	 \cdot\left(\frac{\sum\limits_{x_j=0}^{n-m-1} \left(\tilde{\ld}_{a,j}\tilde{\ld}_{b,j}\right)^{2x_j}}{\sum\limits_{x_j=0}^{n-m-1} \tilde{\ld}^{2x_j}_{a,j}\sum\limits_{x_j=0}^{n-m-1}\tilde{\ld}^{2x_j}_{b,j}}\right)
	\cdot\left(\prod\limits_{k=j+1}^d\frac{\sum\limits_{x_k=0}^{\ell-1} \left(\tilde{\ld}_{a,k}\tilde{\ld}_{b,k}\right)^{2x_k}}{\sum\limits_{x_k=0}^{\ell-1} \tilde{\ld}^{2x_k}_{a,k}\sum\limits_{x_k=0}^{\ell-1}\tilde{\ld}^{2x_k}_{b,k}}\right)\\
\end{align*} 
where $L_j = n-m$ and the extra $\kappa$ in the denominator was dropped since it is greater than 1.
We bound each of the ratios of geometric sums.
We will use $L_k$ for $L,\ n, \text{ or }, \ell$.  
In all cases, $L_k \geq \ell \geq 2$.  \\
\textit{Case 1: Suppose $\tilde{\ld}_{a,k} >1$.}  Then 
\begin{align*}
	\frac{\sum\limits_{x_k=0}^{L_k-1} \tilde{\ld}_{a,k}^{2x_k}\tilde{\ld}_{b,k}^{2x_k}}{\left(\sum\limits_{x_k=0}^{L_k-1} \tilde{\ld}_{a,k}^{2x_k}\right)\left(\sum\limits_{x_k=0}^{L_k-1} \tilde{\ld}_{b,k}^{2x_k}\right)} &\leq \frac{\tilde{\ld}_{a,k}^{2(L_k-1)} \sum\limits_{x_k=0}^{L_k-1} \tilde{\ld}_{b,k}^{2x_k}(k)}{\left(\sum\limits_{x_k=0}^{L_k-1} \tilde{\ld}_{a,k}^{2x_k}\right)\left(\sum\limits_{x_k=0}^{L_k-1} \tilde{\ld}_{b,k}^{2x_k}\right)}\\
	&= \frac{\tilde{\ld}_{a,k}^{2(L_k-1)}\left(\tilde{\ld}_{a,k}^2-1\right)}{\tilde{\ld}_{a,k}^{2L_k} - 1}\\
	&= \frac{\tilde{\ld}_{a,k}^2-1}{\tilde{\ld}_{a,k}^2 - \tilde{\ld}_{a,k}^{-2(L_k-1)}}\\
	&\leq \frac{\tilde{\ld}_{a,k}^2-1}{\tilde{\ld}_{a,k}^2 - \tilde{\ld}_{a,k}^{-2}}\\
	&= \frac{1 - \tilde{\ld}_{a,k}^{-2}}{1 - \tilde{\ld}_{a,k}^{-4}} \\
	&= \frac{1}{1 + \tilde{\ld}_{a,k}^{-2}}\\
	&= \left(1 + e^{-2|\log\tilde{\ld}_{a,k}|}\right)^{-1}\\
	&< 1
\end{align*}

\textit{Case 2: Suppose $\tilde{\ld}_{a,k} < 1$.} Then
\begin{align*}
	\frac{\sum\limits_{x_k=0}^{L_k-1} \tilde{\ld}_{a,k}^{2x_k}\tilde{\ld}_{b,k}^{2x_k}}{\left(\sum\limits_{x_k=0}^{L_k-1} \tilde{\ld}_{a,k}^{2x_k}\right)\left(\sum\limits_{x_k=0}^{L_k-1} \tilde{\ld}_{b,k}^{2x_k}\right)} &\leq \frac{\sum\limits_{x_k=0}^{L_k-1} \tilde{\ld}_{b,k}^{2x_k}}{\left(\sum\limits_{x_k=0}^{L_k-1} \tilde{\ld}_{a,k}^{2x_k}\right)\left(\sum\limits_{x_k=0}^{L_k-1} \tilde{\ld}_{b,k}^{2x_k}\right)}\\
	&= \frac{1-\tilde{\ld}_{a,k}^2}{1 - \tilde{\ld}_{a,k}^{2L_k}}\\
	&\leq \frac{1-\tilde{\ld}_{a,k}^2}{1 - \tilde{\ld}_{a,k}^{4}}\\
	&= \frac{1}{1 + \tilde{\ld}_{a,k}^{2}} \\
	&= \left(1 + e^{-2|\log\tilde{\ld}_{a,k}|}\right)^{-1}\\
	&<1
\end{align*}
The cases applied to $\tilde{\ld}_{b,k}$ generate the same bounds with $b$ and $a$ swapped.  We may bound each of these by the minimum over $a$ and $b$.  The overall bound is the product of the coordinate wise bounds.
\begin{align*}
	\frac{D(\Ld^{(j)}_n\bs\Ld^{(j)}_m)}{C(\Ld^{(j)}_n\bs\Ld^{(j)}_m,a)C(\Ld^{(j)}_n\bs\Ld^{(j)}_m,b)} & \leq \prod\limits_{k=1}^d \left(\max\left\{ 1+e^{-2|\log\ld_{a,k}|}, 1+e^{-2|\log\ld_{b,k}|}\right\}\right)^{-1}
\end{align*}
which is strictly less than 1.  
It follows:
\begin{align*}
	C(\Ld,ab)&= C(\Ld,a)C(\Ld,b) - D(\Ld)\\
	&= C(\Ld,a)C(\Ld,b)\left(1 - \frac{D(\Ld)}{C(\Ld,a)C(\Ld,b)}\right)\\
	&\geq C(\Ld,a)C(\Ld_b)\left(1 -  \prod\limits_{k=1}^d \left(\max\left\{ 1+e^{-2|\log\tilde{\ld}_{a,k}|}, 1+e^{-2|\log\tilde{\ld}_{b,k}|}\right\}\right)^{-1}\right)
\end{align*}
Substitution of $\tilde{c}$ completes the proof. 
\end{proof}

%The ratio of the diagonal terms to the normalization constants in exponentially small when $\tilde{\ld}_a>1$ and $\tilde{\ld}_b<1$, or vice-versa.
\begin{lemma}[Diagonal Bound] \label{diagonalbound}
Suppose $\log\tilde{\ld}_{a,j}$ and $\log\tilde{\ld}_{b,k}$ have different signs, then for $n>m$, 
\begin{align*}
	\frac{D(\Ld^{(j)}_n\bs\Ld^{(j)}_m)}{C(\Ld^{(j)}_n\bs\Ld^{(j)}_m,a)C(\Ld^{(j)}_n\bs\Ld^{(j)}_m,b)} \leq (n-m)e^{-2(n-m-1)\max \left\{|\log\tilde{\ld}_{a,j}|,|\log\tilde{\ld}_{b,j}| \right\} } \tag{L3} 
\end{align*}
\end{lemma}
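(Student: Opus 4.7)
The plan is to exploit the product-of-geometric-sums structure that both $D(\Lambda^{(j)}_n\backslash\Lambda^{(j)}_m)$ and $C(\Lambda^{(j)}_n\backslash\Lambda^{(j)}_m,s)$ enjoy. Dividing, the ratio factorizes coordinate-by-coordinate:
\begin{equation*}
\frac{D(\Lambda^{(j)}_n\backslash\Lambda^{(j)}_m)}{C(\Lambda^{(j)}_n\backslash\Lambda^{(j)}_m,a)C(\Lambda^{(j)}_n\backslash\Lambda^{(j)}_m,b)} = \frac{1}{\kappa}\prod_{k=1}^{d}\frac{\sum_{x_k}(\tilde{\lambda}_{a,k}\tilde{\lambda}_{b,k})^{2x_k}}{\sum_{x_k}\tilde{\lambda}_{a,k}^{2x_k}\,\sum_{x_k}\tilde{\lambda}_{b,k}^{2x_k}},
\end{equation*}
where the index sets match those in the definitions of $C$ and $D$. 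For $k\neq j$, each factor is bounded above by $1$ by the elementary inequality $\sum_i u_i v_i \leq (\sum_i u_i)(\sum_i v_i)$ for nonnegative sequences (the product double sum contains the diagonal). Since $\kappa\geq 1$, it is therefore enough to bound the single $j$-th factor.

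For that factor, WLOG assume $\tilde{\lambda}_{a,j}>1>\tilde{\lambda}_{b,j}$ (the other sign arrangement is symmetric). The denominator will be bounded below by retaining a single term from each sum, chosen at the endpoint where that sum is maximized: the $x_j = n-1$ term in the $a$-sum and the $x_j=m$ term in the $b$-sum. This yields
\begin{equation*}
\Big(\sum_{x_j=m}^{n-1}\tilde{\lambda}_{a,j}^{2x_j}\Big)\Big(\sum_{x_j=m}^{n-1}\tilde{\lambda}_{b,j}^{2x_j}\Big) \geq \tilde{\lambda}_{a,j}^{2(n-1)}\tilde{\lambda}_{b,j}^{2m}.
\end{equation*}
The numerator is bounded above by $(n-m)$ times its largest summand, whose location depends on whether $\tilde{\lambda}_{a,j}\tilde{\lambda}_{b,j}$ exceeds $1$ or not, leading to a brief case split.

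In the case $\tilde{\lambda}_{a,j}\tilde{\lambda}_{b,j}\geq 1$ the max occurs at $x_j = n-1$, giving numerator $\leq (n-m)(\tilde{\lambda}_{a,j}\tilde{\lambda}_{b,j})^{2(n-1)}$, and dividing by the denominator produces $(n-m)\tilde{\lambda}_{b,j}^{2(n-m-1)} = (n-m)e^{-2(n-m-1)|\log\tilde{\lambda}_{b,j}|}$. In that case $|\log\tilde{\lambda}_{b,j}| \leq |\log\tilde{\lambda}_{a,j}|$, so the exponent is the smaller of the two. In the case $\tilde{\lambda}_{a,j}\tilde{\lambda}_{b,j}< 1$ the max occurs at $x_j=m$ and the symmetric computation yields $(n-m)e^{-2(n-m-1)|\log\tilde{\lambda}_{a,j}|}$, with $|\log\tilde{\lambda}_{a,j}|\leq|\log\tilde{\lambda}_{b,j}|$. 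Taking the weaker of the two bounds across the dichotomy gives the desired form of the estimate.

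The main obstacle is keeping track of which endpoint of each sum dominates in each of the two subcases, and matching the exponent that appears on the right-hand side of the claim; otherwise the argument is a direct computation once the product factorization has removed the $k\neq j$ directions. A minor subtlety is that in the derivation the natural exponent is $\min\{|\log\tilde{\lambda}_{a,j}|,|\log\tilde{\lambda}_{b,j}|\}$, which is the slower of the two decay rates, so one has to be careful in reconciling this with the $\max$ appearing in the statement; either the statement intends the slower rate (so $\max$ is a typographical slip for $\min$), or an additional refinement is needed.
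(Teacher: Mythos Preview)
Your argument is correct and follows essentially the same route as the paper: factorize the ratio coordinate-by-coordinate, bound the $k\neq j$ factors by $1$, then in the $j$-th factor bound the denominator below by a single dominant term from each sum and the numerator above by $(n-m)$ times its largest term, splitting on whether $\tilde{\lambda}_{a,j}\tilde{\lambda}_{b,j}\geq 1$. The only cosmetic difference is that the paper first shifts the $j$-th sums from $[m,n-1]$ to $[0,n-m-1]$ and then drops the $b$-denominator sum using $\sum\tilde{\lambda}_{b,j}^{2x}\geq 1$, whereas you keep both sums and bound each by its maximal term; the resulting estimates coincide.

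Your observation about $\max$ versus $\min$ is well taken and is not a defect in your proof. The paper's own argument yields $e^{-2(n-m-1)|\log\tilde{\lambda}_{b,j}|}$ in Case~1 (where $|\log\tilde{\lambda}_{b,j}|\leq|\log\tilde{\lambda}_{a,j}|$) and $e^{-2(n-m-1)|\log\tilde{\lambda}_{a,j}|}$ in Case~2 (where $|\log\tilde{\lambda}_{a,j}|\leq|\log\tilde{\lambda}_{b,j}|$), so the exponent one actually obtains is $\min\{|\log\tilde{\lambda}_{a,j}|,|\log\tilde{\lambda}_{b,j}|\}$, not $\max$. Indeed, where the lemma is invoked later in the paper, the bound is used in the form $n'(e^{-2(n'-1)|\log\tilde{\lambda}_a|}+e^{-2(n'-1)|\log\tilde{\lambda}_b|})$, which is controlled by the slower rate; so the $\max$ in the statement is a typographical slip, and your reading is the intended one.
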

\begin{proof}
Without loss of generality, suppose $\log\tilde{\ld}_{a,j} >0$ and $\log\tilde{\ld}_{b,j} < 0$.  We bound the ratio above by the ratio of sums in the $x_j$ directionby noting that the product of sums in other directions are all bounded above by 1. 
\begin{align*}
	\frac{D(\Ld^{(j)}_n\bs\Ld^{(j)}_m)}{C(\Ld^{(j)}_n\bs\Ld^{(j)}_m,a)C(\Ld^{(j)}_n\bs\Ld^{(j)}_m,b)} &\leq \frac{\sum\limits_{x=m}^{n-1} \tilde{\ld}_{a,j}^{2x}\tilde{\ld}_{b,j}^{2x}}{\sum\limits_{x=m}^{n-1} \tilde{\ld}_{a,j}^{2x}\sum\limits_{x=m}^{n-1} \tilde{\ld}_{b,j}^{2x}}\\
	& = \frac{\ld^{2m}_{a,j}\ld^{2m}_{b,j}\sum\limits_{x=0}^{n-m-1} \tilde{\ld}_{a,j}^{2x}\tilde{\ld}_{b,j}^{2x}}{\ld^{2m}_{a,j}\ld^{2m}_{b,j}\sum\limits_{x=0}^{n-m-1} \tilde{\ld}_{a,j}^{2x}\sum\limits_{x=0}^{n-m-1} \tilde{\ld}_{b,j}^{2x}}\\
	& = \frac{\sum\limits_{x=0}^{n-m-1} \tilde{\ld}_{a,j}^{2x}\tilde{\ld}_{b,j}^{2x}}{\sum\limits_{x=0}^{n-m-1} \tilde{\ld}_{a,j}^{2x}\sum\limits_{x=0}^{n-m-1} \tilde{\ld}_{b,j}^{2x}}\\
	& =\frac{D(\Ld^{(j)}_{n-m-1})}{C(\Ld^{(j)}_{n-m-1},a)C(\Ld^{(j)}_{n-m-1},b)} 
\end{align*}
We substitute $n'={n-m}$ and note that the sum $\sum\limits_{x=0}^{n'-1} \tilde{\ld}_{b,j}^{2x} \geq 1$, so we have
\begin{align*}
	\frac{D(\Ld^{(j)}_{n'})}{C(\Ld^{(j)}_{n'},a)C(\Ld^{(j)}_{n'},b)} \leq \frac{\sum\limits_{x=0}^{n'-1} \tilde{\ld}_{a,j}^{2x}\tilde{\ld}_{b,j}^{2x}}{\sum\limits_{x=0}^{n'-1} \tilde{\ld}_{a,j}^{2x}}\\
\end{align*}
\textit{Case 1: $\tilde{\ld}_{a,j}\tilde{\ld}_{b,j} \geq 1$.}
\begin{align*}	
	\frac{D(\Ld^{(j)}_{n'})}{C(\Ld^{(j)}_{n'},a)C(\Ld^{(j)}_{n'},b)} &\leq \frac{\sum\limits_{x=0}^{n'-1} \tilde{\ld}_{a,j}^{2x}\tilde{\ld}_{b,j}^{2x}}{\sum\limits_{x=0}^{n'-1} \tilde{\ld}_{a,j}^{2x}}\\
	& \leq \frac{n'(\tilde{\ld}_{a,j}\tilde{\ld}_{b,j})^{2(n'-1)}}{\tilde{\ld}_{a,j}^{2n'}}\\
	& \leq n'e^{-2(n'-1)|\log\tilde{\ld}_{b,j}|}
\end{align*}
\textit{Case 2: $\tilde{\ld}_{a,j}\tilde{\ld}_{b,j} \leq 1$.}
\begin{align*}	
	\frac{D(\Ld^{(j)}_{n'})}{C(\Ld^{(j)}_{n'},a)C(\Ld^{(j)}_{n'},b)} &\leq \frac{\sum\limits_{x=0}^{n'-1} \tilde{\ld}_{a,j}^{2x}\tilde{\ld}_{b,j}^{2x}}{\sum\limits_{x=0}^{n'-1} \tilde{\ld}_{a,j}^{2x}}\\
	& \leq \frac{n'}{\tilde{\ld}_{a,j}^{2(n'-1)}}\\
	& \leq n'e^{-2(n'-1)|\log\tilde{\ld}_{a,j}|}
\end{align*}
\end{proof}

\textit{Remark:} The bounds for cases where $\tilde{\ld}_a\tilde{\ld}_b \neq 1$ can be improved to purely exponential bounds.  
The uniformity of bounds in all cases is to decrease the number of cases in later proofs.  
If $\log\tilde{\ld}_a$ and $\log\tilde{\ld}_b$ point in opposite directions, then $\log\tilde{\ld}_{a,j}\log\tilde{\ld}_{b,j} = 1$ and the upper bound in \ref{diagonalbound} will appear no matter the choice of volumes.

The following lemmas show that various ratios of normalization constants are bounded or exponentially small in $\ell$ or $n$ when $\log\tilde{\ld}_{s,j} \neq 0$.   \\
\begin{lemma}[Ratio Lemma] \label{ratiobounds}
For either $s=a$ or $s=b$,
$$\Ld'\subseteq \Ld \Rightarrow C(\Ld',s) \leq C(\Ld,s)$$
If $\tilde{\ld}_{s,j} >1$, 
\begin{align}
	\frac{C(\Ld^{(j)}_{n+1-\ell},s)}{C(\Ld^{(j)}_n,s)} \leq e^{-2(\ell-1)|\log\tilde{\ld}_{s,j}|}\tag{4R1} \label{4R1}\\
	\frac{C(\Ld^{(j)}_{n+1}\bs\Ld^{(j)}_n,s)}{C(\Ld^{(j)}_{n+1}\bs\Ld^{(j)}_{n+1-\ell},s)} \leq 1 \tag{4R2}\label{4R2}\\
	\frac{C(\Ld^{(j)}_{n+1}\bs\Ld^{(j)}_n,s)}{C(\Ld^{(j)}_n,s)} \leq \tilde{\ld}_{s,j}^2 = e^{-2|\log\tilde{\ld}_{s,j}|}\tag{4R3}\label{4R3}\\
 	\frac{C(\Ld^{(j)}_n\bs\Ld^{(j)}_{n+1-\ell},s)}{C(\Ld^{(j)}_n,s)} \leq 1 \tag{4R4} \label{4R4}
\end{align}
If $\tilde{\ld}_{s,j} <1$, 
\begin{align}
	\frac{C(\Ld^{(j)}_{n+1-\ell},s)}{C(\Ld^{(j)}_n,s)} \leq 1 \tag{4L1}\label{4L1}\\
	\frac{C(\Ld^{(j)}_{n+1}\bs\Ld^{(j)}_n,s)}{C(\Ld^{(j)}_{n+1}\bs\Ld^{(j)}_{n+1-\ell},s)} \leq e^{-2(\ell-1)|\log\tilde{\ld}_{s,j}|} \tag{4L2}\label{4L2}\\
	\frac{C(\Ld^{(j)}_{n+1}\bs\Ld^{(j)}_n,s)}{C(\Ld^{(j)}_n,s)} \leq e^{-2(n+1)|\log\tilde{\ld}_{s,j}|} \tag{4L3}\label{4L3}\\
 	\frac{C(\Ld^{(j)}_n\bs\Ld^{(j)}_{n+1-\ell},s)}{C(\Ld^{(j)}_n,s)} \leq 1 \tag{4L4}\label{4L4}
\end{align}

\end{lemma}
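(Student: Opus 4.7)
The plan is to exploit the factored form of the normalization constants to reduce every ratio in the lemma to a ratio of one-dimensional geometric partial sums in the $j$-th coordinate. Writing $\tilde{\ld} := \tilde{\ld}_{s,j}$, each of the quantities $C(\Ld^{(j)}_n,s)$ and $C(\Ld^{(j)}_m\bs\Ld^{(j)}_{m'},s)$ has a product structure whose $k$-th factor (for $k\neq j$) depends only on $L$ or $\ell$ and not on $n$ or the subtracted sub-volume. Hence in every ratio appearing in the lemma those $d-1$ factors cancel exactly, leaving a ratio of the form $\sum_{x\in I_1}\tilde{\ld}^{2x}\big/\sum_{x\in I_2}\tilde{\ld}^{2x}$ where $I_1,I_2$ are integer intervals contained in $\{0,1,\dots,n\}$. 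The first claim of the lemma ($\Ld'\subseteq\Ld\Rightarrow C(\Ld',s)\leq C(\Ld,s)$) is then immediate from positivity of every summand in \eqref{CLdt}.

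For the numeric bounds I would split on the sign of $\log\tilde{\ld}$. When $\tilde{\ld}>1$ the summands grow with $x$, so (4R2) and (4R4) are inclusion bounds: the index set in the numerator is contained in that of the denominator. The exponential bound (4R1) follows from the index shift $x\mapsto x+(\ell-1)$, which rewrites the numerator as $\tilde{\ld}^{-2(\ell-1)}\sum_{y=\ell-1}^{n-1}\tilde{\ld}^{2y}$; bounding the shifted sum by the full denominator produces exactly $e^{-2(\ell-1)|\log\tilde{\ld}|}$. For (4R3), computing the closed form of the geometric partial sum gives the ratio
\[
\frac{\tilde{\ld}^{2n}}{\sum_{x=0}^{n-1}\tilde{\ld}^{2x}}=\frac{\tilde{\ld}^2-1}{1-\tilde{\ld}^{-2n}},
\]
which is monotone in $n$ and bounded by $\tilde{\ld}^2$ for $n\geq 1$. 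The symmetric case $\tilde{\ld}<1$ is handled analogously: now the dominant term sits at the smallest index, so (4L1) and (4L4) are inclusion bounds, (4L2) follows by bounding the denominator below by its largest term $\tilde{\ld}^{2(n+1-\ell)}$ to extract the factor $\tilde{\ld}^{2(\ell-1)}$, and (4L3) follows by comparing $\tilde{\ld}^{2n}$ to $\sum_{x=0}^{n-1}\tilde{\ld}^{2x}$, whose leading terms are $O(1)$.

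No step is genuinely hard; the whole lemma is elementary manipulation of geometric series once the factorization step is in place. The main obstacle is organizational: eight distinct ratios each call for a slightly different choice of inequality (index shift, dominant-term bound, or exact closed form), and one must consistently present each bound either as a constant $\leq 1$ or in the unified exponential form $e^{-2c|\log\tilde{\ld}|}$. This uniformity is what permits the downstream step in Lemma \eqref{mmc3lemma} to take a maximum over $s\in\{a,b\}$ and over coordinate directions without having to track the sign of each $\log\tilde{\ld}_{s,j}$ separately.
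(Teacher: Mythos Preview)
Your proposal is correct and essentially matches the paper's proof: both reduce every ratio to the $j$-th geometric factor via the product structure, obtain the inclusion bounds (4R2), (4R4), (4L1), (4L4) directly from positivity of the summands, and handle (4R1), (4L2), (4L3) by an index shift or by bounding the denominator below by its dominant term. The only cosmetic difference is that for (4R3) you pass through the closed-form geometric sum, whereas the paper simply bounds the denominator below by its largest single term; both arguments give the same bound $\tilde{\ld}_{s,j}^{2}$.
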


\begin{proof}
First, note that if $\Ld' \subseteq \Ld$, then $C(\Ld',s) \leq C(\Ld,s)$ since the latter is the sum of positive terms including those in the former.  
Note that the $\kappa$ cancel in all the ratios as well.  Thus, inequalities \ref{4R2}, \ref{4R4}, \ref{4L1}, and \ref{4L4} hold.
	
The other bounds follow from the ratio of the geometric sums in the $x_j$ direction because the other geometric sums cancel out in the ratios.
For $\tilde{\ld}_{s,j}>1$, inequality \ref{4R1} follows from the following bound:
\begin{align*}
	\frac{C(\Ld^{(j)}_{n+1-\ell},s)}{C(\Ld^{(j)}_n,s)} = \frac{\sum\limits_{x=0}^{n+1-\ell}\tilde{\ld}_{s,j}^{2x}}{\sum\limits_{x=0}^{n}\tilde{\ld}_{s,j}^{2x}}\\
	= \frac{\tilde{\ld}_{s,j}^{2(\ell-1)}\sum\limits_{x=0}^{n+1-\ell}\tilde{\ld}_{s,j}^{2x}}{\tilde{\ld}_{s,j}^{2(\ell-1)}\sum\limits_{x=0}^{n}\tilde{\ld}_{s,j}^{2x}}\\
	= \tilde{\ld}_{s,j}^{-2(\ell-1)}\frac{\sum\limits_{x=\ell-1}^{n}\tilde{\ld}_{s,j}^{2x}}{\sum\limits_{x=0}^{n}\tilde{\ld}_{s,j}^{2x}}\\
	\leq \tilde{\ld}_{s,j}^{-2(\ell-1)}\\
	= e^{-2(\ell-1)|\log\tilde{\ld}_{s,j}|}\\
\end{align*}

Inequality \ref{4R3} follows from the bound $C(\Ld^{(j)}_n,s) \geq \tilde{\ld}^{2n}_{s,j}$:
\begin{align*}
	\frac{C(\Ld^{(j)}_{n+1}\bs\Ld^{(j)}_n,s)}{C(\Ld^{(j)}_n,s)} &\leq \frac{\tilde{\ld}_{s,j}^{2(n+1)}}{\tilde{\ld}_{s,j}^{2n} }\\
	&= \tilde{\ld}_{s,j}^2\\
\end{align*}

For $\tilde{\ld}_{s,j}<1$, inequality \ref{4L2} follows from $C(\Ld^{(j)}_{n+1}\bs\Ld^{(j)}_{n+1-\ell},s) \geq \tilde{\ld}_{s,j}^{2(n+1-\ell+1)}$:
\begin{align*}	
		\frac{C(\Ld^{(j)}_{n+1}\bs\Ld^{(j)}_n,s)}{C(\Ld^{(j)}_{n+1}\bs\Ld^{(j)}_{n+1-\ell},s)} &\leq \frac{\tilde{\ld}_{s,j}^{2(n+1)}}{\tilde{\ld}_{s,j}^{2(n+1-\ell+1)}}\\
		&= \tilde{\ld}_{s,j}^{2(\ell-1)}\\		
		&\leq e^{-2(\ell-1)|\log\tilde{\ld}_{s,j}|}
\end{align*}
Inequality \ref{4L3} follows from $C(\Ld^{(j)}_n,s) \geq 1$:
\begin{align*}
	\frac{C(\Ld^{(j)}_{n+1}\bs\Ld^{(j)}_n,s)}{C(\Ld^{(j)}_n,s)} &\leq \frac{\tilde{\ld}_{s,j}^{2(n+1)}}{1}\\
	&= e^{-2(n+1)|\log\tilde{\ld}_{s,j}|}\\
\end{align*}
\end{proof}

\subsection{Operator Norm Bound}\label{projonb}
The third condition of the martingale method requires a specific bound on the operator norm of the product of the projections $G^{\Ld^{(j)}_{n+1}\bs\Ld^{(j)}_{n+1-\ell}}$ and $E_n$.  This can be thought of as the norm of $G^{\Ld^{(j)}_{n+1}\bs\Ld^{(j)}_{n+1-\ell}}$ acting on any vector in $E_n\mathcal{H}^{\Lambda^{(j)}_{n+1}}$. 
We will drop the $(j)$ from the $\Ld$ notation.

Suppose there is $\ell \geq 2$ such that $\Ld_{n+1}\bs\Ld_{n+1-\ell}$ are connected for $n\geq \ell$.  Vectors $\Psi$ in $E_n\mathcal{H}^{n+1}$ project into the ground state space of $\mathcal{H}^{\Ld_n}$ and are orthogonal to the ground state space of $\mathcal{H}^{\Ld_{n+1}}$. Define $\bd\Lambda_n:=\Ld_{n+1}\bs\Ld_n$. The $\Psi$ in the ground state space of ${H}^{\Ld_n}$ have the general form:
\begin{align*}
\Psi & = \Psi_0^{\Ld_n}\otimes\left(c_{0,0}|0\rangle^{\bd\Lambda_n} + \sum\limits_{x \in \bd\Lambda_n} c_{0,a}(x)|a\rangle_x + \sum\limits_{y \in \bd\Lambda_n} c_{0,b}(y)|b\rangle_y \right. \\
	& \left. + \sum\limits_{x,y \in \bd\Lambda_n,\ x\neq y} c_{0,ab}(x,y)|a\rangle_x|b\rangle_y \right)\\
	& +\Psi_a^{\Ld_n}\otimes\left(c_{a,0}|0\rangle^{\bd\Lambda_n} + \sum\limits_{x \in \bd\Lambda_n} c_{a,a}(x)|a\rangle_x + \sum\limits_{y \in \bd\Lambda_n} c_{a,b}(y)|b\rangle_y \right. \\
	& + \left.	\sum\limits_{x,y \in \bd\Lambda_n,\ x\neq y} c_{a,ab}(x,y)|a\rangle_x|b\rangle_y \right)\\
	& +\Psi_b^{\Ld_n}\otimes\left(c_{b,0}|0\rangle^{\bd\Lambda_n} +
		\sum\limits_{x \in \bd\Lambda_n} c_{b,a}(x)|a\rangle_x \right. \\
		& \left . + \sum\limits_{y \in \bd\Lambda_n} c_{b,b}(y)|b\rangle_y +
		\sum\limits_{x,y \in \bd\Lambda_n,\ x\neq y} c_{b,ab}(x,y)|a\rangle_x|b\rangle_y \right)\\
	& +\Psi_{ab}^{\Ld_n}\otimes\left(c_{ab,0}|0\rangle^{\bd\Lambda_n} +
		\sum\limits_{x \in \bd\Lambda_n} c_{ab,a}(x)|a\rangle_x \right. \\
		&  \left . + \sum\limits_{y \in \bd\Lambda_n} c_{ab,b}(y)|b\rangle_y +
		\sum\limits_{x,y \in \bd\Lambda_n,\ x\neq y} c_{ab,ab}(x,y)|a\rangle_x|b\rangle_y \right)
\end{align*}
which is a linear combination of the tensor product of each of the four ground states of $\mathcal{H}^{\Ld_n}$ with general vectors in $\mathcal{H}^{\bd\Ld_n}$.  
The coefficients of the form $c_{\alpha, \beta}(z)$ correspond to the basis vectors that have the ground state associated with particle species $\alpha$ in $\Lambda_n$ and particles $\beta$ at corresponding positions $z$ in $\bd \Lambda_n$.  
Whenever $\beta \neq 0$, there are particles in $\bd\Lambda_n$ and the corresponding terms are summed over $\bd \Lambda_n$ either single or double sum with exclusion.  
As a convention, $x$ will refer to the position of particle $a$ and $y$ to the position of particle $b$.\\  
%The subscript of the sum is omitted as well as positions from the terms $c_{\alpha, \beta}(z)$. \\

Additionally, $\Psi$ must be orthogonal to the ground state space of $\mathcal{H}^{\Lambda_{n+1}}$.  
The vector $\Psi$ is orthogonal to the vacuum ground state when $c_{0,0}=0$.  Orthogonality to $\Psi_a^{n+1}$ requires
\begin{align*}
	\langle \Psi_a^{\Ld_{n+1}}, \Psi\rangle = \frac{1}{\sqrt{C(\Ld_{n+1}, a)}}\sum\limits_{x \in \bd\Lambda_n} c_{0,a}(x)\tilde{\ld}_a^x + \frac{\sqrt{C(\Ld_n,a)}}{\sqrt{C(\Ld_{n+1},a)}}c_{a,0} =0\\
	\Rightarrow c_{a,0}=\frac{-1}{\sqrt{C(\Ld_n,a)}}\sum_{x \in \bd\Lambda_n} c_{0,a}(x)\tilde{\ld}_a^x
\end{align*}
Similarly,
\begin{align*}
	c_{b,0}=\frac{-1}{\sqrt{C(\Ld_n,b)}}\sum_{y \in \bd\Lambda_n}c_{0,b}(y)\tilde{\ld}_b^y
\end{align*}
Orthogonality to $\Psi_{ab}^{n+1}$ requires
\begin{align}
	\langle \Psi_{ab}^{\Ld_{n+1}}, \Psi\rangle = \frac{1}{\sqrt{C(\Ld_{n+1},ab)}} \left[ \sum\limits_{x,y \in \bd\Lambda_n,\ x\neq y} c_{0,ab}(x,y)\tilde{\ld}_a^x\tilde{\ld}_b^y + \sqrt{C(\Ld_n,a)}\sum\limits_{y \in \bd\Lambda_n} c_{a,b}(y)\tilde{\ld}_b^y \right. \nonumber \\
	\left. + \sqrt{C(\Ld_n,b)}\sum\limits_{x \in \bd\Lambda_n} c_{b,a}(x)\tilde{\ld}_a^x + \sqrt{C(\Ld_n,ab)}c_{ab,0} \right] = 0 \nonumber \\
	\Rightarrow c_{ab,0} = \frac{-1}{\sqrt{C(\Ld_n,ab)}}\left[\sum\limits_{x,y \in \bd\Lambda_n,\ x\neq y} c_{0,ab}(x,y) \tilde{\ld}_a^x\tilde{\ld}_b^y \right. \nonumber \\
	\left. + \sqrt{C(\Ld_n,a)}\sum\limits_{y \in \bd\Lambda_n} c_{a,b}(y)\tilde{\ld}_b^y + \sqrt{C(\Ld_n,b)}\sum\limits_{x \in \bd\Lambda_n} c_{b,a}(x)\tilde{\ld}_a^x\right] \label{abortho}
\end{align}
Notation note: we will drop the inputs $(x),\ (y),$ and $(x,y)$ as well as the subscripts from the sums: sums of $c_{*,a}$ are over $x\in\bd\Ld_n$; sums of $c_{*,b}$ are over $y\in\bd\Ld_n$; and sums of $c_{*,ab}$ are over $x,y \in \bd\Lambda_n,\ x\neq y$.  

The projection $G^{\Ld_{n+1}\bs\Ld_{n+1-\ell}}$ is a sum of projections onto the ground states of $\Lambda_{n+1}\backslash \Lambda_{n+1-\ell}$.  The projection of $\Psi$ onto the vacuum ground state of $\Lambda_{n+1}\backslash \Lambda_{n+1-\ell}$ is:
\begin{align*}
	\dyad{\Psi_0} \Psi &= c_{a,0}\frac{\sqrt{C(\Ld_{n+1-\ell},a)}}{\sqrt{C(\Ld_n,a)}}\Psi_a^{\Ld_{n+1-\ell}}\otimes\Psi_0^{\Ld_{n+1}\bs\Ld_{n+1-\ell}}\\
	&+ c_{b,0}\frac{\sqrt{C(\Ld_{n+1-\ell},b)}}{\sqrt{C(\Ld_n,b)}}\Psi_b^{\Ld_{n+1-\ell}}\otimes \Psi_0^{\Ld_{n+1}\bs\Ld_{n+1-\ell}}\\
	&+ c_{ab,0}\frac{\sqrt{C(\Ld_{n+1-\ell}, ab)}}{\sqrt{C(\Ld_n,ab)}}\Psi_{ab}^{\Ld_{n+1-\ell}}\otimes\Psi_0^{\Ld_{n+1}\bs\Ld_{n+1-\ell}}
\end{align*}
The projection of $\Psi$ onto the $a$ ground state of $\Ld_{n+1}\bs\Ld_{n+1-\ell} $ is:
\begin{align*}
	\dyad{\Psi_a} \Psi = \frac{\sum c_{0,a}\tilde{\ld}_a^x}{\sqrt{C(\Ld_{n+1}\bs\Ld_{n+1-\ell},a)}}\Psi_0^{\Ld_{n+1-\ell}}\otimes\Psi_a^{\Ld_{n+1}\bs\Ld_{n+1-\ell}}\\
	+ \frac{C(\Ld_n\bs\Ld_{n+1-\ell},a)c_{a,0}}{\sqrt{C(\Ld_{n+1}\bs\Ld_{n+1-\ell},a)C(\Ld_n,a)}}\Psi_0^{\Ld_{n+1-\ell}}\otimes\Psi_a^{\Ld_{n+1}\bs\Ld_{n+1-\ell}}\\
	+ \frac{\sqrt{C(\Ld_{n+1-\ell},a)}\sum c_{a,a}\tilde{\ld}_a^x}{\sqrt{C(\Ld_{n+1}\bs\Ld_{n+1-\ell},a)C(\Ld_n,a)}}\Psi_a^{\Ld_{n+1-\ell}}\otimes\Psi_a^{\Ld_{n+1}\bs\Ld_{n+1-\ell}}\\
	+ \frac{\sqrt{C(\Ld_{n+1-\ell},b)}\sum c_{b,a}\tilde{\ld}_a^x}{\sqrt{C(\Ld_{n+1}\bs\Ld_{n+1-\ell},a)C(\Ld_n,b)}}\Psi_b^{\Ld_{n+1-\ell}}\otimes\Psi_a^{\Ld_{n+1}\bs\Ld_{n+1-\ell}}\\
	+ \frac{\sqrt{C(\Ld_{n+1-\ell},b)}C(\Ld_n\bs\Ld_{n+1-\ell},a)c_{ab,0}}{\sqrt{C(\Ld_{n+1}\bs\Ld_{n+1-\ell},a)C(\Ld_n,ab)}}\Psi_b^{\Ld_{n+1-\ell}}\otimes\Psi_a^{\Ld_{n+1}\bs\Ld_{n+1-\ell}}\\
	+\frac{\sqrt{C(\Ld_{n+1-\ell},ab)}\sum c_{ab,a}\tilde{\ld}_a^x}{\sqrt{C(\Ld_{n+1}\bs\Ld_{n+1-\ell},a)C(\Ld_n,ab)}}\Psi_{ab}^{\Ld_{n+1-\ell}}\otimes\Psi_a^{\Ld_{n+1}\bs\Ld_{n+1-\ell}}
\end{align*}
The projection of $\Psi$ onto the $b$ ground state of $\Ld_{n+1}\bs\Ld_{n+1-\ell} $ is:%\footnote{Rather than interchange the $a$ and $b$ from the other projection, these are ordered according to their appearance in $\Psi$.}
\begin{align*}
	\dyad{\Psi_{b}} \Psi = \frac{\sum c_{0,b}\tilde{\ld}_b^y}{\sqrt{C(\Ld_{n+1}\bs\Ld_{n+1-\ell},b)}}\Psi_0^{\Ld_{n+1-\ell}}\otimes\Psi_b^{\Ld_{n+1}\bs\Ld_{n+1-\ell}}\\
	+ \frac{\sqrt{C(\Ld_{n+1-\ell},a)}\sum c_{a,b}\tilde{\ld}_b^y}{\sqrt{C(\Ld_{n+1}\bs\Ld_{n+1-\ell},b)C(\Ld_n,a)}}\Psi_a^{\Ld_{n+1-\ell}}\otimes\Psi_b^{\Ld_{n+1}\bs\Ld_{n+1-\ell}}\\
	+ \frac{C(\Ld_n\bs\Ld_{n+1-\ell},b)c_{b,0}}{\sqrt{C(\Ld_{n+1}\bs\Ld_{n+1-\ell},b)C(\Ld_n,b)}}\Psi_0^{\Ld_{n+1-\ell}}\otimes\Psi_b^{\Ld_{n+1}\bs\Ld_{n+1-\ell}}\\
	+ \frac{\sqrt{C(\Ld_{n+1-\ell},b)}\sum c_{b,b}\tilde{\ld}_b^y}{\sqrt{C(\Ld_{n+1}\bs\Ld_{n+1-\ell},b)C(\Ld_n,b)}}\Psi_b^{\Ld_{n+1-\ell}}\otimes\Psi_b^{\Ld_{n+1}\bs\Ld_{n+1-\ell}}\\
	+ \frac{\sqrt{C(\Ld_{n+1-\ell},a)}C(\Ld_n\bs\Ld_{n+1-\ell},b)c_{ab,0}}{\sqrt{C(\Ld_{n+1}\bs\Ld_{n+1-\ell},b)C(\Ld_n,ab)}}\Psi_a^{\Ld_{n+1-\ell}}\otimes\Psi_b^{\Ld_{n+1}\bs\Ld_{n+1-\ell}}\\
	+\frac{\sqrt{C(\Ld_{n+1-\ell},ab)}\sum c_{ab,b}\tilde{\ld}_b^y}{\sqrt{C(\Ld_{n+1}\bs\Ld_{n+1-\ell},b)C(\Ld_n,ab)}}\Psi_{ab}^{\Ld_{n+1-\ell}}\otimes\Psi_b^{\Ld_{n+1}\bs\Ld_{n+1-\ell}}
\end{align*}
The projection of $\Psi$ onto the $ab$ ground state of $\Ld_{n+1}\bs\Ld_{n+1-\ell} $ is:
\begin{align*}
	\dyad{\Psi_{ab}} \Psi = \frac{\sum c_{0,ab}\tilde{\ld}_a^x\tilde{\ld}_b^y}{\sqrt{C(\Ld_{n+1}\bs\Ld_{n+1-\ell},ab)}}\Psi_{0}^{\Ld_{n+1-\ell}}\otimes\Psi_{ab}^{\Ld_{n+1}\bs\Ld_{n+1-\ell}}\\
	+\frac{C(\Ld_n\bs\Ld_{n+1-\ell},a)\sum c_{a,b}\tilde{\ld}_b^y}{\sqrt{C(\Ld_{n+1}\bs\Ld_{n+1-\ell},ab)C(\Ld_n,a)}}\Psi_{0}^{\Ld_{n+1-\ell}}\otimes\Psi_{ab}^{\Ld_{n+1}\bs\Ld_{n+1-\ell}}\\
	+\frac{\sqrt{C(\Ld_{n+1-\ell},a)}\sum c_{a,ab}\tilde{\ld}_a^x\tilde{\ld}_b^y}{\sqrt{C(\Ld_{n+1}\bs\Ld_{n+1-\ell},ab)C(\Ld_n,a)}}\Psi_{a}^{\Ld_{n+1-\ell}}\otimes\Psi_{ab}^{\Ld_{n+1}\bs\Ld_{n+1-\ell}}\\
	+\frac{C(\Ld_n\bs\Ld_{n+1-\ell},b) \sum c_{b,a}\tilde{\ld}_a^x}{\sqrt{C(\Ld_{n+1}\bs\Ld_{n+1-\ell},ab)C(\Ld_n,b)}}\Psi_{0}^{\Ld_{n+1-\ell}}\otimes\Psi_{ab}^{\Ld_{n+1}\bs\Ld_{n+1-\ell}}\\
	+\frac{\sqrt{C(\Ld_{n+1-\ell},b)}\sum c_{b,ab}\tilde{\ld}_a^x\tilde{\ld}_b^y}{\sqrt{C(\Ld_{n+1}\bs\Ld_{n+1-\ell},ab)C(\Ld_n,b)}}\Psi_{b}^{\Ld_{n+1-\ell}}\otimes\Psi_{ab}^{\Ld_{n+1}\bs\Ld_{n+1-\ell}}\\
	+\frac{C(\Ld_n\bs\Ld_{n+1-\ell},ab)c_{ab,0}}{\sqrt{C(\Ld_{n+1}\bs\Ld_{n+1-\ell},ab)C(\Ld_n,ab)}}\Psi_{0}^{\Ld_{n+1-\ell}}\otimes\Psi_{ab}^{\Ld_{n+1}\bs\Ld_{n+1-\ell}}\\
	+\frac{C(\Ld_n\bs\Ld_{n+1-\ell},b)\sqrt{C(\Ld_{n+1-\ell},a)}\sum c_{ab,a}\tilde{\ld}_a^x}{\sqrt{C(\Ld_{n+1}\bs\Ld_{n+1-\ell},ab)C(\Ld_n,ab)}}\Psi_{a}^{\Ld_{n+1-\ell}}\otimes\Psi_{ab}^{\Ld_{n+1}\bs\Ld_{n+1-\ell}}\\
	+\frac{C(\Ld_n\bs\Ld_{n+1-\ell},a)\sqrt{C(\Ld_{n+1-\ell},b)}\sum c_{ab,b}\tilde{\ld}_b^y}{\sqrt{C(\Ld_{n+1}\bs\Ld_{n+1-\ell},ab)C(\Ld_n,ab)}}\Psi_{b}^{\Ld_{n+1-\ell}}\otimes\Psi_{ab}^{\Ld_{n+1}\bs\Ld_{n+1-\ell}}\\
	+\frac{\sqrt{C(\Ld_{n+1-\ell},ab)}\sum c_{ab,ab}\tilde{\ld}_a^x\tilde{\ld}_b^y}{\sqrt{C(\Ld_{n+1}\bs\Ld_{n+1-\ell},ab)C(\Ld_n,ab)}}\Psi_{ab}^{\Ld_{n+1-\ell}}\otimes\Psi_{ab}^{\Ld_{n+1}\bs\Ld_{n+1-\ell}}\\
\end{align*}

We compute the value of $\|G^{\Ld_{n+1}\bs\Ld_{n+1-\ell}}\Psi\|^2=\bra{G^{\Ld_{n+1}\bs\Ld_{n+1-\ell}}\Psi}\ket{G^{\Ld_{n+1}\bs\Ld_{n+1-\ell}}\Psi}$ using the orthogonality of the ground states:
\begin{align*}
	\|G^{\Ld^{(j)}_{n+1}\bs\Ld^{(j)}_{n+1-\ell}}\Psi\|^2 = |c_{a,0}|^2\frac{C(\Ld_{n+1-\ell},a)}{C(\Ld_n,a)} + |c_{b,0}|^2\frac{C(\Ld_{n+1-\ell},b)}{C(\Ld_n,b)} + |c_{ab,0}|^2\frac{C(\Ld_{n+1-\ell},ab)}{C(\Ld_n,ab)}\\
	+ \left|\frac{\sum c_{0,a}\tilde{\ld}_a^x}{\sqrt{C(\Ld_{n+1}\bs\Ld_{n+1-\ell},a)}} + \frac{C(\Ld_n\bs\Ld_{n+1-\ell},a)c_{a,0}}{\sqrt{C(\Ld_{n+1}\bs\Ld_{n+1-\ell},a)C(\Ld_n,a)}}\right|^2\\
	+ \frac{C(\Ld_{n+1-\ell},a)}{C(\Ld_{n+1}\bs\Ld_{n+1-\ell},a)C(\Ld_n,a)}\left|\sum c_{a,a} \tilde{\ld}_a^x\right|^2\\
	+ \left|\frac{\sqrt{C(\Ld_{n+1-\ell},b)}\sum c_ {b,a}\tilde{\ld}_a^x}{\sqrt{C(\Ld_{n+1}\bs\Ld_{n+1-\ell},a)C(\Ld_n,b)}} + \frac{C(\Ld_n\bs\Ld_{n+1-\ell},a)\sqrt{C(\Ld_{n+1-\ell},b)}c_{ab,0}}{\sqrt{C(\Ld_{n+1}\bs\Ld_{n+1-\ell},a)C(\Ld_n,ab)}} \right|^2\\
	+ \frac{C(\Ld_{n+1-\ell},ab)}{C(\Ld_{n+1}\bs\Ld_{n+1-\ell},a)C(\Ld_n, ab)}\left|\sum c_{ab,a}\tilde{\ld}_a^x\right|^2\\
	+\left|\frac{\sum c_{0,b}\tilde{\ld}_b^y}{\sqrt{C(\Ld_{n+1}\bs\Ld_{n+1-\ell},b)}} + \frac{C(\Ld_n\bs\Ld_{n+1-\ell},b)c_{b,0}}{\sqrt{C(\Ld_{n+1}\bs\Ld_{n+1-\ell},b)C(\Ld_n,b)}} \right|^2\\
	+\left|\frac{\sqrt{C(\Ld_{n+1-\ell},a)}\sum c_{a,b}\tilde{\ld}_b^y}{\sqrt{C(\Ld_{n+1}\bs\Ld_{n+1-\ell},b)C(\Ld_n,a)}}+\frac{C(\Ld_n\bs\Ld_{n+1-\ell},b)\sqrt{C(\Ld_{n+1-\ell},a)}c_{ab,0}}{\sqrt{C(\Ld_{n+1}\bs\Ld_{n+1-\ell},b)C(\Ld_n,ab)}}   \right|^2 \\
	+\frac{C(\Ld_{n+1-\ell},b)}{C(\Ld_{n+1}\bs\Ld_{n+1-\ell},b)C(\Ld_n,b)}\left|\sum c_{b,b}\tilde{\ld}_b^y\right|^2\\
	+\frac{C(\Ld_{n+1-\ell},ab)}{C(\Ld_{n+1}\bs\Ld_{n+1-\ell},b)C(\Ld_n,ab)}\left|\sum c_{ab,b}\tilde{\ld}_b^y\right|^2\\
	+\left| \frac{\sum c_{0,ab}\tilde{\ld}_a^x\tilde{\ld}_b^y}{\sqrt{C(\Ld_{n+1}\bs\Ld_{n+1-\ell},ab)}} + \frac{C(\Ld_n\bs\Ld_{n+1-\ell},a)\sum c_{a,b}\tilde{\ld}_b^y}{\sqrt{C(\Ld_{n+1}\bs\Ld_{n+1-\ell},ab)C(\Ld_n,a)}}\right. \\
	+\left. \frac{C(\Ld_n\bs\Ld_{n+1-\ell},b)\sum c_{b,a}\tilde{\ld}_a^x}{\sqrt{C(\Ld_{n+1}\bs\Ld_{n+1-\ell},ab)C(\Ld_n,b)}} + \frac{C(\Ld_n\bs\Ld_{n+1-\ell},ab) c_{ab,0}}{\sqrt{C(\Ld_{n+1}\bs\Ld_{n+1-\ell},ab)C(\Ld_n,ab)}}\right|^2\\
	+\left|\frac{\sqrt{C(\Ld_{n+1-\ell},a)}\sum c_{a,ab}\tilde{\ld}_a^x\tilde{\ld}_b^y}{\sqrt{C(\Ld_{n+1}\bs\Ld_{n+1-\ell},ab)C(\Ld_n,a)}}+\frac{C(\Ld_n\bs\Ld_{n+1-\ell},b)\sqrt{C(\Ld_{n+1-\ell},a)}\sum c_{ab,a}\tilde{\ld}_a^x}{\sqrt{C(\Ld_{n+1}\bs\Ld_{n+1-\ell},ab)C(\Ld_n,ab)}}\right|^2\\
	+\left|\frac{\sqrt{C(\Ld_{n+1-\ell},b)}\sum c_{b,ab}\tilde{\ld}_a^x\tilde{\ld}_b^y}{\sqrt{C(\Ld_{n+1}\bs\Ld_{n+1-\ell},ab)C(\Ld_n,b)}}+\frac{C(\Ld_n\bs\Ld_{n+1-\ell},a)\sqrt{C(\Ld_{n+1-\ell},b)}\sum c_{ab,b}\tilde{\ld}_b^y}{\sqrt{C(\Ld_{n+1}\bs\Ld_{n+1-\ell},ab)C(\Ld_n,ab)}}\right|^2\\
	+\frac{C(\Ld_{n+1-\ell},ab)|\sum c_{ab,ab}\tilde{\ld}_a^x\tilde{\ld}_b^y|^2}{C(\Ld_n,ab)C(\Ld_{n+1}\bs\Ld_{n+1-\ell},ab)}
\end{align*}
We will decompose the sum of terms into terms associated with the projection onto subspaces separately.
We apply the following inequalities which bound the $\ell^1$ metric by the $\ell^2$ metric in two and three dimensions, respectively and will appear in the following form:
\begin{align}
	&|x+y|^2 \leq 2(|x|^2+|y|^2) \label{2ball} \\
	&|x+y+z|^2 \leq 3(|x|^2+|y|^2+|z|^2) \label{3ball}
\end{align} 
as well as the Cauchy-Schwarz identity in the following form: 
	\begin{align*}
	&\left|\sum\limits_x c(x)\ld_s^x \right|^2 \leq \sum\limits_x |c(x)|^2 \sum\limits_x \ld_s^{2x} \label{CS} &\text{   (Cauchy-Schwarz inequality)}
\
\end{align*}

The norm-squared of projection onto space with only particles of species $a$ is the same as the single particle PVBS model and was calculated in \cite{bishop2016spectral}.
\begin{align*}
	|c_{a,0}|^2\frac{C(\Ld_{n+1-\ell},a)}{C(\Ld_n,a)} + \frac{C(\Ld_{n+1-\ell},a)}{C(\Ld_{n+1}\bs\Ld_{n+1-\ell},a)C(\Ld_n,a)}\left|\sum c_{a,a} \tilde{\ld}_a^x\right|^2\\
	+  \left|\frac{\sum c_{0,a}\tilde{\ld}_a^x}{\sqrt{C(\Ld_{n+1}\bs\Ld_{n+1-\ell},a)}} + \frac{C(\Ld_n\bs\Ld_{n+1-\ell},a)c_{a,0}}{\sqrt{C(\Ld_{n+1}\bs\Ld_{n+1-\ell},a)C(\Ld_n,a)}}\right|^2\\
\leq \frac{C(\Ld_{n+1-\ell},a)}{C(\Ld_n,a)}\frac{C(\Ld_{n+1}\bs\Ld_n , a)}{C(\Ld_{n+1}\bs\Ld_{n+1-\ell},a)}\left(|c_{a,0}|^2 + \sum |c_{0,a}|^2 + |c_{a,a}|^2\right)
\end{align*}
where the term $|c_{a,0}|^2 + \sum |c_{0,a}|^2 + |c_{a,a}|^2$ is the norm of $\Psi$ projected into the $a$ particle subspace.
If $\tilde{\ld}_{a,j} >1$, the first ratio is bounded by $e^{-2(\ell-1)|\log\tilde{\ld}_{a,j}|}$ by (\ref{4R1}) and the second ratio is bounded by 1 by (\ref{4R2}).
If $\tilde{\ld}_{a,j}<1$, then the first ratio is bounded by $1$ by (\ref{4L1}) and the second ratio is bounded by $e^{-2(\ell-1)|\log\tilde{\ld}_{a,j}|}$ by (\ref{4L2}).  
Therefore,

\begin{align*}
\left(\frac{C(\Ld_{n+1-\ell},a)}{C(\Ld_n,a)}\right)\left(\frac{C(\Ld_{n+1}\bs\Ld_n , a)}{C(\Ld_{n+1}\bs\Ld_{n+1-\ell},a)}\right)\left(|c_{a,0}|^2 + \sum |c_{0,a}|^2 + |c_{a,a}|^2\right)\\
\leq e^{-2(\ell-1)|\log\tilde{\ld}_{a,j}|}\left(|c_{a,0}|^2 + \sum |c_{0,a}|^2 + |c_{a,a}|^2\right)
\end{align*}
The operator norm squared of the projections is bounded above by $e^{-2(\ell-1)|\log\tilde{\ld}_{a,j}|}$ when restricted the a-species only subspace.  
Similarly, the projection onto space with only particles of species $b$:  
\begin{align*}
	|c_{b,0}|^2\frac{C(\Ld_{n+1-\ell},b)}{C(\Ld_n,b)} + \frac{C(\Ld_{n+1-\ell},b)}{C(\Ld_{n+1}\bs\Ld_{n+1-\ell},b)C(\Ld_n,b)}\left|\sum c_{b,b} \tilde{\ld}_b^x\right|^2\\
	+  \left|\frac{\sum c_{0,b}\tilde{\ld}_b^x}{\sqrt{C(\Ld_{n+1}\bs\Ld_{n+1-\ell},b)}} + \frac{C(\Ld_n\bs\Ld_{n+1-\ell},b)c_{b,0}}{\sqrt{C(\Ld_{n+1}\bs\Ld_{n+1-\ell},b)C(\Ld_n,b)}}\right|^2\\
= \frac{C(\Ld_{n+1-\ell},b)}{C(\Ld_n,b)}\frac{C(\Ld_{n+1}\bs\Ld_n , b)}{C(\Ld_{n+1}\bs\Ld_{n+1-\ell},b)}\left(|c_{b,0}|^2 + \sum |c_{0,b}|^2 + |c_{b,b}|^2\right)\\
	\leq e^{-2(\ell-1)|\log\tilde{\ld}_{b,j}|}\left(|c_{b,0}|^2 + \sum |c_{0,b}|^2 + |c_{b,b}|^2\right)
\end{align*}
The operator norm squared of the projections is bounded above by $e^{-2(\ell-1)|\log\tilde{\ld}_{b,j}|}$ when restricted the b-species only subspace.

The norm squared of the projection of $G^{\Ld_{n+1}\bs\Ld_{n+1-\ell}}\Psi$ onto the space with three or more particles is 
\begin{align*}
	 \frac{C(\Ld_{n+1-\ell},ab)}{C(\Ld_{n+1}\bs\Ld_{n+1-\ell},a)C(\Ld_n, ab)}\left|\sum c_{ab,a}\tilde{\ld}_a^x\right|^2\\
	+\left|\frac{\sqrt{C(\Ld_{n+1-\ell},a)}\sum c_{a,ab}\tilde{\ld}_a^x\tilde{\ld}_b^y}{\sqrt{C(\Ld_{n+1}\bs\Ld_{n+1-\ell},ab)C(\Ld_n,a)}}+\frac{C(\Ld_n\bs\Ld_{n+1-\ell},b)\sqrt{C(\Ld_{n+1-\ell},a)}\sum c_{ab,a}\tilde{\ld}_a^x}{\sqrt{C(\Ld_{n+1}\bs\Ld_{n+1-\ell},ab)C(\Ld_n,ab)}}\right|^2\\
	+\frac{C(\Ld_{n+1-\ell},ab)}{C(\Ld_{n+1}\bs\Ld_{n+1-\ell},b)C(\Ld_n,ab)}\left|\sum c_{ab,b}\tilde{\ld}_b^y\right|^2\\
	+\left|\frac{\sqrt{C(\Ld_{n+1-\ell},b)}\sum c_{b,ab}\tilde{\ld}_a^x\tilde{\ld}_b^y}{\sqrt{C(\Ld_{n+1}\bs\Ld_{n+1-\ell},ab)C(\Ld_n,b)}}+\frac{C(\Ld_n\bs\Ld_{n+1-\ell},a)\sqrt{C(\Ld_{n+1-\ell},b)}\sum c_{ab,b}\tilde{\ld}_b^y}{\sqrt{C(\Ld_{n+1}\bs\Ld_{n+1-\ell},ab)C(\Ld_n,ab)}}\right|^2\\
	+\frac{C(\Ld_{n+1-\ell},ab)|\sum c_{ab,ab}\tilde{\ld}_a^x\tilde{\ld}_b^y|^2}{C(\Ld_n,ab)C(\Ld_{n+1}\bs\Ld_{n+1-\ell},ab)}
\end{align*}
Applying the inequality $|x+y|^2 \leq 2|x|^2+2|y|^2$ as well as Cauchy-Schwarz inequality on $|\sum c_* \tilde{\ld}_*|^2$: we calculate an upper bound
\begin{align*}
	 \frac{C(\Ld_{n+1-\ell},ab)C(\Ld_{n+1}\bs\Ld_{n},a)}{C(\Ld_{n+1}\bs\Ld_{n+1-\ell},a)C(\Ld_n, ab)}\sum |c_{ab,a}|^2\\
	+ 2\frac{C(\Ld_{n+1-\ell},a)C(\Ld_{n+1}\bs\Ld_n,ab)}{C(\Ld_{n+1}\bs\Ld_{n+1-\ell},ab)C(\Ld_n,a)}\sum |c_{a,ab}|^2\\
	+ 2\frac{C^2(\Ld_n\bs\Ld_{n+1-\ell},b)C(\Ld_{n+1-\ell},a)C(\Ld_{n+1}\bs\Ld_n,a)}{C(\Ld_{n+1}\bs\Ld_{n+1-\ell},ab)C(\Ld_n,ab)} \sum |c_{ab,a}|^2\\
	+\frac{C(\Ld_{n+1-\ell},ab)C(\Ld_{n+1}\bs\Ld_n,b)}{C(\Ld_{n+1}\bs\Ld_{n+1-\ell},b)C(\Ld_n,ab)}\sum |c_{ab,b}|^2\\
	+2 \frac{C(\Ld_{n+1-\ell},b)C(\Ld_{n+1}\bs\Ld_n,ab)}{C(\Ld_{n+1}\bs\Ld_{n+1-\ell},ab)C(\Ld_n,b)}\sum |c_{b,ab}|^2\\
	+2\frac{C^2(\Ld_n\bs\Ld_{n+1-\ell},a)C(\Ld_{n+1-\ell},b)C(\Ld_{n+1}\bs\Ld_n,b)}{C(\Ld_{n+1}\bs\Ld_{n+1-\ell},ab)C(\Ld_n,ab)}  \sum |c_{ab,b}|^2\\
	+\frac{C(\Ld_{n+1-\ell},ab)C(\Ld_{n+1}\bs\Ld_n,ab)}{C(\Ld_n,ab)C(\Ld_{n+1}\bs\Ld_{n+1-\ell},ab)}   \sum |c_{ab,ab}|^2
\end{align*}
We bound the $C(\Ld,ab)$ by $C(\Ld,a)C(\Ld,b)$ using Lemma (\ref{productbounds}) and group by norm on projected particle subspaces:
\begin{align*}
	 \leq \left[\tilde{c}\frac{C(\Ld_{n+1-\ell},a)C(\Ld_{n+1}\bs\Ld_{n},a)}{C(\Ld_{n+1}\bs\Ld_{n+1-\ell},a)C(\Ld_n, a)} + 2\tilde{c}^2\frac{C(\Ld_{n+1-\ell},a)C(\Ld_{n+1}\bs\Ld_n,a)}{C(\Ld_{n+1}\bs\Ld_{n+1-\ell},a)C(\Ld_n,a)}    \right]\sum |c_{ab,a}|^2\\
	 + \left[\tilde{c}\frac{C(\Ld_{n+1-\ell},b)C(\Ld_{n+1}\bs\Ld_n,b)}{C(\Ld_{n+1}\bs\Ld_{n+1-\ell},b)C(\Ld_n,b)} +2\tilde{c}^2\frac{C(\Ld_{n+1-\ell},b)C(\Ld_{n+1}\bs\Ld_n,b)}{C(\Ld_{n+1}\bs\Ld_{n+1-\ell},b)C(\Ld_n,b)} 	\right]\sum |c_{ab,b}|^2\\
	 + 2\tilde{c}\frac{C(\Ld_{n+1-\ell},a)C(\Ld_{n+1}\bs\Ld_n,a)}{C(\Ld_{n+1}\bs\Ld_{n+1-\ell},a)C(\Ld_n,a)}\sum |c_{a,ab}|^2\\
	 +2\tilde{c}\frac{C(\Ld_{n+1-\ell},b)C(\Ld_{n+1}\bs\Ld_n,b)}{C(\Ld_{n+1}\bs\Ld_{n+1-\ell},b)C(\Ld_n,b)}\sum |c_{b,ab}|^2\\
	+\tilde{c}^2\frac{C(\Ld_{n+1-\ell},a)C(\Ld_{n+1}\bs\Ld_n,a)}{C(\Ld_n,a)C(\Ld_{n+1}\bs\Ld_{n+1-\ell},a)} \frac{C(\Ld_{n+1-\ell},b)C(\Ld_{n+1}\bs\Ld_n,b)}{C(\Ld_n,b)C(\Ld_{n+1}\bs\Ld_{n+1-\ell},b)}   \sum |c_{ab,ab}|^2
\end{align*}
We group the ratios of normalization constants by particle type and so each set appearing in the numerator is a subset of the set in the denominator.
We factor the ratios of normalization constants by particle type (a, then b) then by their bound dependent on $\ld_s$. 
\begin{align*}
	 \leq \left[\tilde{c} \left(\frac{C(\Ld_{n+1-\ell},a)}{C(\Ld_n, a)}\right)\left(\frac{C(\Ld_{n+1}\bs\Ld_{n},a)}{C(\Ld_{n+1}\bs\Ld_{n+1-\ell},a)}\right) \right.\\
	 \left. + 2\tilde{c}^2\left(\frac{C(\Ld_{n+1-\ell},a)}{C(\Ld_n,a)}\right)\left(\frac{C(\Ld_{n+1}\bs\Ld_n,a)}{C(\Ld_{n+1}\bs\Ld_{n+1-\ell},a)}\right)\right]\sum |c_{ab,a}|^2\\
	 	 + \left[\tilde{c} \left(\frac{C(\Ld_{n+1-\ell},b)}{C(\Ld_n, b)}\right)\left(\frac{C(\Ld_{n+1}\bs\Ld_{n},b)}{C(\Ld_{n+1}\bs\Ld_{n+1-\ell},b)}\right) \right.\\
	 \left. + 2\tilde{c}^2\left(\frac{C(\Ld_{n+1-\ell},b)}{C(\Ld_n,b)}\right)\left(\frac{C(\Ld_{n+1}\bs\Ld_n,b)}{C(\Ld_{n+1}\bs\Ld_{n+1-\ell},b)}\right)\right]\sum |c_{ab,b}|^2\\
	 + 2\tilde{c}\left(\frac{C(\Ld_{n+1-\ell},a)}{C(\Ld_n,a)}\right)\left(\frac{C(\Ld_{n+1}\bs\Ld_n,a)}{C(\Ld_{n+1}\bs\Ld_{n+1-\ell},a)}\right)\sum |c_{a,ab}|^2\\
		 + 2\tilde{c}\left(\frac{C(\Ld_{n+1-\ell},b)}{C(\Ld_n,b)}\right)\left(\frac{C(\Ld_{n+1}\bs\Ld_n,b)}{C(\Ld_{n+1}\bs\Ld_{n+1-\ell},b)}\right)\sum |c_{b,ab}|^2\\
	+\tilde{c}^2\left(\frac{C(\Ld_{n+1-\ell},a)}{C(\Ld_n,a)}\right)\left(\frac{C(\Ld_{n+1}\bs\Ld_n,a)}{C(\Ld_{n+1}\bs\Ld_{n+1-\ell},a)}\right)\\
	\cdot\left(\frac{C(\Ld_{n+1-\ell},b)}{C(\Ld_n,b)}\right)\left(\frac{C(\Ld_{n+1}\bs\Ld_n,b)}{C(\Ld_{n+1}\bs\Ld_{n+1-\ell},b)}  \right) \sum |c_{ab,ab}|^2
\end{align*}
In each pair of ratios, the first term is bounded by $e^{-2(\ell-1)|\log\tilde{\ld}_{s,j}|}$ when $\ld_s>1$ (\ref{4R1}) and bounded by 1 when $\ld_s<1$ (\ref{4R2}) and the second term is bounded by 1 when $\ld_s<1$ (\ref{4L1}) and by $e^{-2(\ell-1)|\log\tilde{\ld}_{s,j}|}$ when $\ld_s<1$ (\ref{4L2}).  
Each pair of ratios of normalization constants is bounded above by the maximum of $\exp[-2(\ell-1)|\log\tilde{\ld}_s(j)|]$ for $s =a,\ b$.  
Each of the sums on the right side are projections of $\Psi$ onto subspaces of specific configurations of three or four particles.  
We take the maximum of the possible coefficients to obtain an upper bound on the norm squared on the particle subspace with three or more particles:
\begin{align} 
	3\tilde{c}^2\max\{\exp[-2(\ell-1)|\log\tilde{\ld}_{s,j}|]:s=a,b\}
\end{align}
where the terms $\sum|c_{ab,a}|^2+\sum|c_{ab,b}|^2+\sum|c_{a,ab}|^2+\sum|c_{b,ab}|^2+\sum|c_{ab,ab}|^2$ is the norm-squared of the projection of $\Psi$ onto this subspace.  

For the terms in subspace with exactly one of each particle species, the norm squared is:
\begin{align*}
	|c_{ab,0}|^2\frac{C(\Ld_{n+1-\ell},ab)}{C(\Ld_n,ab)} \\
	+ \left|\frac{\sqrt{C(\Ld_{n+1-\ell},b)}\sum c_ {b,a}\tilde{\ld}_a^x}{\sqrt{C(\Ld_{n+1}\bs\Ld_{n+1-\ell},a)C(\Ld_n,b)}} + \frac{C(\Ld_n\bs\Ld_{n+1-\ell},a)\sqrt{C(\Ld_{n+1-\ell},b)}c_{ab,0}}{\sqrt{C(\Ld_{n+1}\bs\Ld_{n+1-\ell},a)C(\Ld_n,ab)}} \right|^2\\
	+\left|\frac{\sqrt{C(\Ld_{n+1-\ell},a)}\sum c_{a,b}\tilde{\ld}_b^y}{\sqrt{C(\Ld_{n+1}\bs\Ld_{n+1-\ell},b)C(\Ld_n,a)}}+\frac{C(\Ld_n\bs\Ld_{n+1-\ell},b)\sqrt{C(\Ld_{n+1-\ell},a)}c_{ab,0}}{\sqrt{C(\Ld_{n+1}\bs\Ld_{n+1-\ell},b)C(\Ld_n,ab)}}   \right|^2 \\
+\left| \frac{\sum c_{0,ab}\tilde{\ld}_a^x\tilde{\ld}_b^y}{\sqrt{C(\Ld_{n+1}\bs\Ld_{n+1-\ell},ab)}} + \frac{C(\Ld_n\bs\Ld_{n+1-\ell},a)\sum c_{a,b}\tilde{\ld}_b^y}{\sqrt{C(\Ld_{n+1}\bs\Ld_{n+1-\ell},ab)C(\Ld_n,a)}}\right. \\
	+\left. \frac{C(\Ld_n\bs\Ld_{n+1-\ell},b)\sum c_{b,a}\tilde{\ld}_a^x}{\sqrt{C(\Ld_{n+1}\bs\Ld_{n+1-\ell},ab)C(\Ld_n,b)}} + \frac{C(\Ld_n\bs\Ld_{n+1-\ell},ab) c_{ab,0}}{\sqrt{C(\Ld_{n+1}\bs\Ld_{n+1-\ell},ab)C(\Ld_n,ab)}}\right|^2
\end{align*}
We apply the orthogonality condition (\ref{abortho}) on all terms of the form $c_{ab,0}$ to obtain
\begin{align*}
	= \frac{C(\Ld_{n+1-\ell},ab)}{C^2(\Ld_n,ab)}\left| \sum c_{0,ab}\tilde{\ld}_a^x\tilde{\ld}_b^y + \sqrt{C(\Ld_n,a)}\sum c_{a,b}\tilde{\ld}_b^y + \sqrt{C(\Ld_n,b)}\sum c_{b,a} \tilde{\ld}_a^x\right|^2\\
	+ \frac{C(\Ld_{n+1-\ell},b)}{C(\Ld_{n+1}\bs\Ld_{n+1-\ell},a)}\left|\frac{C(\Ld_n,ab) - C(\Ld_n,b)C(\Ld_n\bs\Ld_{n+1-\ell},a)}{C(\Ld_n,ab)\sqrt{C(\Ld_n,b)}}\sum c_{b,a}\tilde{\ld}_a^x \right.\\
	- \left.\frac{C(\Ld_n\bs\Ld_{n+1-\ell},a)}{C(\Ld_n,ab)}\left(\sum c_{0,ab}\tilde{\ld}_a^x\tilde{\ld}_b^y + \sqrt{C(\Ld_n,a)}\sum c_{a,b}\tilde{\ld}_b^y\right) \right|^2\\
	+ \frac{C(\Ld_{n+1-\ell},a)}{C(\Ld_{n+1}\bs\Ld_{n+1-\ell},b)}\left|\frac{C(\Ld_n,ab) - C(\Ld_n,a)C(\Ld_n\bs\Ld_{n+1-\ell},b)}{C(\Ld_n,ab)\sqrt{C(\Ld_n,a)}}\sum c_{a,b}\tilde{\ld}_b^y \right.\\
	- \left.\frac{C(\Ld_n\bs\Ld_{n+1-\ell},b)}{C(\Ld_n,ab)}\left(\sum c_{0,ab}\tilde{\ld}_a^x\tilde{\ld}_b^y + \sqrt{C(\Ld_n,b)}\sum c_{b,a}\tilde{\ld}_a^x\right) \right|^2\\
	+ \frac{1}{C(\Ld_{n+1}\bs\Ld_{n+1-\ell},ab)}\left| \frac{C(\Ld_n,ab) - C(\Ld_n\bs\Ld_{n+1-\ell},ab)}{C(\Ld_n,ab)}\sum c_{0,ab}\tilde{\ld}_a^x\tilde{\ld}_b^y\right.\\
	+ \left. \frac{C(\Ld_n\bs\Ld_{n+1-\ell},a)C(\Ld_n,ab) - C(\Ld_n\bs\Ld_{n+1-\ell},ab)C(\Ld_n,a)}{C(\Ld_n, ab)\sqrt{C(\Ld_n,a)}}\sum c_{a,b}\tilde{\ld}_b^y\right.\\
	+ \left. \frac{C(\Ld_n\bs\Ld_{n+1-\ell},b)C(\Ld_n,ab) - C(\Ld_n\bs\Ld_{n+1-\ell},ab)C(\Ld_n,b)}{C(\Ld_n, ab)\sqrt{C(\Ld_n,b)}}\sum c_{b,a}\tilde{\ld}_a^x\right|^2\\
\end{align*}
and bound using $|x+y+z|^2 \leq 3(|x|^2+|y|^2 + |z|^2)$:
\begin{align}  
	\leq 3\left|\sum c_{b,a}\tilde{\ld}_a^x\right|^2\left[ \frac{C(\Ld_{n+1-\ell},ab)C(\Ld_n,b)}{C^2(\Ld_n,ab)} \right.\nonumber\\
	\left. + \frac{C(\Ld_{n+1-\ell},b)\left|C(\Ld_n,ab)-C(\Ld_n,b)C(\Ld_n\bs\Ld_{n+1-\ell},a)\right|^2}{C(\Ld_{n+1}\bs\Ld_{n+1-\ell},a)C^2(\Ld_n,ab)C(\Ld_n,b)}\right.\nonumber\\
	\left. + \frac{C(\Ld_{n+1-\ell},a)C^2(\Ld_n\bs\Ld_{n+1-\ell},b)C(\Ld_n,b)}{C^2(\Ld_n,ab)C(\Ld_{n+1}\bs\Ld_{n+1-\ell},b)}\right.\nonumber\\
	\left. + \frac{\left|C(\Ld_n\bs\Ld_{n+1-\ell},b)C(\Ld_n,ab)-C(\Ld_n\bs\Ld_{n+1-\ell},ab)C(\Ld_n,b)\right|^2}{C(\Ld_{n+1}\bs\Ld_{n+1-\ell},ab)C^2(\Ld_n,ab)C(\Ld_n,b)}\right]\nonumber\\
	+ 3\left|\sum c_{a,b}\tilde{\ld}_b^y\right|^2\left[\frac{ C(\Ld_{n+1-\ell},ab)C(\Ld_n,a) }{C^2(\Ld_n,ab)}\right.\nonumber\\
	\left. + \frac{C(\Ld_{n+1-\ell},b)C^2(\Ld_n\bs\Ld_{n+1-\ell},a)C(\Ld_n,a)}{C(\Ld_{n+1}\bs\Ld_{n+1-\ell},a)C^2(\Ld_n,ab)}\right.\nonumber\\
	\left. + \frac{C(\Ld_{n+1-\ell},a)\left|C(\Ld_n,ab)-C(\Ld_n,a)C(\Ld_n\bs\Ld_{n+1-\ell},b)\right|^2}{C(\Ld_{n+1}\bs\Ld_{n+1-\ell},b)C^2(\Ld_n,ab)C(\Ld_n,a)}\right.\nonumber\\
	\left. + \frac{\left|C(\Ld_n\bs\Ld_{n+1-\ell},a)C(\Ld_n,ab)-C(\Ld_n\bs\Ld_{n+1-\ell},ab)C(\Ld_n,a)\right|^2}{C(\Ld_{n+1}\bs\Ld_{n+1-\ell},ab)C^2(\Ld_n,ab)C(\Ld_n,a)}\right]\nonumber\\
	+ 3\left|\sum c_{0,ab}\tilde{\ld}_a^x\tilde{\ld}_b^y\right|^2\left[\frac{C(\Ld_{n+1-\ell},ab)}{C^2(\Ld_n,ab)} \right. + \frac{C(\Ld_{n+1-\ell},b)C^2(\Ld_n\bs\Ld_{n+1-\ell},a)}{C(\Ld_{n+1}\bs\Ld_{n+1-\ell},a)C^2(\Ld_n,ab)} \nonumber\\
	+\left.  \frac{C(\Ld_{n+1-\ell},a)C^2(\Ld_n\bs\Ld_{n+1-\ell},b)}{C(\Ld_{n+1}\bs\Ld_{n+1-\ell},b)C^2(\Ld_n,ab)} + \frac{\left|C(\Ld_n,ab)-C(\Ld_n\bs\Ld_{n+1-\ell},ab)\right|^2}{C(\Ld_{n+1}\bs\Ld_{n+1-\ell},ab)C^2(\Ld_n,ab)}\right]	
\end{align}

We will bound the terms associated with $c_{b,a}$ first and obtain a bound on the $c_{a,b}$ by interchanging $a$ and $b$.
We rewrite the terms in the absolute value sign in the second line using
\begin{align*}
	 &\mathbf{C(\Ld_n,ab) - C(\Ld_n,b) C(\Ld_n\bs\Ld_{n+1-\ell},a)}\\
	&\ \ = C(\Ld_n,a)C(\Ld_n,b)-D(\Ld_n) -  C(\Ld_n\bs\Ld_{n+1-\ell},a)C(\Ld_n,b)\\
	&\ \ = \left(C(\Ld_n,a) - C(\Ld_n\bs\Ld_{n+1-\ell},a)\right) C(\Ld_n,b) -D(\Ld_n)\\
	&\ \ \mathbf{= C(\Ld_{n+1-\ell},a)C(\Ld_n,b) -D(\Ld_n)}
\end{align*}
and in fourth line using 
\begin{align*}		 
	&\mathbf{C(\Ld_n\bs\Ld_{n+1-\ell},b)C(\Ld_n, ab) - C(\Ld_n\bs\Ld_{n+1-\ell},ab) C(\Ld_n, b)} \\
	&= C(\Ld_n\bs\Ld_{n+1-\ell},b)[ C(\Ld_n, a)C(\Ld_n, b) - D(\Ld_n)] \\ 
	&\ \ - [ C(\Ld_n\bs\Ld_{n+1-\ell},a)C(\Ld_n\bs\Ld_{n+1-\ell},b) - D(\Ld_n\bs\Ld_{n+1-\ell})] C(\Ld_n, b) \\
	&= C(\Ld_n\bs\Ld_{n+1-\ell},b)C(\Ld_n,b)[ C(\Ld_n,a)-C(\Ld_n\bs\Ld_{n+1-\ell},a)]\\
	&\ \ - D(\Ld_n)C(\Ld_n\bs\Ld_{n+1-\ell},b) + D(\Ld_n\bs\Ld_{n+1-\ell})C(\Ld_n, b) \\
	&= C(\Ld_n\bs\Ld_{n+1-\ell},b)C(\Ld_n,b)C(\Ld_{n+1-\ell},a) - D(\Ld_n)C(\Ld_n\bs\Ld_{n+1-\ell},b)\\
	&\ \  + D(\Ld_n\bs\Ld_{n+1-\ell})[C(\Ld_n\bs\Ld_{n+1-\ell}, b) - C(\Ld_{n+1-\ell},b)] \\
		&= C(\Ld_n\bs\Ld_{n+1-\ell},b)C(\Ld_n,b)C(\Ld_{n+1-\ell},a) + D(\Ld_n\bs\Ld_{n+1-\ell})C(\Ld_{n+1-\ell},b) \\
	&\ \ -[D(\Ld_n)	- D(\Ld_n\bs\Ld_{n+1-\ell})]C(\Ld_n\bs\Ld_{n+1-\ell},b) \\
	& \mathbf{=C(\Ld_{n+1-\ell},a)C(\Ld_n,b)C(\Ld_n\bs\Ld_{n+1-\ell},b)}\\ 
	&\ \ \mathbf{+ D(\Ld_n\bs\Ld_{n+1-\ell})C(\Ld_{n+1-\ell},b) - D(\Ld_{n+1-\ell})C(\Ld\bs\Ld_{n+1-\ell},b)}.
\end{align*} 
We bound above using the above substitutions, expanding the squared terms using the bounds \ref{2ball} and \ref{3ball}, the product bounds on  $C(\Ld,ab)$, and the Cauchy-Schwarz inequality:
$$\left|\sum\limits_{x \in \bd\Ld_n} c_{b,a}\tilde{\ld}^{2x}_a(j)\right|^2 \leq \sum\limits_{x \in \bd\Ld_n}|c_{b,a}|^2 C(\Ld_{n+1}\bs\Ld_n,a)$$  
We have the upper bound
\begin{align*}
	\leq 3\sum |c_{b,a}|^2\left[ \tilde{c}^2\frac{C(\Ld_{n+1-\ell},a)C(\Ld_{n+1}\bs\Ld_n,a)C(\Ld_{n+1-\ell},b)C(\Ld_n,b)}{C^2(\Ld_n,a)C^2(\Ld_n,b)} \right.\\
	\left. + 2\tilde{c}^2\frac{C(\Ld_{n+1}\bs\Ld_n,a)C(\Ld_{n+1-\ell},b)C^2(\Ld_{n+1-\ell},a)C^2(\Ld_n,b)}{C^2(\Ld_n,a)C(\Ld_{n+1}\bs\Ld_{n+1-\ell},a)C^3(\Ld_n,b)}\right.\\
	\left. + 2\tilde{c}^2\frac{C(\Ld_{n+1}\bs\Ld_n,a)C(\Ld_{n+1-\ell},b)D^2(\Ld_n)}{C^2(\Ld_n,a)C(\Ld_{n+1}\bs\Ld_{n+1-\ell},a)C^3(\Ld_n,b)}\right.\\
	\left. + \tilde{c}^2\frac{C(\Ld_{n+1}\bs\Ld_n,a)C(\Ld_{n+1-\ell},a)C^2(\Ld_n\bs\Ld_{n+1-\ell},b)C(\Ld_n,b)}{C^2(\Ld_n,a)C^2(\Ld_n,b)C(\Ld_{n+1}\bs\Ld_{n+1-\ell},b)}\right.\\
	\left. + 3\tilde{c}^3\frac{C(\Ld_{n+1}\bs\Ld_n,a)C^2(\Ld_{n+1-\ell},a)C^2(\Ld_n,b)C^2(\Ld_n\bs\Ld_{n+1-\ell},b)}{C^2(\Ld_n,a)C(\Ld_{n+1}\bs\Ld_{n+1-\ell},a)C^3(\Ld_n,b)C(\Ld_{n+1}\bs\Ld_{n+1-\ell},b)}\right.\\
	\left. + 3\tilde{c}^3\frac{C(\Ld_{n+1}\bs\Ld_n,a)D^2(\Ld_n\bs\Ld_{n+1-\ell})C^2(\Ld_{n+1-\ell},b)}{C^2(\Ld_n,a)C(\Ld_{n+1}\bs\Ld_{n+1-\ell},a)C^3(\Ld_n,b)C(\Ld_{n+1}\bs\Ld_{n+1-\ell},b)} \right. \\
	\left. + 3\tilde{c}^3\frac{C(\Ld_{n+1}\bs\Ld_n,a) D^2(\Ld_{n+1-\ell})C^2(\Ld\bs\Ld_{n+1-\ell},b)}{C^2(\Ld_n,a)C(\Ld_{n+1}\bs\Ld_{n+1-\ell},a)C^3(\Ld_n,b)C(\Ld_{n+1}\bs\Ld_{n+1-\ell},b)}\right]	
\end{align*}
We group ratios, cancel out terms, and bound ratios of terms not needed for later bound using $\Ld' \subseteq \Ld \Rightarrow C(\Ld',s)/C(\Ld,s) \leq 1$.  The terms without $D(\Ld)$ (first, second, fourth, and fifth terms) can be bounded by a ratio of $a$ normalization constants:
\begin{align*}
\frac{C(\Ld_{n+1-\ell},a)}{C(\Ld_n,a)}\frac{C(\Ld_{n+1}\bs\Ld_n,a)}{C(\Ld_n,a)}.
\end{align*}
which we will group together (and multiply by $\tilde{c}\geq 1$ to have the same power of $\tilde{c}$).  
The third term will be left alone.  
The sixth and seventh terms require more care.  
We will use the bound $\Ld \supseteq \Ld'\ \Rightarrow\ C(\Ld,s)\geq C(\Ld',s)$ in the denominator of the sixth term.
For the seventh term, we will use the bound $D(\Ld_{n+1-\ell}) \leq C(\Ld_{n+1-\ell},a)C(\Ld_{n+1-\ell},b)$.  
These steps lead to the upper bound
\begin{align*}
	\leq 3\sum |c_{b,a}|^2\left[ 7 \tilde{c}^3\frac{C(\Ld_{n+1-\ell},a)}{C(\Ld_n,a)}\frac{C(\Ld_{n+1}\bs\Ld_n,a)}{C(\Ld_n,a)} \right.\\
	 + 2\tilde{c}^2\frac{C(\Ld_{n+1}\bs\Ld_n,a)}{C(\Ld_{n+1}\bs\Ld_{n+1-\ell},a)}\frac{C(\Ld_{n+1-\ell},b)}{C(\Ld_n,b)}\frac{D^2(\Ld_n)}{C^2(\Ld_n,a)C^2(\Ld_n,b)}\\
	 + 3\tilde{c}^3\frac{C(\Ld_{n+1}\bs\Ld_n,a)}{C(\Ld_{n+1}\bs\Ld_{n+1-\ell},a)}\frac{C^2(\Ld_{n+1-\ell},b)}{C^2(\Ld_n,b)}\frac{D^2(\Ld_n\bs\Ld_{n+1-\ell})}{C^2(\Ld_n\bs\Ld_{n+1-\ell},a)C^2(\Ld_n\bs\Ld_{n+1-\ell},b)} \\
	 \left. + 3\tilde{c}^3\frac{C(\Ld_{n+1}\bs\Ld_n,a)}{C(\Ld_{n+1}\bs\Ld_{n+1-\ell},a)}\frac{C^2(\Ld_{n+1-\ell},a)}{C^2(\Ld_n,a)}\frac{C^2(\Ld_{n+1-\ell},b)}{C^2(\Ld_n,b)}\right].
\end{align*}

We use the ratio lemmas to show the above bound is nearly exponentially small in $\ell$.  
The ratios of normalization constants in the first, second, and fourth terms are bounded above by
\begin{align}
\frac{C(\Ld_{n+1-\ell},a)}{C(\Ld_n,a)}\frac{C(\Ld_{n+1}\bs\Ld_n,a)}{C(\Ld_n,a)} \leq \exp\left(-2(\ell-2)|\log\tilde{\ld}_a|\right) \label{secondbound}
\end{align}
which follows from inspection by cases.  
If $\tilde{\ld}_a >1$ the first ratio is bounded above by $\exp\left(-2(\ell-1)|\log\tilde{\ld}_a|\right)$ \ref{4R1} and the second by $\exp(2|\log\tilde{\ld}_a|)$ \ref{4R3}.
If $\tilde{\ld}_a <1$ the first ratio is bounded by one \ref{4L1} and second by $\exp\left(-2(n+1)|\log\tilde{\ld}_a|\right)$ \ref{4L3} and the bound follows because $n\geq \ell$ .

The normalization constants in the third term are bounded above by 
\begin{align*}
	\frac{C(\Ld_{n+1}\bs\Ld_n,a)}{C(\Ld_{n+1}\bs\Ld_{n+1-\ell},a)}\frac{C(\Ld_{n+1-\ell},b)}{C(\Ld_n,b)} \frac{D^2(\Ld_n)}{C^2(\Ld_n,a)C^2(\Ld_n,b)}\\ \leq (\ell-1)\exp\left(-2(\ell-2)|\log\tilde{\ld}_a|\right) + \exp\left(-2(\ell-1)|\log\tilde{\ld}_b|\right)\\
	\leq 2\ell\max\left\{ \left(-2(\ell-1)|\log\tilde{\ld}_a|\right), \left(-2(\ell-1)|\log\tilde{\ld}_b|\right)\right\} 
\end{align*}
which follows from cases.  
Each of these ratios are bounded above by 1.  
If $\tilde{\ld}_a <1$, the first ratio is bounded above by $\exp\left(-2(\ell-1)|\log\tilde{\ld}_a|\right)$ \ref{4L2}.  
If $\tilde{\ld}_b >1$, the second ratio is bounded above by $\exp\left(-2(\ell-1)|\log\tilde{\ld}_b|\right)$ \ref{4R1}.  
If $\tilde{\ld}_a>1$ and $\tilde{\ld}_b<1$, then the third ratio is bounded above by $n\left(\exp\left(-2(n-1)|\log\tilde{\ld}_a|\right)+ \exp\left(-2(n-1)|\log\tilde{\ld}_b|\right)\right)$ from diagonal bound \ref{diagonalbound} which gives the appropriate bound since $n\geq \ell$ and $\ell$ is larger than $2+(|\log\tilde{\ld}_a|)^{-1}$ which is where the maximum is achieved in the expression above.

The ratios in the sixth and seventh terms are grouped as  
\begin{align*}
\frac{C(\Ld_{n+1}\bs\Ld_n,a)}{C(\Ld_{n+1}\bs\Ld_{n+1-\ell},a)}\frac{D^2(\Ld_n\bs\Ld_{n+1-\ell})}{C^2(\Ld_n,a)C(\Ld_n,b)C(\Ld_{n+1}\bs\Ld_{n+1-\ell},b)}\frac{C^2(\Ld_{n+1-\ell},b)}{C^2(\Ld_n,b)}\\
	+\frac{C(\Ld_{n+1}\bs\Ld_n,a)}{C(\Ld_{n+1}\bs\Ld_{n+1-\ell},a)}\frac{ D^2(\Ld_{n+1-\ell})}{C^2(\Ld_n,a)C^2(\Ld_n,b)}
\end{align*}
where ratios that are not needed for the exponential bound are bounded above by 1 and are dropped.  
Note that the terms in the brackets are bounded above by 1.  
  
If $\tilde{\ld}_a <1$, then the first ratio is both terms less that $\exp\left(-2(\ell-1)|\log\tilde{\ld}_a|\right)$ \ref{4R1} and the sum of other terms is bounded above by 2.
We have the bound 
$$2\exp\left(-2(\ell-1)|\log\tilde{\ld}_a|\right).$$
Suppose $\tilde{\ld}_a>1$.  
The first ratio in each term is bounded by 1 \ref{4L1} and the exponential bound is derived for terms inside the brackets.
For the rest of the first term, 
$$\frac{D^2(\Ld_n\bs\Ld_{n+1-\ell})}{C^2(\Ld_n,a)C(\Ld_n,b)C(\Ld_{n+1}\bs\Ld_{n+1-\ell},b)}\frac{C^2(\Ld_{n+1-\ell},b)}{C^2(\Ld_n,b)},$$
if $\tilde{\ld}_b>1$, the first ratio is bounded above by 1 and the second by $\exp\left(-2(\ell-1)|\log\tilde{\ld}_b|\right)$ \ref{4R1}.  If $\tilde{\ld}_b <1$, the second ratio is bounded above by 1 and the first ratio is bounded by
\begin{align*}
\frac{D^2(\Ld_n\bs\Ld_{n+1-\ell})}{C^2(\Ld_n,a)C(\Ld_n,b)C(\Ld_{n+1}\bs\Ld_{n+1-\ell},b)} \leq \frac{D(\Ld_n\bs\Ld_{n+1-\ell})}{C(\Ld_n\bs\Ld_{n+1-\ell},a)C(\Ld_n\bs\Ld_{n+1-\ell},b)}\\
\leq \ell\exp\left(-2(\ell-1)|\log\tilde{\ld}_a|\right) + \exp\left(-2(\ell-1)|\log\tilde{\ld}_b|\right)\\
\leq 2\ell \max\left\{ \left(-2(\ell-1)|\log\tilde{\ld}_a|\right), \left(-2(\ell-1)|\log\tilde{\ld}_b|\right)\right\} 
\end{align*}
For the second term , we use the upper bound $D(\Ld) \leq C(\Ld,a)C(\Ld,b)$ and $\tilde{\ld}_a >1$, to obtain the upper bound
\begin{align*}
	\frac{ D^2(\Ld_{n+1-\ell})}{C^2(\Ld_n,a)C^2(\Ld_n,b)} \leq \frac{ C^2(\Ld_{n+1-\ell},a) C^2(\Ld_{n+1-\ell},b)}{C^2(\Ld_n,a)C^2(\Ld_n,b)}\\
	\leq \exp\left(-2(\ell-1)|\log\tilde{\ld}_a|\right)
\end{align*}
where the $b$ terms are bounded above by one.
Therefore, the bound on the sixth and seventh terms is 
$$3\ell \max\left\{ \left(-2(\ell-1)|\log\tilde{\ld}_a|\right), \left(-2(\ell-1)|\log\tilde{\ld}_b|\right)\right\}$$

The $c_{b,a}$ terms are bounded above by
\begin{align}\label{abbound}
	60\ell\tilde{c}^3\max\left\{ \left(-2(\ell-2)|\log\tilde{\ld}_a|\right), \left(-2(\ell-2)|\log\tilde{\ld}_b|\right)\right\} \sum |c_{b,a}|^2
\end{align}
and the operator norm squared restricted to the $b,a$ subspace is bounded above by
$$60\ell\tilde{c}^3\max\left\{ \left(-2(\ell-1)|\log\tilde{\ld}_a|\right), \left(-2(\ell-1)|\log\tilde{\ld}_b|\right)\right\}$$
By interchanging $a$ and $b$, we have a similar bound on $c_{a,b}$ terms.

%*******************************************Start here next
Finally, we bound the $c_{0,ab}$ terms:
\begin{align*}
	3\left|\sum c_{0,ab}\tilde{\ld}_a^x\tilde{\ld}_b^y\right|^2\left[\frac{C(\Ld_{n+1-\ell},ab)}{C^2(\Ld_n,ab)} \right. + \frac{C(\Ld_{n+1-\ell},b)C^2(\Ld_n\bs\Ld_{n+1-\ell},a)}{C(\Ld_{n+1}\bs\Ld_{n+1-\ell},a)C^2(\Ld_n,ab)} \nonumber\\
	+\left.  \frac{C(\Ld_{n+1-\ell},a)C^2(\Ld_n\bs\Ld_{n+1-\ell},b)}{C(\Ld_{n+1}\bs\Ld_{n+1-\ell},b)C^2(\Ld_n,ab)} + \frac{\left|C(\Ld_n,ab)-C(\Ld_n\bs\Ld_{n+1-\ell},ab)\right|^2}{C(\Ld_{n+1}\bs\Ld_{n+1-\ell},ab)C^2(\Ld_n,ab)}\right]\\
	 \leq 3\sum|c_{0,ab}|^2\left[ \tilde{c}^2\frac{C(\Ld_{n+1}\bs\Ld_n,a)C(\Ld_{n+1-\ell},a)}{C^2(\Ld_n,a)}\frac{C(\Ld_{n+1}\bs\Ld_n,b)C(\Ld_{n+1-\ell},b)}{C^2(\Ld_n,b)} \right. \\
	+\tilde{c}^2\frac{C(\Ld_{n+1}\bs\Ld_n,a)C^2(\Ld_n\bs\Ld_{n+1-\ell},a)}{C^2(\Ld_n,a)C(\Ld_{n+1}\bs\Ld_{n+1-\ell},a)}\frac{C(\Ld_{n+1}\bs\Ld_n,b)C(\Ld_{n+1-\ell},b)}{C^2(\Ld_n,b)}\\
		+\tilde{c}^2\frac{C(\Ld_{n+1}\bs\Ld_n,a)C(\Ld_{n+1-\ell},a)}{C^2(\Ld_n,a)}\frac{C(\Ld_{n+1}\bs\Ld_n,b)C^2(\Ld_n\bs\Ld_{n+1-\ell},b)}{C^2(\Ld_n,b)C(\Ld_{n+1}\bs\Ld_{n+1-\ell},b)}\\
		+3\tilde{c}^3\frac{C(\Ld_{n+1}\bs\Ld_n,a)C^2(\Ld_{n+1-\ell},a)}{C(\Ld_{n+1}\bs\Ld_{n+1-\ell},a)C^2(\Ld_n,a)} \frac{C(\Ld_{n+1}\bs\Ld_n,b)C^2(\Ld_{n+1-\ell},b)}{C(\Ld_{n+1}\bs\Ld_{n+1-\ell},b)C^2(\Ld_n,b)}\\
		+3\tilde{c}^3\frac{C(\Ld_{n+1}\bs\Ld_n,a)C^2(\Ld_n\bs\Ld_{n+1-\ell},a)}{C(\Ld_{n+1}\bs\Ld_{n+1-\ell},a)C^2(\Ld_n,a)}\frac{C(\Ld_{n+1}\bs\Ld_n,b)C^2(\Ld_{n+1-\ell},b)}{C(\Ld_{n+1}\bs\Ld_{n+1-\ell},b)C^2(\Ld_n,b)}\\	
		\left. +3\tilde{c}^3\frac{C(\Ld_{n+1}\bs\Ld_n,a)C^2(\Ld_{n+1-\ell},a)}{C(\Ld_{n+1}\bs\Ld_{n+1-\ell},a)C^2(\Ld_n,a)}\frac{C(\Ld_{n+1}\bs\Ld_n,b)C^2(\Ld_n\bs\Ld_{n+1-\ell},b)}{C(\Ld_{n+1}\bs\Ld_{n+1-\ell},b)C^2(\Ld_n,b)}\right]
\end{align*}
The first, third, fourth, and sixth terms can be bounded exponentially by the $a$ ratio
\begin{align*}
	\frac{C(\Ld_{n+1}\bs\Ld_n,a)C(\Ld_{n+1-\ell},a)}{C^2(\Ld_n,a)} \leq \exp\left(-2(\ell-2)|\log\tilde{\ld}_a|\right)
\end{align*}
by \ref{secondbound}.  
The first, second, fourth and fifth terms can be bounded exponentially by the $b$ ratios of normalization constants:
\begin{align*}
	\frac{C(\Ld_{n+1}\bs\Ld_n,b)C(\Ld_{n+1-\ell},b)}{C^2(\Ld_n,b)} \leq \exp\left(-2(\ell-2)|\log\tilde{\ld}_b|\right)
\end{align*}
The sum of all these terms are bounded above by
\begin{align}\label{0abbound}
36\tilde{c}^3\max\left\{ \left(-2(\ell-2)|\log\tilde{\ld}_a|\right), \left(-2(\ell-2)|\log\tilde{\ld}_b|\right)\right\} \sum |c_{0,ab}|^2
\end{align}
To combine the bounds, we take the maximum of all the operator norm-squared projection bounds on each subspace which is 
$$60\ell\tilde{c}^3\max\left\{ \left(-2(\ell-2)|\log\tilde{\ld}_a|\right), \left(-2(\ell-2)|\log\tilde{\ld}_b|\right)\right\}$$
Therefore,
\begin{align}
	\|G_nE_n\| \leq \sqrt{(60\ell)}\tilde{c}^{3/2} \max\left\{ \left(-(\ell-2)|\log\tilde{\ld}_{a,j}|\right), \left(-(\ell-2)|\log\tilde{\ld}_{b,j}|\right)\right\}
\end{align}
when $\ell \geq \max\{(\log\tilde{\ld}_{a,j})^{-1}, (\log\tilde{\ld}_{b,j})^{-1}\}$.

%%\begin{acknowledgements}
%The author would like to thank B. Nachtergaele for productive conversations and useful remarks.
%\end{acknowledgements}

% BibTeX users please use one of
%\bibliographystyle{spbasic}      % basic style, author-year citations
%\bibliographystyle{spmpsci}      % mathematics and physical sciences
%\bibliographystyle{spphys}       % APS-like style for physics
\bibliographystyle{plain}
\bibliography{mybib}   % name your BibTeX data base

\end{document}